\numberwithin{equation}{section}
\theoremstyle{plain}
\newtheorem{thm}{\protect\theoremname}[section]
  \theoremstyle{plain}
  \newtheorem{lem}[thm]{\protect\lemmaname}
  \theoremstyle{plain}
  \newtheorem{cor}[thm]{\protect\corollaryname}
  \theoremstyle{remark}
  \newtheorem{rem}[thm]{\protect\remarkname}
\DeclareSymbolFont{extraup}{U}{zavm}{m}{n}
\DeclareMathSymbol{\varheart}{\mathalpha}{extraup}{86}
\DeclareMathSymbol{\vardiamond}{\mathalpha}{extraup}{87} 
\renewcommand{\textendash}{--}
  \providecommand{\corollaryname}{Corollary}
  \providecommand{\lemmaname}{Lemma}
  \providecommand{\remarkname}{Remark}
\providecommand{\theoremname}{Theorem}
\begin{document}
\global\long\def\C{\mathbb{C}}
\global\long\def\Cd{\C^{\delta}}
\global\long\def\Cprim{\C^{\delta,\circ}}
\global\long\def\Cdual{\C^{\delta,\bullet}}
\global\long\def\Od{\Omega^{\delta}}
\global\long\def\Oprim{\Omega^{\delta,\circ}}
\global\long\def\Odual{\Omega^{\delta,\bullet}}

\global\long\def\P{\mathsf{\mathbb{P}}}
 \global\long\def\E{\mathsf{\mathbb{E}}}
 \global\long\def\sF{\mathcal{F}}
 \global\long\def\ind{\mathbb{I}}

\global\long\def\R{\mathbb{R}}
 \global\long\def\Z{\mathbb{Z}}
 \global\long\def\N{\mathbb{N}}
 \global\long\def\Q{\mathbb{Q}}

\global\long\def\C{\mathbb{C}}
 \global\long\def\Rsphere{\overline{\C}}
 \global\long\def\re{\Re\mathfrak{e}\,}
 \global\long\def\im{\Im\mathfrak{m}\,}
 \global\long\def\arg{\mathrm{arg}}
 \global\long\def\i{\mathfrak{i}}
\global\long\def\eps{\varepsilon}
\global\long\def\lamb{\lambda}
\global\long\def\lambb{\bar{\lambda}}

\global\long\def\D{\mathbb{D}}
 \global\long\def\H{\mathbb{H}}
\global\long\def\F{\mathcal{F}}
\global\long\def\outside{\mathcal{\text{out}}}
\global\long\def\winding{\mathrm{wind}}
\global\long\def\sing{\mathcal{S}}

\global\long\def\dist{\mathrm{dist}}
 \global\long\def\reg{\mathrm{reg}}
 \global\long\def\pa{\partial}
 \global\long\def\Leb{\mathrm{Leb}}

\global\long\def\half{\frac{1}{2}}
 \global\long\def\sgn{\mathrm{sgn}}
\global\long\def\Conf{\mathrm{Conf}}
\global\long\def\ZLT{\mathrm{Z_{LT}}}
\global\long\def\Wind{\mathrm{winding}}
\global\long\def\Edges{\mathrm{\mathcal{E}}}
\global\long\def\Dual#1{#1^{\bullet}}

\global\long\def\bdry{\partial}
 \global\long\def\cl#1{\overline{#1}}
\global\long\def\ZFK{Z_{\text{FK}}}
\global\long\def\Clusters{\mathcal{C}}
\global\long\def\Odp{\hat{\Omega}^{\delta}}
\global\long\def\Edges{\mathcal{E}}
\global\long\def\Vertices{\mathcal{V}}
\global\long\def\bcond{\mathcal{\beta}}
\global\long\def\free{\mathcal{\mathrm{free}}}
\global\long\def\wired{\mathrm{wired}}
\global\long\def\Medial{\mathcal{M}}
\global\long\def\diam{\mathrm{diam}\,}
\global\long\def\Chol{C_{\mathrm{H\ddot{o}l}}}

\global\long\def\Edgesprim{\hat{\Edges}}
\global\long\def\Conf{\mathrm{Conf}}
\global\long\def\Pp{P}
\global\long\def\Tr{\mathrm{Tr\,}}
\global\long\def\pp{p}
\global\long\def\dimphi{k}
\global\long\def\SingularSet{\Omega^{\dagger}}
\global\long\def\HomType{\mathrm{\mathbf{a}}}
\global\long\def\ddt{\frac{dt}{t}}
\global\long\def\supp{\mathrm{\mathrm{supp}\,}}
\global\long\def\osc#1{\mathrm{\mathrm{osc}}_{#1}}
\global\long\def\Det{\mathrm{det}^{\star}}

\global\long\def\HK#1#2{P^{#1}\left(#2\right)}
\global\long\def\ddt{\frac{dt}{t}}
\global\long\def\Od{\Omega^{\delta}}
 \global\long\def\loc{\mathrm{loc}}
\global\long\def\Cone#1{\mathcal{C}^{#1}}
\global\long\def\H{\mathbb{H}}
\global\long\def\C{\mathbb{C}}
\global\long\def\OO#1{O\left(#1\right)}
\global\long\def\ConstC{C}
\global\long\def\ConstD{D}
\global\long\def\HalfCorn{Y}
\global\long\def\Left{\text{left}}
\global\long\def\Right{\text{right}}
\global\long\def\BulkCorn{B}
\global\long\def\AnyOne{\star}
\global\long\def\AnyTwo{\Diamond}
\global\long\def\Crnrs{\text{Corners}}
\global\long\def\HKLog#1#2#3{\tilde{P}(#1,#2,#3)}
\global\long\def\Dir{\mathcal{D}}
\global\long\def\Neu{\mathcal{N}}
\global\long\def\CornerD{\Upsilon_{\Dir}}
\global\long\def\CornerN{\Upsilon_{\Neu}}
\global\long\def\CornerDN{\Upsilon_{\Dir\Neu}}
\global\long\def\Corner{\Upsilon}
\global\long\def\Id{\mathrm{Id}}
\global\long\def\Puncture{\mathcal{P}}
\global\long\def\EulerGamma{\gamma_{\mathrm{Euler}}}
\global\long\def\En{\mathcal{E}}
\global\long\def\T{\mathbb{T}}
\global\long\def\rwZ{\gamma^{\mathbb{Z}}}

\global\long\def\I{I_{0}}
 \global\long\def\P{\mathbb{P}}

\global\long\def\LatticeA{+}
 \global\long\def\LatticeB{\square}
 \global\long\def\LatticeC{\Diamond}
 \global\long\def\LatticeE{\triangle}
\global\long\def\LatticeE{\triangle}
\global\long\def\LatticeD{\triangleright}

\fancyhead[R]{}

\fancyhead[L]{}

\thispagestyle{fancy}

\fancyhead[C]{Determinants of discrete Laplacians on triangulations and quadrangulations}

\email{konstantin.izyurov@helsinki.fi, mikhail.khristoforov@helsinki.fi}

\title{Asymptotics of the determinant of discrete Laplacians on triangulated
and quadrangulated surfaces}

\author{K. Izyurov and M. Khristoforov}

\date{May 27th, 2022}
\begin{abstract}
Consider a surface $\Omega$ with a boundary obtained by gluing together
a finite number of equilateral triangles, or squares, along their
boundaries, equipped with a vector bundle with a flat unitary connection.
Let $\Omega^{\delta}$ be a discretization of this surface, in which
each triangle or square is discretized by a bi-periodic lattice of
mesh size $\delta$, possessing enough symmetries so that these discretizations
can be glued together seamlessly. We show that the logarithm of the
product of non-zero eigenvalues of the discrete Laplacian acting on
the sections of the bundle is asymptotic to 
\[
A|\Omega^{\delta}|+B|\partial\Omega^{\delta}|+C\log\delta+D+o(1).
\]
Here $A$ and $B$ are constants that depend only on the lattice,
$C$ is an explicit constant depending on the bundle, the angles at
conical singularities and at corners of the boundary, and $D$ is
a sum of lattice-dependent contributions from singularities and a
universal term that can be interpreted as a zeta-regularization of
the determinant of the continuum Laplacian acting on the sections
of the bundle. We allow for Dirichlet or Neumann boundary conditions,
or mixtures thereof. Our proof is based on an integral formula for
the determinant in terms of theta function, and the functional Central
limit theorem.
\end{abstract}

\maketitle

\section{Introduction}

Let $\Omega$ be a connected surface, possibly with boundary, obtained
by gluing finitely many equal equilateral triangles, or squares, along
their boundaries. Thus, $\Omega$ may have conical singularities and
piece-wise straight boundary with corners; the cone and wedge angles
either all belong to $\frac{\pi}{3}k,$ or to $\frac{\pi}{2}k,$ $k\in\N$;
we refer to the former situation as a triangulation and the latter
one as a quadrangulation. The boundary of $\Omega$ will be decomposed
into two parts, $\partial_{\Dir}\Omega$ and $\partial_{\Neu}\Omega,$
that will carry Dirichlet and Neumann boundary conditions respectively;
we assume that there are finitely many points separating the two.
We assume that $\Omega$ is equipped with a finite rank vector bundle
with a unitary flat connection $\varphi$. We furthermore allow $\Omega$
to have finitely many punctures, distinct from the set of conical
singularities, and allow $\varphi$ to have monodromy around those
punctures. Our main results and techniques are new already in the
case of a trivial line bundle, and the reader interested in the simplest
situation may think of this case.

We summarize our setup briefly here, referring to Section \ref{sec: notation}
for the detailed definitions. By a \emph{lattice}, we mean an (infinite)
undirected planar graph with non-negative weights on edges embedded
bi-periodically in the plane $\C$. Given a lattice which is symmetric
under reflections $z\mapsto\overline{z}$ and under rotation by $\frac{\pi}{2}$
or $\frac{\pi}{3}$ (such as e.g., the square lattice in the case
of quadrangulations and the triangular lattice in the case of triangulations),
we can discretize $\Omega$ by this lattice scaled to have small mesh
$\delta$. We denote the discretized surface by $\Od$, see Figure
\ref{fig: surface}. In what follows, by a \emph{lattice-dependent}
constant, we mean a quantity that does not depend on $\delta$, but
may depend on the underlying lattice and weights (and, in the case
of $D_{p},$ on local geometric data); a \emph{lattice-independent},
or \emph{universal} quantity, is allowed to depend only on $\Omega,$
the boundary conditions, and the connection $\varphi.$ The parallel
transport of $\varphi$ along the edges gives a discrete connection,
and we can consider the corresponding discrete Laplacian $\Delta^{\Od,\varphi}$,
with Dirichlet/Neumann boundary conditions approximating those in
$\Omega$. We will denote the rank of $\varphi$ by $d$. The subject
of the present paper is the asymptotics of $\Det\Delta^{\Od,\varphi}$
as $\delta\to0$, where $\Det$ stands for the product of all non-zero
eigenvalues. 

We make the following conventions regarding the parameters, see Section
\ref{sec: notation} for details. The mesh size $\delta$ is defined
so that the lattice has $\delta^{-2}$ vertices per unit area. Thus,
for a given lattice, $\delta=\frac{\delta_{0}}{N}$ for a constant
$\delta_{0}$ and an integer $N.$ We denote by $|\Od|$ the size
of $\Od,$ measured in\emph{ }the number of fundamental ``plaquettes''
(triangles or squares) in their discretization; thus as $\delta\to0,$
$|\Od|$ is of order $\delta^{-2}.$ Similarly, $|\partial_{\Dir}\Od|,|\partial_{\Neu}\Od|$
denote the size of Dirichlet and Neumann parts of the boundary, measured
in the number of plaquette sides, thus $|\partial_{\Dir}\Od|,|\partial_{\Neu}\Od|$
have order $\delta^{-1}$. We will also assume that the weights on
the graph are normalized so that the random walk on $\Od$ converges
to the standard Brownian motion. 

Our main result is as follows:
\begin{thm}
\label{thm: main} As $\delta\to0$, one has the following asymptotics:

\begin{equation}
\log\Det\Delta^{\Od,\varphi}=A\cdot|\Od|+B_{\Dir}\cdot|\partial_{\Dir}\Od|+B_{\Neu}\cdot|\partial_{\Neu}\Od|+C\cdot\log\delta+D+o(1),\label{eq: main}
\end{equation}
where:
\begin{itemize}
\item $A$ and \textup{$B_{\Dir}=-B_{\Neu}$} are lattice-dependent constants
that are expressed in terms of continuous time lattice heat kernels,
see (\ref{eq: volume}), (\ref{eq: B}).
\item $C$ is a lattice-independent constant given by
\[
C=-2\dim\ker\Delta^{\Omega,\varphi}-d\cdot\sum_{p\in\Cone{}\cup\Corner\cup\Puncture}C_{p},
\]
where $\Delta^{\Omega,\varphi}$ stands for the Friedrichs extension
the Laplacian on $\Omega$ acting on sections of $\varphi$, see Section
\ref{sec: heat_kernel_continuum}; the sum is over the set of conical
singularities, corners, and punctures of $\Omega$, and the values
$C_{p}$ for a cone $\Cone{\alpha}$ of angle $\alpha$, corners $\Corner_{\Dir}^{\alpha},\Corner_{\Neu}^{\alpha},\Corner_{\Neu\Dir}$
of angle $\alpha$ with Dirichlet, Neumann, changing Neumann-to-Dirichlet
boundary conditions, and a puncture $\mathcal{P}^{M}$ with monodromy
operator $M$ are given by 
\begin{align*}
C_{\Cone{^{\alpha}}}\quad & =\frac{\alpha}{12\pi}-\frac{\pi}{3\alpha}; & C_{\Corner_{\Dir}^{\alpha}} & =C_{\Corner_{\Neu}^{\alpha}}=\frac{\alpha}{12\pi}-\frac{\pi}{12\alpha};\\
C_{\Corner_{\Neu\Dir}^{\alpha}}\quad & =\frac{\alpha}{12\pi}+\frac{\pi}{24\alpha}; & C_{\Puncture^{M}} & =\pi^{-2}\sum_{k=1}^{\infty}(1-d^{-1}\re\Tr M^{k})k^{-2}
\end{align*}
\item the constant $D$ has the form 
\[
D=d\cdot\sum_{p\in\Cone{}\cup\Corner\cup\Puncture}D_{p}+\log\Det_{\zeta}\Delta^{\Omega,\varphi},
\]
where $D_{p}$ are lattice-dependent constants (entirely determined
by the lattice and angle, boundary conditions, or the monodromy at
a puncture $p$); and $\Det_{\zeta}\Delta^{\Omega,\varphi}$ is the
zeta-regularized determinant of $\Delta^{\Omega,\varphi}$, see Section
\ref{sec: heat_kernel_continuum}.
\end{itemize}
\end{thm}

\begin{figure}[b]
\includegraphics[width=0.6\textwidth]{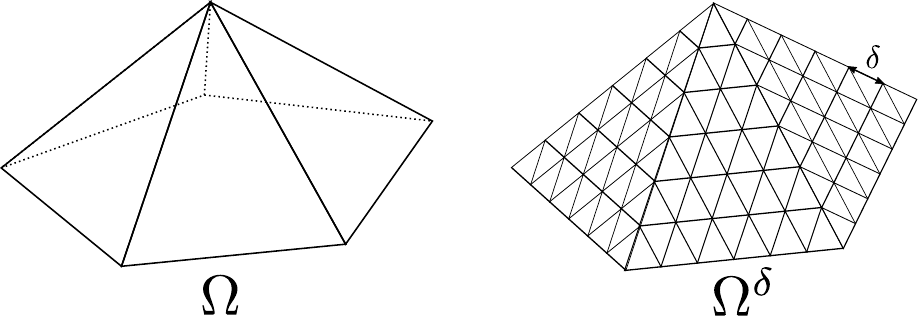}\caption{\label{fig: surface}An example of a surface $\Omega,$ glued of five
equilateral triangles, and its discretization $\protect\Od$ by triangular
lattice. In this case, $\Omega$ has one conical singularity of angle
$\frac{5\pi}{3}$ and five boundary corners of angle $\frac{2\pi}{3}.$
Note that the discrete triangles are glued so that the local graph
structure at the edges is no different from that in the bulk of the
triangles.}

\end{figure}

The constants $A,B_{\Dir},B_{\Neu}$ depend linearly on the rank $d$
of the bundle, i. e., when divided by $d,$ they only depend on the
underlying lattice and the weights.

There is a lot of previous work on the subject. Discretizations of
Laplacian were studied by Dodziuk \cite{Dodziuk} on a class of discretizations
Riemannian manifolds in arbitrary dimension. He established \cite[Section 5]{Dodziuk}
the convergence of the eigenvalues and the spectral zeta function
of a combinatorial Laplacian with weights inherited from the Riemannian
metric on the underlying manifold. Cardy and Peschel \cite{Cardy_Peschel}
conjectured that the asymptotic of partition function of any critical
2D model of a Riemann surface should take a form similar to (\ref{eq: main}).
Since the determinants of the discrete Laplacian and its vector bundle
versions are partition functions of a number of lattice models, such
as dimers and double dimers, discrete GFF, spanning trees and cycle-rooted
spanning forests \cite{Kenyon,kenyon2011spanning,kenyon2014_double_dimers,dubedat2018double,basok2018tau,kassel2017random,kassel_Levy,lupu2019topological,jan2008markov},
our results can be viewed as a rigorous proof of a particular case
of the Cardy\textendash Peschel conjecture. Duplantier and David \cite{DuplantierDavid}
computed the asymptotics of the determinant of the discrete square
lattice Laplacian on a torus and a rectangle; their results were extended
to cylinder, Möbius strip and Klein bottle by Brankov\textendash Priezzhev
and Izmailyan\textendash Oganesyan\textendash Hu \cite{brankov1993critical,izmailian2003exact}.
The approach in these papers is based on the fact that these geometries
have large groups of symmetries acting on them, and hence the discrete
Laplacian can be diagonalized explicitly. The determinant is then
an explicit product, whose asymptotics is still non-trivial, but doable
e.g. by Euler\textendash Maclaurin formula.

Kenyon \cite{Kenyon} proved the asymptotic expansion of the type
(\ref{eq: main}) in the case of the square lattice and planar simply
connected rectilinear domains, with a slightly weaker control of the
corner contributions. In this general setting, the explicit diagonalization
of the discrete Laplacian is not available; instead, Kenyon's method
is based on tracking the variation of the determinant of the Laplacian
when cutting the domain along vertical or horizontal line, using a
relation to the dimer model. Ananth Sridhar \cite{sridhar2015asymptotic}
extended this result to the case of the Laplacian with smoothly changing
inhomogeneous weights, by variation of these weights starting from
Kenyon's result. 

Recently, Finski \cite{finski2020ComptesRendus,finski2020finite,finski2020spanning}
obtained a version of Theorem~\ref{thm: main} in the case of the
square lattice quadrangulations of Riemann surfaces with Neumann boundary
conditions and cone angles restricted to integer multiples of $\pi$,
and with a slightly weaker control of the corner and cone contributions.
In particular, that work extended the aforementioned result of Kenyon
to multiply connected domains, and directly connected the constant
term $D$ to the $\zeta$-regularized determinant of the Friedrichs
extension of the Laplacian, settling two of the open problems stated
in \cite[Section 8]{Kenyon}. Like ours, Finski's method uses the
discrete spectral $\zeta$-function, but otherwise it is rather different.
He starts by proving, by Rayleigh method, the convergence of individual
eigenvalues and eigenfunctions of the discrete Laplacian \cite{finski2020finite}.
From this, together with some estimates such as uniform Weyl's law,
it follows that the discrete spectral $\zeta$-function converges
to its continuous counterpart in the region where its defining series
converges absolutely. In order to deduce the asymptotics in a neighborhood
of the origin, Finski follows Müller's proof of Ray\textendash Singer
conjecture, and regularizes the $\zeta$-function by comparing it
to the trace of a sum of ``localizations'' of the powers of the
Laplacian subordinate to a partition of unity. These localizations
live on squares and on a number of model surfaces constructed using
the infinite cones and infinite angles, and it turns out to be possible,
with some work, to compute their asymptotics. 

The zeta-regularized determinant of the Laplacian $\Det_{\zeta}\Delta^{\Omega,\varphi}$
that appears in the constant term of the expansion (\ref{eq: main})
goes back to Kronecker \cite{kronecker19294} who computed it for
the torus. It has subsequently received a lot of attention with the
introduction of analytic torsion by Ray and Singer \cite{raysinger,ray1973analytic}
and the celebrated proofs by Cheeger and Müller of its equivalence
to the R-torsion. On the physics side, its importance stems from its
role as a partition function of conformal field theories, and in particular,
from its conformal transformation properties given by the Polyakov\textendash Alvarez
formula \cite{alvarez1983theory,polyakov1981quantum,osgood1988extremals},
see also \cite{aldana2020polyakov} and the references therein for
the most recent developments. For other related recent work, see \cite{hou2020asymptotic,reshetikhinVertman,sridhar2015asymptotic,vertman2018regularized,Dubedat_Gheissari}.

The result of Theorem \ref{thm: main} improves on the state of the
art in the following ways. First, our class of surfaces is more general,
in that it allows for conical (and corner) angles any multiples of
$\frac{\pi}{3}$ or $\frac{\pi}{2},$ and also for punctures. Second,
we work simultaneously with \emph{general geometries} that do not
admit an explicit diagonalization of the Laplacian and in a \emph{universal}
setting, allowing for discretizations by arbitrary doubly periodic
lattices, possibly with weights, with enough symmetries. Third, we
allow for mixed Dirichlet and Neumann boundary conditions. Finally,
we improve the estimates of the corner and cone contribution from
$C_{p}\log\delta+o(\log\delta)$ to $C_{p}\log\delta+D_{p}+o(1).$ 

We also propose a new proof. The method is similar to that used by
Chinta\textendash Jorgenson\textendash Karlsson \cite{CJK1,CJK2}
and Friedli \cite{Friedli} who studied the square lattice Laplacians
on a torus: we use a representation for $\log\Det\Delta^{\Od,\varphi}$
as an integral transform of the theta function, i.e., the trace of
the discrete heat kernel. We then break the integral into parts that
we analyze separately, see the key formula (\ref{eq: key_formula}).
The main idea is to regularize the discrete heat kernel by subtracting
the heat kernel on one of a discretized model surface \textendash{}
the full plane, the half-plane, a punctured plane, a cone, or a corner
\textendash{} that matches the geometry of $\Od$ locally. This immediately
isolates the volume term $A\cdot|\Od|$ in (\ref{eq: main}), and
also the boundary, cone and puncture contributions, whose asymptotics
can be analyzed separately by studying heat kernels on model surfaces.
On the other hand, the regularized trace of the heat kernel receives
only Brownian scale contributions, and after rescaling converges to
its continuous counterpart, more or less, by the local Central limit
theorem. We derive this convergence from the functional CLT in the
plane and the parabolic Harnack inequality of Delmotte. To complete
the asymptotic analysis, we need to pass to the limit under the integral;
to this end, we employ a large deviation estimate for small $t$ and
a uniform spectral gap bound for large $t$. 

Thus, essentially, we only use three ingredients: the functional Central
limit theorem, parabolic regularity, and the fact that microscopically,
our lattice approximation ``looks the same at all places''. The
only reason we do not consider more general surfaces (e.g., allowing
for conical singularities with arbitrary angles, or for genuinely
curved surfaces) is that those do not admit nice discretizations with
this last property; see Remark \ref{rem: more_general}. To see the
difficulty it entails, note that the constant $A$, being essentially
the free energy per lattice site in the infinite volume limit, is
lattice-dependent; if the lattice is different in different places
of $\Od$, then $A$ will also fluctuate, and the volume term in the
asymptotics may be hard to control with meaningful precision. This
is why we believe that approximation schemes such as Dodziuk's are
too general to admit asymptotic formulae like (\ref{eq: main}). On
the positive side, apart from the volume term (\ref{eq: volume}),
the rest of our analysis, at least in the absence of the boundary,
does not use regularity of the lattice in any essential way. This
leaves hope that the method can be applied to more general surfaces
and approximations in sufficiently integrable case, e.g. on isoradial
graphs.

Note that if we only want to study the the difference $\log\Det\Delta^{\Od,\varphi_{1}}-\log\Det\Delta^{\Od,\varphi_{2}}$,
the aforementioned volume term cancels out, as do most of other terms
in the asymptotics, and and our method yields a simple proof the asymptotics
$2(\dim\ker\varphi_{1}-\dim\ker\varphi_{2})\log\delta+\log\Det_{\zeta}\Delta^{\Omega,\varphi_{1}}-\log\Det_{\zeta}\Delta^{\Omega,\varphi_{2}}$,
using only the functional CLT and the parabolic regularity. Such differences
are of interest since they often compute interesting topological observables
in the models, see e. g. \cite{dubedat2018double,Dubedat_Gheissari,kassel2017random,kassel2016covariant,jan2008markov},
and there are a number of results in this direction. Dubédat and Gheissari
\cite{Dubedat_Gheissari} proved convergence for tori in a different
way under even weaker assumptions and Kassel\textendash Kenyon \cite{kassel2017random}
proved a similar result on general Riemann surfaces, without identifying
the limit. A remark on cancellation of the singular terms in this
setting is also made by Kenyon \cite{Kenyon}, for trivial line bundles
with different simply-connected domains, and Finski \cite{finski2020spanning},
for general bundles.

Some of the ingredients of our proof could be alternatively established
by the methods \cite{finski2020finite,finski2020spanning}; e. g.,
the convergence of theta function in the Brownian scale follows easily
from the convergence of rescaled eigenvalues and the uniform Weyl
law, as does the spectral gap estimate of Lemma~\ref{prop: (Spectral-gap)}.
Arguably, our proofs via the functional CLT are more streamlined;
for example, cones and conical singularities, that seem to be a source
of technical difficulties in \cite{finski2020finite}, do not enter
this part of our proof at all, since the Brownian motion avoids them
almost surely. On the other hand, some intermediate results and techniques
in \cite{finski2020finite,finski2020spanning} are of independent
interest. 

Our method gives the explicit values of $A,B_{\Dir}=-B_{\Neu}$ and
$D_{p}$ as sums of integrals of discrete (continuous time) heat kernels
on the discretizations of the plane, half-plane, corners, or cones.
Note, however, that to determine those constant, it suffices to compute
the asymptotics (\ref{eq: main}) in some simple geometry; for example,
a torus for $A$, or a cylinder or a square (with corresponding boundary
conditions) for $B_{\Dir},B_{\Neu}.$ Thus, on the square lattice,
it follows from \cite[Eq. (4.24) and (5.14)]{DuplantierDavid} that
\[
A^{\LatticeB}=\frac{4G}{\pi}-\log2,\quad B_{\Dir}^{\LatticeB}=\frac{1}{2}\log(\sqrt{2}-1),\quad B_{\Neu}^{\LatticeA}=\frac{1}{2}\log(\sqrt{2}-1)-\frac{1}{2}\log2.
\]
where $G$ is the Catalan's constant $1-\frac{1}{3^{2}}+\frac{1}{5^{2}}-\dots$
Note that \cite{DuplantierDavid} computes $B_{\Dir}^{\LatticeB}$
and $B_{\Neu}^{\LatticeA}$ on two \emph{different} discretizations
of a square (see Figure \ref{fig:NiceLattices}), shifted with respect
to each other by $\frac{1}{2}+i\frac{\delta}{2}$, and $B_{\Dir}^{\LatticeB}$
is related to $B_{\Neu}^{\LatticeA}$ by the Uniform Spanning tree
duality. By contrast, our identity $B_{\Dir}=-B_{\Neu}$ holds for
a \emph{fixed} discretization. An expression for $A$ in terms of
polylogarithms is known, by a different method, for isoradial graphs
with critical weights, including the hexagonal and the triangular
lattices \cite[Theorem 1.1]{kenyon2002laplacian}. In Section \ref{sec: constantsB},
we compute closed-form expressions for $B_{\Dir}=-B_{\Neu}$ for triangular
lattice (rotated in two different ways), for the square lattice rotated
by $45^{\circ}$, and give an alternative computation in the case
of a non-rotated square lattice, recovering the result of \cite{DuplantierDavid}.
Note that the specific values of the constants are sensitive to conventions
such as the definition of $|\Od|$ and the normalization of weights,
hence e.g. the extra $-\log2$ in $A^{\LatticeB}$ above compared
to \cite{DuplantierDavid,Kenyon,finski2020spanning}.

In our proof, we employ the functional Central limit theorem with
an error bound that we derive from Einmahl's multidimensional KMT
coupling \cite{Einmahl}. The only place the error bound is used in
earnest is when we sharpen the asymptotics of the contribution of
cones and corners to (\ref{eq: main}), from $C_{p}\log\delta+o(\log\delta)$
in \cite{Kenyon,finski2020spanning} to $C_{p}\log\delta+D_{p}+o(1).$
(We also use the error bound it in the computation of the boundary
contribution; however, there we do not need the \emph{functional}
CLT, so e. g. Berry-Esseen bounds would suffice.) Since any power
law error bound suffices for that, one could use e. g. the Skorokhod
embedding instead, cf. \cite[Theorem 3.4.2]{LawlerLimic}; see also
\cite[Section 7.2]{LawlerLimic} for simple proofs of the KMT coupling
on the square and the triangular lattices. The same error bound allows
one to track the rate of convergence in Lemma \ref{lem: CLT_convergence},
Corollary \ref{cor: third_term}, and (\ref{eq: cone_contr}), (\ref{eq: bdry_contr}),
improving the $o(1)$ in (\ref{eq: main}) to $O(\delta^{\rho})$
with $\rho>0$. The recent independent work \cite{greenblatt2021discrete}
gives explicitly the error bound on the square lattice; it also shows
that similar methods allow one to treat polygonal boundaries with
any rational slopes. 

The assumption that the weights are symmetric is only used in the
proofs of technical Lemmas \ref{eq: HK_uniform_bound}\textendash \ref{lem: CLT_convergence},
where we found it convenient to use the parabolic Harnack inequality
of Delmotte~\cite{Delmotte}; we note that we use the continuous
time version that is significantly simpler than the discrete time
one. With some work, these lemmas, modified accordingly, can be given
alternative proofs, allowing one to lift the symmetry assumption.
The same applies to the unitarity of $\varphi$, which can probably
be relaxed under the assumption that the real parts of the eigenvalues
of the Laplacian remain positive.

We are grateful to Eveliina Peltola and Nikolai Reshetikhin for interesting
discussions, and to the referees for many useful suggestions. The
work is supported by Academy of Finland in the framework of Center
of Excellence in Analysis and Dynamics research and Academy project
``Critical phenomena in dimension two'', and the ERC advanced grant
741487. The work of M.~K. was supported by Russian Science Foundation
grant 19-71-30002.

\section{Laplacians, heat kernels and zeta functions}

\label{sec: notation}Although we are mainly interested in the discretizations
of the Riemann surfaces, we start from a more general setup. Let $G=(\Vertices,\Edges)$
be a connected finite undirected graph with weights $w_{e}>0$ assigned
to its edges. A \emph{vector bundle} of rank $d$ over $G$ is a collection
of $d$-dimensional complex vector spaces $V_{x}$ attached to its
vertices $x\in\Vertices$. A \emph{connection} on a vector bundle
is a collection of linear isomorphisms $\varphi_{xy}=\varphi_{yx}^{-1}:V_{x}\to V_{y}$
for each pair $x\sim y$ of adjacent vertices.  From now on, we will
only consider \emph{unitary} connections, that is, each $V_{x}$ is
equipped with an inner product and the maps $\varphi_{xy}$ are unitary.
A \emph{section} $f$ of a vector bundle is a choice of an element
$f(x)\in V_{x}$ for each $x\in\Vertices$; we will take the liberty
to refer to it as a section of $\varphi$ in order to lighten the
notation. When the vector bundle is unitary, the linear space of it
sections comes with the inner product $\langle f;g\rangle=\sum_{x\in\Vertices}\langle f(x);g(x)\rangle$. 

The Laplace operator $\Delta^{G,\varphi}$ acts on sections of the
vector bundles by the formula 
\[
\left(\Delta^{G,\varphi}f\right)(x):=\sum_{(yx)\in\Edges}w_{xy}\left(f(x)-\varphi_{yx}f(y)\right).
\]
Note that with respect to the inner product as above, we have 
\[
\langle\Delta^{G,\varphi}f;g\rangle=\frac{1}{2}\sum_{(xy)\in\Edges}w_{xy}\langle\varphi_{yx}f(y)-f(x);\varphi_{yx}g(y)-g(x)\rangle
\]
which shows that $\Delta^{G,\varphi}$ is non-negative and self-adjoint
and therefore diagonalizable with real non-negative eigenvalues.

We define the \emph{heat operator} associated to $\Delta^{G,\varphi}$
by 
\[
P_{t}^{G,\varphi}:=\exp(-t\Delta^{G,\varphi}).
\]
This is again a linear operator acting on the linear space of sections
of the vector bundle. By linearity, we can write $(P_{t}^{G,\varphi}f)(y)=\sum_{x\in\Vertices}P^{G,\varphi}(x,y,t)f(x)$,
where $P^{G,\varphi}(x,y,t):V_{x}\to V_{y}$ is a linear operator,
called the \emph{heat kernel}.

The trace of $P_{t}^{G,\varphi}$ is called the \textit{theta function}.
By computing the trace first as the sum of eigenvalues, and then as
the sum of the diagonal elements, we get the \emph{theta inversion
identity} 
\begin{equation}
\Theta^{G,\varphi}(t):=\Tr P_{t}^{G,\varphi}=\sum_{\lambda\in\sigma(\Delta^{G,\varphi})}e^{-\lambda t}=\sum_{x\in\Vertices}\Tr P^{G,\varphi}(x,x,t).\label{eq: theta_inversion}
\end{equation}
The \emph{zeta function} associated to $\Delta^{G,\varphi}$ is the
Mellin transofrm of $\Theta^{G,\varphi}(t)$, defined for $s\in\mathbb{C}$
as
\[
\zeta^{G,\varphi}(s):=\sum_{0\neq\lambda\in\sigma(\Delta^{G,\varphi})}\lambda^{-s}.
\]
Let $k=\dim\ker\Delta^{G,\varphi}$. By subtracting $k$ from both
sides of the theta inversion formula, multiplying by $t^{s-1}$ and
integrating, one gets 
\begin{equation}
\zeta^{G,\varphi}(s)=\frac{1}{\Gamma(s)}\int_{0}^{\infty}(\Theta^{G,\varphi}(t)-k)t^{s-1}\,dt,\quad\text{if }\re(s)>0;\label{eq:zeta=00003DintTheta}
\end{equation}
note that the integral converges at infinity because of the positivity
of the eigenvalues. Our main motivation for studying the zeta function
is the identity 
\begin{equation}
-\left(\zeta^{G,\varphi}\right)'(0)=\sum_{0\neq\lambda\in\sigma(\Delta^{G,\varphi})}\log\lambda=\log\Det\left(\Delta^{G,\varphi}\right).\label{eq:-zeta'=00003Dlogdet}
\end{equation}

We will need a probabilistic interpretation of the heat kernel. Let
$\gamma_{t}$ denote the continuous time random walk on the weighted
graph $(G;w)$, that, being at $x\in\Vertices$, moves following exponential
clock to an adjacent vertex $y$ with intensity $w_{xy}$. We denote
$\varphi_{\gamma_{[0;t]}}:=\varphi_{x_{n-1}x_{n}}\circ\dots\circ\varphi_{x_{0}x_{1}}$,
where $\gamma_{0}=x_{0}\sim x_{1}\sim\dots\sim x_{n}=\gamma_{t}$
are the vertices visited consecutively by $\gamma$ up to time $t$.
We denote by $\P^{x}$ and $\E^{x}$ the probability and the expectation
with respect to this random walk started at $x$. We have the following
lemma: 

\begin{lem}
We have 
\begin{equation}
P^{G,\varphi}(x,y,t)=\E^{x}(\varphi_{\gamma_{[0,t]}}\ind_{\gamma_{t}=y}).\label{eq: P_probab}
\end{equation}
\end{lem}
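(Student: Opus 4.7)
The plan is to identify the right-hand side of (\ref{eq: P_probab}) as the kernel of an operator on $X^{\varphi}$ that solves the same Cauchy problem as $P_{t}^{G,\varphi}$; in the finite-dimensional space $\mathrm{End}(X^{\varphi})$ this will force the two operators to coincide. Define $Q_{t}\colon X^{\varphi}\to X^{\varphi}$ by declaring its kernel to be $Q(x,y,t):=\E^{x}(\varphi_{\gamma_{[0,t]}}\ind_{\gamma_{t}=y})$, i.e.\ $(Q_{t}f)(y):=\sum_{x\in\Vertices}Q(x,y,t)f(x)$. At $t=0$ the walk sits at its starting point and the parallel transport $\varphi_{\gamma_{[0,0]}}$ is the identity on $V_{x}$, so $Q(x,y,0)=\delta_{xy}\Id$ and hence $Q_{0}=\Id$.

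Next I would verify the semigroup property $Q_{t+s}=Q_{t}\circ Q_{s}$. Split the trajectory $\gamma_{[0,t+s]}$ at time $s$: conditional on $\gamma_{s}=z$, the piece $\gamma_{[s,t+s]}$ is, by the strong Markov property, an independent copy of the walk of duration $t$ started at $z$, and parallel transport is multiplicative along concatenations, $\varphi_{\gamma_{[0,t+s]}}=\varphi_{\gamma_{[s,t+s]}}\circ\varphi_{\gamma_{[0,s]}}$. Summing over the intermediate vertex $z$ yields the kernel identity $Q(x,y,t+s)=\sum_{z\in\Vertices}Q(z,y,t)\,Q(x,z,s)$, which is precisely $Q_{t+s}=Q_{t}\circ Q_{s}$.

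The final step is to compute the infinitesimal generator $\mathcal{L}:=\frac{d}{dt}\big|_{t=0}Q_{t}$. Starting from $x$, the first jump occurs at an exponential time of rate $d_{x}:=\sum_{z\sim x}w_{xz}$. To first order in $t$, with probability $1-d_{x}t+O(t^{2})$ there is no jump (so $\gamma_{t}=x$ and $\varphi_{\gamma_{[0,t]}}=\Id$), and with probability $w_{xz}t+O(t^{2})$ there is exactly one jump to a neighbor $z$ (so $\gamma_{t}=z$ and $\varphi_{\gamma_{[0,t]}}=\varphi_{xz}$). Collecting terms, summing $\sum_{x}$ against $f(x)$, relabeling $x\leftrightarrow z$, and using $w_{xy}=w_{yx}$,
\[
(Q_{t}f)(y)=f(y)-t\,d_{y}f(y)+t\sum_{z\sim y}w_{yz}\varphi_{zy}f(z)+O(t^{2})=f(y)-t(\Delta^{G,\varphi}f)(y)+O(t^{2}),
\]
so $\mathcal{L}=-\Delta^{G,\varphi}$. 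Combined with the semigroup property and the initial condition $Q_{0}=\Id$, this identifies $Q_{t}$ with $\exp(-t\Delta^{G,\varphi})=P_{t}^{G,\varphi}$, and reading off the kernels yields (\ref{eq: P_probab}).

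The only real obstacle is careful bookkeeping of composition order: one must match the forward-jump transport $\varphi_{xz}\colon V_{x}\to V_{z}$ with the term $\varphi_{zy}f(z)$ appearing in the definition of $\Delta^{G,\varphi}$ at $y$, which is where the edge symmetry $w_{xy}=w_{yx}$ enters. Once this indexing is aligned, the derivation is entirely standard continuous-time-Markov-chain semigroup theory on a finite state space, so no analytic subtleties arise.
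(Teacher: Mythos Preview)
Your proof is correct and follows essentially the same idea as the paper's: both verify that the right-hand side satisfies the same linear ODE (the heat equation) with the same initial condition as $P^{G,\varphi}$, and then invoke uniqueness in the finite-dimensional space $\mathrm{End}(X^{\varphi})$. The paper simply asserts $\partial_t\hat{P}+\Delta_x^{G,\varphi}\hat{P}=0$ in one line, whereas you package the same computation as ``semigroup property plus generator at $t=0$''; your first-jump expansion is exactly what justifies the paper's terse observation, so the two arguments are the same in substance.
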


\begin{proof}
Denote the right-hand side by $\hat{P}^{G,\varphi}(x,y,t)$. We observe
that 
\[
\partial_{t}\hat{P}^{G,\varphi}(x,y,t)+\Delta_{x}^{G,\varphi}\hat{P}^{G,\varphi}(x,y,t)=0,
\]
and the same system of ODEs with the same initial conditions holds
for $P^{G,\varphi}(x,y,t)$, and the solution is unique. 
\end{proof}
We use (\ref{eq: P_probab}) to extend the definition of the heat
kernel to infinite graphs.

In what follows, the graph $G$ will be a discretization of a triangulated
or a quadrangulated surface $\Omega$, as in the introduction. To
discretize $\Omega$, choose an infinite locally finite weighted connected
graph $\C^{\delta_{0}}$ with vertices embedded in the plane; we assume
that the embedded weighted graph is bi-periodic with periods either
$1$ and $\frac{1}{2}+\frac{\sqrt{3}}{2}i$ (for triangulation case),
or $1$ and $i$ (for quadrangulation case) and has $\delta_{0}^{-2}$
vertices per unit area. We moreover assume that $\C^{\delta_{0}}$
is preserved under rotations by $\pi/3$, respectively, $\pi/2$,
around the origin, and under reflections with respect to the real
line. We denote $\C^{\delta}=\frac{1}{N}\C^{\delta_{0}},$ where $\delta=\delta_{0}/N$,
$N\in\N$, so that $\C^{\delta}$ has $\delta^{-2}$ vertices per
unit area. Let $T$ denote the unit triangle $\{0;1;\frac{1}{2}+\frac{\sqrt{3}}{2}i\}$
(respectively, the unit square $\{0,1,1+i,i\}$). We denote by $\Od$
the discrete surface obtained by discretizing each triangle/square
in $\Omega$ with $T^{\delta}=\C^{\delta}\cap T$; since $\C^{\delta}$
has all the symmetries of $\C^{\delta_{0}}$, these discretizations
can be naturally glued together. We will interchangeably use $\delta$
and $N=\delta_{0}\delta^{-1}$ as mesh parameters of the discretization.
We do not require $\C^{\delta_{0}}$ to be properly embedded, i.e.,
edges are allowed to intersect; however, we do assume that the graph
obtained by removing edges connecting a vertex strictly inside $T$
with one strictly outside $T$ is still connected, so that $\Od$
is connected. 

As an example, $\Z+i\text{\ensuremath{\Z}}$ and $\frac{1}{2}+\frac{i}{2}+\Z+i\Z$
can serve as $\C^{\delta_{0}}$ (with $\delta_{0}=1$ in this case)
in the quadrangulated case, leading to two different families of discretizations
(one of them will have vertices at cone tips). The square lattice
rotated by $45^{\circ}$ yields another discretization. The triangular
and the hexagonal lattices can serve as $\C^{\delta_{0}}$ in the
triangulation case, with $\delta_{0}=\frac{\sqrt[4]{3}}{\sqrt{2}}$
and $\delta_{0}=\frac{\sqrt[4]{3}}{2}$ respectively, and also one
may choose to rotate them, see Figure \ref{fig:NiceLattices}. 

It is easy to show (see Lemma \ref{prop:KMT} below) that the continuous
time random walk on $\C^{\delta}$ satisfies the Central limit theorem,
i.e., converges, as $\delta\to0,$ to a Brownian motion, whose covariance
matrix must be scalar, because of the symmetries. We will assume that
the weights $w_{xy}$ are chosen so that this matrix is the identity,
i.e., $\sum_{y\in B}P^{\C^{\delta}}(x,y,\delta^{-2}t)\stackrel{\delta\to0}{\longrightarrow}\int_{B}\frac{1}{2\pi t}\exp(-\frac{|x-y|^{2}}{2t})dy$
for any disc $B$. This can always be achieved by simultaneously multiplying
all the weights by a common factor, which only affects the values
of the lattice-dependent constants in (\ref{eq: main}). 

Let us comment on the boundary conditions. We will assume that if
a point $p\in\partial\Omega$ of Dirichlet-to-Neumann change is at
a corner, then, at $\Od,$ the boundary conditions also change at
the corresponding corner, and if $p$ is an inner point of a side
of one of the triangles/squares, then it is approximated by sequence
of points $p^{\delta}\to p$ at distance $m^{\delta}/N$, $m^{\delta}\in\N$,
from the corner of that triangle (or square). To define an action
of $\Delta^{\Od,\varphi}$ on a section $f$ with Dirichlet (respectively,
Neumann) boundary condition at a boundary segment $l$, we extend
$f$ across $l$ by $f(z^{\star})=\mp f(z),$ where $z\mapsto z^{\star}$
is the reflection with respect to $l$; if there are vertices on the
Dirichlet boundary, we only consider sections $f$ that are zero at
those vertices. This procedure may lead to non-symmetric weights that
are gauge equivalent to symmetric ones; the above formulae are not
affected. We adopt the convention that when writing the sum over $x\in\Od$
(as e.g. in (\ref{eq: theta_inversion})) we do not include vertices
lying on the Dirichlet part of the boundary, but do include those
on the Neumann part.

An approximation $p^{\delta}$ to a puncture $p$ will be realized
as a point of $\Omega$ disjoint from any edge of $\Od$; we moreover
insist that it is an image under scaling of a \emph{fixed} point in
a fundamental domain of $\C^{\delta_{0}}.$ 

\section{Heat kernels and theta functions in the continuum}

\label{sec: heat_kernel_continuum}We recall a continuous version
of the above theory, in which the random walk $\gamma^{\delta}$ on
$\Od$ is replaced by the Brownian motion $\gamma_{t}$ on $\Omega,$
reflected at $\partial_{\Neu}\Omega$ and absorbed at $\partial_{\Dir}\Omega$.
This is only needed for the interpretation of the constant term in
the asymptotics (\ref{eq: main}); a reader willing to accept (\ref{eq: Theta_cont_integral}\textendash \ref{eq: def_zeta_reg})
as the definition of the zeta-reguralized determinant of the Laplacian
may skip most of the rest. Recall \cite{mooers1999heat} that the
Laplacian on surfaces with conical singularities is not essentially
self-adjoint; hence some care is needed when specifying its self-adjoint
extension, see also \cite[Section 2.3]{finski2020finite}.

We start by constructing the Brownian motion on $\Omega,$ which can
be done by elementary means. Let $\hat{\Omega}$ be two copies of
$\Omega$ glued along the boundary with conical singularities removed;
we can define the Brownian motion $\hat{\gamma}$ on $\hat{\Omega}$
by coupling it to the Brownian motion $\check{\gamma}$ in the plane,
lifting $\check{\gamma}$ to $\hat{\Omega}$ by local isometries.
The lifting is well defined at least up to the first time $\hat{\gamma}$
hits a conical singularity of $\hat{\Omega}$, but this can only happen
when $\check{\gamma}$ hits a point of a triangular or a square lattice,
i.e., with probability $0.$ Hence, almost surely for $\Leb(\hat{\Omega})$
a. e. starting point, $\hat{\gamma}_{t}$ is defined for all $t$.
We then define $\gamma_{t}$ to be $\hat{\gamma}_{t}$ reflected to
$\Omega$ and stopped upon hitting $\pa_{\Dir}\Omega.$ 
\begin{lem}
\label{lem: symmetry}The Markov process $\gamma_{t}$ is symmetric
with respect to the Lebesgue measure on $\Omega.$ Moreover, for each
fixed $t>0,$ its transition kernel $P^{\Omega}(\cdot,\cdot,t)$ is
bounded, and it is smooth away from $\pa\Omega$ and the cone tips.
\end{lem}

\begin{proof}
See Section \ref{sec: proof_of_Lemmas}.
\end{proof}
The following construction is general for symmetric Markov processes,
see \cite[Section 1.4]{sznitman1998brownian} or \cite[Section 1.4]{fukushima2010dirichlet}.
Let $L^{2,\varphi}(\Omega)$ be the set of $L^{2}$ sections of $\varphi.$
Define the \emph{Markov semi-group} $P_{t}^{\Omega,\varphi}:L^{2,\varphi}(\Omega)\to L^{2,\varphi}(\Omega)$
by 

\begin{multline}
(P_{t}^{\Omega,\varphi}f)(x):=\E^{x}\left[\varphi_{\gamma_{[0,t]}}^{-1}(f(\gamma_{t}))\right]\\
=\int_{y\in\Omega}\E^{x}\left[\varphi_{\gamma_{[0,t]}}^{-1}(f(y))|\gamma_{t}=y\right]P^{\Omega}(x,y,t)\,dy=\int_{y\in\Omega}P^{\Omega,\varphi}(x,y,t)f(y)dy.\label{eq: semi-group}
\end{multline}
From the symmetry of $\gamma_{t}$ and the unitarity of $\varphi,$
we see that $P_{t}^{\Omega,\varphi}$ are self-adjoint; also, $\left\Vert P^{\Omega,\varphi}\right\Vert \leq P^{\Omega}$
pointwise. Hence, by Cauchy\textendash Schwarz, 
\[
\left\Vert (P_{t}^{\Omega,\varphi}f)(x)\right\Vert ^{2}\leq\int\left\Vert P^{\Omega,\varphi}(x,y,t)\right\Vert \,dx\int_{y\in\Omega}\left\Vert P^{\Omega,\varphi}(x,y,t)\right\Vert \left\Vert f(y)\right\Vert ^{2}\,dy\leq\left\Vert f\right\Vert ^{2},
\]
i.e., for each $t>0,$ $P_{t}^{\Omega,\varphi}$ is a self-adjoint
contraction on $L^{2,\varphi}(\Omega)$. The semi-group $P_{t}^{\Omega,\varphi}$
is strongly continuous: it is enough to check that $P_{t}^{\Omega,\varphi}f\stackrel{t\to0}{\longrightarrow}f$
in $L^{2}$ for a continuous $f$, in which case the convergence clearly
holds uniformly. Put $D_{t}^{\Omega,\varphi}=t^{-1}(I-P_{t}^{\Omega,\varphi});$
by looking at the spectral decomposition of the semigroup $\{P_{t}^{\Omega,\varphi}\}_{t>0}$,
we see that $\langle D_{t}^{\Omega,\varphi}f,f\rangle$ is decreasing
in $t$. We define the \emph{Dirichlet form }$\En^{\Omega,\varphi}(f,f):=\lim_{t\searrow0}\langle D_{t}^{\Omega,\varphi}f,f\rangle$
on the set $\mathcal{D}(\En^{\Omega,\varphi})$ of all $f\in L^{2,\varphi}(\Omega)$
for which the limit is finite. The form $\En^{\Omega,\varphi}$ is
non-negative and closed \cite[Lemma 4.2]{sznitman1998brownian}, and
we denote by $\Delta^{\Omega,\varphi}$ the unique non-negative self-adjoint
operator associated to $\En^{\Omega,\varphi},$ called the \emph{generator}
of $\{P_{t}^{\Omega,\varphi}\}_{t>0}.$ In terms of the common spectral
projections $E_{\lambda}$ for $P_{t}^{\Omega,\varphi},$ we have
$P_{t}^{\Omega,\varphi}f=\int_{0}^{\infty}e^{-\lambda t}\,dE_{\lambda}f,$
$\Delta^{\Omega,\varphi}f=\int_{0}^{\infty}\lambda\,dE_{\lambda}f,$
and $\En^{\Omega,\varphi}(f,g)=\int_{0}^{\infty}\lambda\langle dE_{\lambda}f,g\rangle,$
with the domains $L^{2,\varphi}(\Omega),$ $\mathcal{D}(\Delta^{\Omega,\varphi})=\{f\in L^{2,\varphi}(\Omega):\int_{0}^{\infty}\lambda^{2}\,\langle dE_{\lambda}f,f\rangle<\infty\}$
and $\mathcal{D}(\En^{\Omega,\varphi})=\{f\in L^{2,\varphi}(\Omega):\int_{0}^{\infty}\lambda\,\langle dE_{\lambda}f,f\rangle<\infty\}$
respectively. 

Since $P^{\Omega,\varphi}(\cdot,\cdot,t)$ is bounded, $P_{t}^{\Omega,\varphi}$
is Hilbert-Schmidt, and hence $P_{2t}^{\Omega,\varphi}$ is trace
class for all $t>0.$ Therefore, we can define the\emph{ spectral
theta function} by 
\begin{equation}
\Theta^{\Omega,\varphi}(t)=\Tr P_{t}^{\Omega,\varphi}=\sum_{i=1}^{\infty}e^{-\lambda_{i}t},\quad t>0.\label{eq: Theta_def_cont}
\end{equation}
where $0\leq\lambda_{1}\leq\lambda_{2}\leq\dots$ are the eigenvalues
of $\Delta^{\Omega,\varphi}.$ We have, for any $f\in L^{2}(\Omega),$
\[
\int_{\Omega}P^{\Omega,\varphi}(x,y,t)f(y)dy=(P_{t}^{\Omega,\varphi}f)(x)=\sum_{i=1}^{\infty}(P_{t}\psi_{i})(x)\langle\psi_{i};f\rangle=\sum_{i}e^{-\lambda_{i}t}\psi_{i}(x)\langle\psi_{i};f\rangle,
\]
thus, for any vector $v\in V_{y}$, $P^{\Omega,\varphi}(x,y,t)v=\sum_{i}e^{-\lambda_{i}t}\psi_{i}(x)\langle\psi_{i}(y),v\rangle,$
so that $\Tr P^{\Omega,\varphi}(x,x,t)=\sum_{i}e^{-\lambda_{i}t}\langle\psi_{i}(x);\psi_{i}(x)\rangle$
and, integrating, we arrive at the continuous theta inversion identity,
\begin{equation}
\int_{x\in\Omega}\Tr P^{\Omega,\varphi}(x,x,t)\,dx=\Theta^{\Omega,\varphi}(t).\label{eq: Theta_cont_integral}
\end{equation}

We define the \emph{spectral zeta function} by 
\begin{equation}
\zeta^{\Omega,\varphi}(s)=\frac{1}{\Gamma(s)}\cdot\int_{0}^{\infty}(\Theta^{\Omega,\varphi}(t)-k)t^{s-1}\,dt,\quad\re s>1,\label{eq: zet_int_theta_cont}
\end{equation}
postponing the proof of convergence to Section \ref{sec: proof_of_the_main}.
Here $k=\dim\ker\Delta^{\Omega,\varphi};$ we can also describe $k$
explicitly (for a connected $\Omega$) as follows: if $\pa_{\Dir}\Omega\neq\emptyset,$
then $k=0,$ otherwise, $k$ is the dimension of the \emph{maximal
trivial sub-bundle}. This identity follows from Lemma \ref{prop: (Spectral-gap)},
which uses the latter definition $k$, and whose conclusion, together
with (\ref{eq: CLT_1}), implies that both $\Theta^{\delta}(t)-k$
and $\Theta(t)-k$ tend to zero as $t\to\infty$. In particular, $k=\dim\ker\Delta^{\Omega,\varphi}=\dim\ker\Delta^{\Od,\varphi}=k^{\delta}$.
Alternatively, one could note from Lemma \ref{lem: domain} that,
if $\psi\in\ker\Delta^{\Omega,\varphi},$ then $\langle\nabla^{\varphi}\psi,\nabla^{\varphi}\psi\rangle=0,$
that is, $\psi$ is a \emph{covariant constant}, also known as \emph{a
flat section}, cf. \cite[Corollary 2.6]{finski2020finite}. 

As in the discrete case, we have $\zeta^{\Omega,\varphi}(s)=\sum_{i}\lambda_{i}^{-s}$
whenever either the series or the integral (\ref{eq: zet_int_theta_cont})
converges absolutely. Moreover, in in Section \ref{sec: proof_of_the_main},
we analytically continue $\zeta^{\Omega,\varphi}$ into a neighborhood
of the origin, cf. \cite{cheeger1983spectral}. In analogy with (\ref{eq:-zeta'=00003Dlogdet}),
this allows one to define the zeta-regularized determinant of $\Delta^{\Omega,\varphi}$
(cf. \cite{raysinger,ray1973analytic}) by 
\begin{equation}
\log\Det_{\zeta}\left(\Delta^{\Omega,\varphi}\right)=-\left(\zeta^{\Omega,\varphi}\right)'(0).\label{eq: def_zeta_reg}
\end{equation}
It is possible to describe the form $\mathcal{E}^{\Omega,\varphi}$
and its domain $\mathcal{D}(\mathcal{E}^{\Omega,\varphi})$ (or, equivalently,
the generator $\Delta^{\Omega,\varphi}$) more explicitly, which we
postpone to Section \ref{sec: Appendix}. 
\begin{figure}
\includegraphics[width=0.8\textwidth]{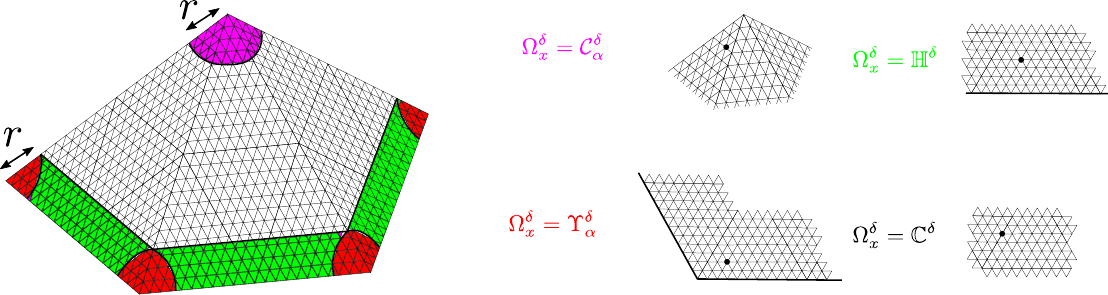}\caption{The discretized surface $\protect\Od_{x}$ is a plane, half-plane,
a infinite cone or an infinite wedge, depending on the local geometry
of $\protect\Od$ near $x.$ }
\end{figure}

\section{The key formula for the determinant of the discrete Laplacian}

\label{sec: the Key formula}For notational simplicity, we first assume
there are no punctures, and also that the lattice is such that there
are no vertices at conical singularities. We will then discuss the
necessary modifications in the general case.

Given $x\in\Omega$, we define $\Omega_{x}$ to be one of the model
surfaces, namely, a plane, a half-plane, an infinite cone, or an infinite
wedge, that agrees with $\Omega$ locally near $x$. More precisely,
fix a small $r>0$ in such a way that the $2r$\textendash neighborhoods
of the tips of the conical singularities and the boundary components
do not overlap or self-overlap. We define $r_{\alpha}:=r/\sin(\alpha/2)$
if $\alpha<\pi$ and $r_{\alpha}=r$ otherwise. We then define $\Omega_{x}$
to be the wedge of angle $\alpha$ if $x$ is at distance at most
$r_{\alpha}$ from the tip of a corner (see Figure \ref{fig: boundary})
with angle $\alpha$, else, if $x$ at distance $\leq r$ from $\partial\Omega$
(or a conical singularity), we define $\Omega_{x}$ to be the half-plane
(respectively, the cone); else, $\Omega_{x}$ is a plane. 

In a similar way (with the same $r$ independent of $\delta$), we
define the infinite graphs $\Omega_{x}^{\delta}$ discretizing each
respective $\Omega_{x}$. These graphs, when they have a boundary,
come equipped with boundary conditions inherited from $\Omega^{\delta}$.
We define the heat kernel $P$ in each of the discrete model domains
$\Omega_{x}^{\delta}$ by (\ref{eq: P_probab}), with $\varphi$ the
trivial connection on the rank one bundle, and the random walk being
stopped at the Dirichlet boundary and reflected at the Neumann one.

We start from (\ref{eq:zeta=00003DintTheta}) and rewrite it as
\begin{align*}
\zeta^{\Omega^{\delta},\varphi}(s) & =\frac{1}{\Gamma(s)}\int_{\delta^{-2}}^{\infty}\left(\Theta^{\Omega^{\delta},\varphi}(t)-k\right)t^{s-1}\,dt\\
 & +\frac{1}{\Gamma(s)}\sum_{x\in\Od}\int_{0}^{\delta^{-2}}\left(\Tr P^{\Omega^{\delta},\varphi}(x,x,t)-d\cdot P^{\Omega_{x}^{\delta}}(x,x,t)\right)t^{s-1}\,dt\\
 & +\frac{d}{\Gamma(s)}\sum_{x\in\Od}\int_{0}^{\infty}\left(P^{\Omega_{x}^{\delta}}(x,x,t)-P^{\C^{\delta}}(x,x,t)\right)t^{s-1}\,dt-\frac{d}{\Gamma(s)}\sum_{x\in\Od}\int_{\delta^{-2}}^{\infty}P^{\Omega_{x}^{\delta}}(x,x,t)t^{s-1}dt\\
 & +\frac{d}{\Gamma(s)}\sum_{x\in\Od}\int_{0}^{\infty}\left(P^{\C^{\delta}}(x,x,t)-e^{-w_{x}t}\right)t^{s-1}\,dt\\
 & +\frac{d}{\Gamma(s)}\sum_{x\in\Od}\int_{0}^{\infty}e^{-w_{x}t}t^{s-1}\,dt-\frac{1}{\Gamma(s)}\int_{0}^{\delta^{-2}}kt^{s-1}\,dt,
\end{align*}
where $w_{x}:=\sum_{y\sim x}w_{xy}.$ This identity is valid for $0<\re s<1$,
since each integral defines an analytic function in that region (see
Lemma \ref{lem: HK_infinite_tail_bound} below for large $t$ bounds).
Moreover, all the integrals but the last two are in fact analytic
at least in $-1<\re s<1$; indeed it follows from (\ref{eq: P_probab})
that $P^{\C^{\delta}}(x,x,t)=e^{-w_{x}t}\cdot\Id+O(t^{2}),$ where
$e^{-w_{x}t}$ is the probability that the random walk does not move
at all by time $t,$ and $O(t^{2})$ is the probability that it makes
at least two steps, which is bounded from above by $\max_{y\in T}(1-e^{-w_{x}t})(1-e^{-w_{y}t})$.
The last two terms are equal to
\[
d\sum_{x\in\Omega^{d}}w_{x}^{-s}-\frac{k\delta^{-2s}}{s\Gamma(s)},
\]
and since $1/\Gamma(s)=s(1+o(1))$ as $s\to0$, the derivative of
$\zeta^{\Omega^{\delta},\varphi}$ at zero evaluates to 
\begin{multline}
-\log\Det\Delta^{\Omega^{\delta},\varphi}\\
=\int_{\delta^{-2}}^{\infty}\left(\Theta^{\Omega^{\delta},\varphi}(t)-k\right)\ddt+\int_{0}^{\delta^{-2}}\sum_{x}\left(\Tr\Pp^{\Omega^{\delta},\varphi}(x,x,t)-d\cdot P^{\Omega_{x}^{\delta}}(x,x,t)\right)\ddt\\
-d\cdot\int_{\delta^{-2}}^{\infty}\sum_{x}P^{\Omega_{x}^{\delta}}(x,x,t)\ddt+d\cdot\int_{0}^{\infty}\sum_{x}\left(P^{\Omega_{x}^{\delta}}(x,x,t)-P^{\C^{\delta}}(x,x,t)\right)\ddt\\
+d\sum_{x}\left(\int_{0}^{\infty}\left(P^{\C^{\delta}}(x,x,t)-e^{-w_{x}t}\right)\ddt-\log w_{x}\right)+2k\log\delta-k\EulerGamma.\label{eq: key_formula}
\end{multline}
This is our \textbf{key formula}; analyzing it term by term will lead
to (\ref{eq: main}). The last two terms are already explicit, and
we see the $-2k=-2\dim\ker\Delta^{\Omega,\varphi}$ contribution to
the logarithmic term in (\ref{eq: main}). For the first three integrals,
going back to the probabilistic interpretation of the heat kernel,
we observe that only the walks with $\gtrsim\delta^{-1}$ steps contribute,
hence, these terms converge to their continuous counterparts by Central
limit theorem, see Section \ref{sec: CLT} and Corollary \ref{cor: third_term}
for details. As for the fourth term, we note that the summands are
zero unless $x$ is $r$\textendash close to a conical singularity
or to the boundary. Thus, the whole sum only depends on the number
of conical singularities and their angles, and on the geometry of
the boundary. We treat it in Section \ref{sec: Local_contr}.

Turning to the fifth term in (\ref{eq: key_formula}), denote 
\begin{equation}
A_{x}:=\int_{0}^{\infty}\left(P^{\C^{\delta}}(x,x,t)-e^{-w_{x}t}\right)\ddt-\log w_{x}=\int_{0}^{\infty}\left(P^{\C^{\delta}}(x,x,t)-e^{-t}\right)\ddt.\label{eq: A_x}
\end{equation}
 and observe that this quantity only depends on the vertex of $T^{\delta_{0}}$
corresponding to $x$ in the discretization procedure, in particular
in the vertex-transitive case this is just a constant. Assume first
that there are no vertices of $\C^{\delta_{0}}$ on $\partial T^{\delta_{0}}$.
Then, subdividing the vertices of $\Od$ into scaled copies of $T^{\delta_{0}}$,
the fifth term above gives the leading term of the asymptotics (\ref{eq: main}):
\begin{equation}
d\sum_{x\in\Od}A_{x}=-A\cdot|\Od|\quad\text{where}\quad A:=-d\sum_{x\in T^{\delta_{0}}}A_{x}.\label{eq: volume}
\end{equation}
If there are vertices of $T^{\delta_{0}}$ on $\partial T$ but not
in its corners, then the contribution of those should be included
in the definition of $A$ with weights $\frac{1}{2}$. This leads
to a miscount for the contribution of the vertices on $\partial\Omega$,
which we absorb into the $|\pa\Omega^{\delta}|$ term by re-defining
the constants $B_{\Neu},B_{\Dir},$ cf. the second term in (\ref{eq: B})
below. Similarly, if there are vertices at the corners of $T^{\delta_{0}}$,
they should be counted with weight $\frac{1}{6}$ or $\frac{1}{4}$,
which leads to a miscount for boundary corners and cones which we
absorb into $D_{p}$.

\textcolor{black}{In the case there are punctures, in an $r$-neighborhood
of a puncture $p$, we define $\Omega_{x}$ to be the punctured plane
$\C^{\delta}\setminus\{p\}$, equipped with the connection $\varphi_{p}$
obtained by first restricting $\varphi$ to the neighborhood of $p$
and then extending it to a flat connection on the whole $\C^{\delta}\setminus\{p\}.$
We then simply use $P^{\C^{\delta}\setminus\{p\},\varphi_{p}}$ instead
of $d\cdot P^{\Od_{x}}$ in the above formulae. If there are vertices
at the corners of $T$ (and thus at the conical singularities), then
the asymptotics of the heat kernel at a conical singularity $p$ of
angle $\alpha$ reads $P^{\Od_{p}}(p,p,t)=\frac{\alpha}{2\pi}e^{-w_{\hat{p}}t}\cdot\Id+O(t^{2}),$
where $\hat{p}$ is a corner of $T$; hence we should replace $P^{\C^{\delta}}(p,p,t)$
in the above formulae by $\frac{\alpha}{2\pi}\cdot P^{\C^{\delta}}(\hat{p},\hat{p},t)$.
This results in additional constant contributions to the asymptotics
that can be absorbed into $D_{p}$.}

\section{Contributions from the CLT part}

\label{sec: CLT}The goal for this section is to prove convergence
of the first two terms in the key formula. We start with five fairly
standard Lemmas, whose proof is deferred to Section \ref{sec: proof_of_Lemmas}.
We denote $\SingularSet:=\partial\Omega\cup\Cone{}\cup\Puncture$
\begin{lem}
\label{prop: FCLT}(Functional CLT) For any $T>0$, any $x\in\Omega\setminus\SingularSet$,
any sequence $x^{\delta}\to x$, and any bounded, continuous function
$f$ on the space of paths $\gamma:[0,T]\to\Omega$ (equipped with
sup-norm convergence), one has 
\[
\E^{x^{\delta}}\left[f(\gamma_{[0,\delta^{-2}T]}^{\delta})\right]\stackrel{\delta\to0}{\longrightarrow}\E^{x}\left[f(\gamma_{[0,T]})\right].
\]
\end{lem}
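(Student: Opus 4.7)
The plan is to prove weak convergence of the laws of $\gamma^{\delta}_{[0,\delta^{-2}T]}$ to the law of $\gamma_{[0,T]}$ in $C([0,T],\Omega)$, endowed with its intrinsic length metric. Convergence of $\E^{x^{\delta}}f(\gamma^{\delta}_{[0,\delta^{-2}T]})$ for every bounded continuous $f$ then follows from the portmanteau theorem. Following the classical invariance-principle scheme, this reduces to (i) tightness of the laws of the rescaled processes and (ii) identification of every subsequential limit with Brownian motion on $\Omega$, reflected on $\partial_{\Neu}\Omega$, killed on $\partial_{\Dir}\Omega$, and extending continuously through the finite set $\SingularSet$.

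Tightness I would obtain from Kolmogorov's criterion via a uniform fourth-moment bound on rescaled increments $\E^{x^{\delta}}|\gamma^{\delta}_{\delta^{-2}t}-\gamma^{\delta}_{\delta^{-2}s}|^{4}\leq C|t-s|^{2}$. Locally the walk on $\Od$ coincides with the walk on one of the model lattices $\C^{\delta}$, on its reflection across a straight Neumann or Dirichlet segment, or on a discrete cone; all three admit such a moment bound, the first directly from the identity-covariance CLT normalisation recalled before the lemma, and the latter two by a reflection/unfolding to the plane. Patching these local bounds into a global one is achieved by a stopping-time decomposition at successive hitting times of the charts of a finite simply connected cover $\{U_{j}\}$ of $\Omega$, using the strong Markov property.

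To identify the limit I would invoke the martingale problem: for any smooth $u:\Omega\to\C$, compactly supported away from $\SingularSet$ and satisfying the Neumann condition on $\partial_{\Neu}\Omega\cap\supp u$, the process
\[
M^{\delta,u}_{t}\;:=\;u\bigl(\gamma^{\delta}_{\delta^{-2}t}\bigr)-u(x^{\delta})+\int_{0}^{t}\delta^{-2}\bigl(\Delta^{\Od}u\bigr)\bigl(\gamma^{\delta}_{\delta^{-2}s}\bigr)\,ds
\]
is a martingale in the natural filtration of $\gamma^{\delta}$. A Taylor expansion, together with the normalisation of the weights, shows $\delta^{-2}\Delta^{\Od}u\to\Delta^{\Omega}u$ uniformly on $\supp u$, so any subsequential limit $\bar{\gamma}$ solves the martingale problem for the Brownian-motion generator on $\Omega$ with the correct boundary conditions, and is therefore $\gamma$ by the uniqueness of that problem. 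Letting $\supp u$ exhaust $\Omega\setminus\SingularSet$ through an increasing family is harmless because planar Brownian motion almost surely avoids the finite set $\SingularSet$ on any fixed time interval $[0,T]$.

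The chief technical obstacle is the tightness estimate in a shrinking neighborhood of $\SingularSet$, where the walk is no longer translation invariant: near a corner of angle $\alpha\notin\{\pi/2,\pi\}$, near a cone tip, or near a puncture, no direct reflection reduction to $\C^{\delta}$ is available. I would close this gap by combining the KMT strong approximation of Lemma~\ref{prop:KMT} inside each simply connected chart of $\Omega\setminus\SingularSet$ with Delmotte's parabolic Harnack inequality~\cite{Delmotte} applied on the universal cover of a punctured neighborhood of a cone tip or wedge. Together these deliver the uniform exit-time tail bounds from shrinking neighborhoods of $\SingularSet$ needed both for tightness and for the identification step.
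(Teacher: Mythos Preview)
Your tightness-plus-martingale-problem scheme is a legitimate route to weak convergence, but it is considerably heavier than the paper's argument and, as written, has a gap in the identification step. You take test functions $u$ supported away from $\SingularSet$, but in the paper $\SingularSet=\partial\Omega\cup\Cone{}\cup\Puncture$, so such $u$ never touch the boundary and the martingale problem you write down cannot distinguish reflecting from absorbing behaviour there. Even reading $\SingularSet$ as only the finite set of tips, corners and punctures (as your parenthetical Neumann condition suggests), you still have to encode the Dirichlet part and to establish well-posedness of the martingale problem on a flat surface with conical singularities and mixed boundary conditions; neither point is addressed, and the latter is not entirely standard.

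The paper bypasses all of this with a direct coupling that exploits the very feature you only use locally: the lattice looks the same at every point of $\Od$ as on $\C^{\delta}$. One first doubles $\Omega$ across its boundary to get $\hat\Omega$, removing the boundary; then the walk on $\hat\Omega^{\delta}$ is coupled to a single planar walk $\check\gamma^{\delta}$ on $\C^{\delta}$ by ``developing'' the path step by step (same local increments), and likewise the Brownian motion on $\hat\Omega$ is coupled to a planar Brownian motion $\check\gamma$. A single application of the KMT coupling (Lemma~\ref{prop:KMT}) gives $\sup_{t\le T}|\check\gamma^{\delta}_{\delta^{-2}t}-\check\gamma_{t}|\to 0$ almost surely, and on the probability-one event that $\hat\gamma$ avoids cone tips the intrinsic distance is dominated by the planar one, so $\hat\gamma^{\delta}\to\hat\gamma$ in sup norm almost surely. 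Folding back handles the Neumann boundary (reflection is $1$-Lipschitz), and a short crossing argument shows convergence of the Dirichlet hitting time. This yields almost-sure, hence weak, convergence with no tightness estimates, no chart-patching, and no appeal to uniqueness of a martingale problem.
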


\begin{lem}
(Short time large diameter bound) \label{lem: Short-time-large-diameter}For
every $\eps>0$, there are constants $C,c>0$ such that, for all $x\in\Od$,
all $t>0$ and all $\delta<c$, 
\[
\P^{x}(\diam(\gamma_{[0,\delta^{-2}t]}^{\delta})\geq\eps)\leq C\cdot t^{3}.
\]
\end{lem}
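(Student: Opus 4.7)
The plan is to combine a standard moment bound (via Burkholder-Davis-Gundy and Doob) in the main regime with a Chernoff-type Poisson tail estimate in the low-rate regime. It suffices to control $\P^x(\sup_{s' \leq \delta^{-2} t}|\gamma^\delta_{s'} - x| \geq \eps/2)$. For $x$ on the boundary one reduces to the bulk, either by noting that Dirichlet absorption can only decrease the diameter, or by passing to the double cover $\hat\Omega$ in the Neumann case, on which the walk inherits the martingale property below.

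The key structural input is that, thanks to the rotational and reflection symmetries of $\C^{\delta_0}$, the local drift $\sum_{y \sim x} w_{xy}(y - x)$ vanishes at every vertex $x$, so each real coordinate $X_{s'}$ of $\gamma^\delta_{s'} - x$ is a purely discontinuous martingale whose jumps have absolute value at most $L\delta$ and whose jump rate is at most $W := \max_x w_x$, for constants $L, W$ depending only on $\C^{\delta_0}$. Let $N_s$ denote the number of jumps of $\gamma^\delta$ up to $s = \delta^{-2} t$; it is dominated by a Poisson variable with mean $\mu \leq W\delta^{-2} t$.

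In the \emph{main regime} $t \geq \delta^2$, the quadratic variation $[X]_s \leq (L\delta)^2 N_s$ together with BDG and Doob's $L^6$ inequality yields $\E[\sup_{s' \leq s}|X_{s'}|^6] \leq C(L\delta)^6 \E[N_s^3] \leq C'\,t^3 + C'\,\delta^4 t$, and $\delta^4 t \leq t^3$ precisely when $t \geq \delta^2$; Markov's inequality plus a union bound over the two coordinates then yields $\P \leq C(\eps) t^3$. In the \emph{low-rate regime} $t < \delta^2$, the event $\{\diam \geq \eps\}$ forces $N_s \geq k^* := \lceil \eps/(L\delta) \rceil$, and Chernoff's bound for the Poisson count gives
$$\P(N_s \geq k^*) \leq \Bigl(\frac{e\mu}{k^*}\Bigr)^{k^*} \leq \Bigl(\frac{eWLt}{\eps\delta}\Bigr)^{k^*}.$$
For $\delta < c(\eps)$ small enough that $k^* \geq 4$ and $eWLt/(\eps\delta) \leq 1/2$ throughout the regime, the right-hand side is majorized by $(1/2)^{k^* - 3} (eWL/\eps)^3 \delta^{-3} t^3$, and the prefactor $2^{-k^*+3}\delta^{-3} = 8\exp(-(\eps\log 2)/(L\delta))\delta^{-3}$ is uniformly bounded in $\delta$.

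The main obstacle is indeed the low-rate regime: the naive BDG bound carries a spurious $\delta^4 t$ correction that is not absorbed into $t^3$ once $t \ll \delta^2$. Removing it requires the combinatorial observation that any $\eps$-displacement forces $\Omega(1/\delta)$ jumps, whose Poisson probability is super-exponentially small in $1/\delta$; it is exactly this extra input that makes the final constant $C$ depend only on $\eps$ and not on $\delta$.
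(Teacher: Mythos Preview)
Your low-rate regime (the Chernoff/Poisson tail when $t$ is very small) is essentially the paper's argument and is correct. The gap is in the main regime: the claim that each coordinate of $\gamma^\delta_{s'}-x$ is a martingale does not follow from the stated hypotheses. The rotational and reflection symmetries of $\C^{\delta_0}$ are symmetries about the origin (and, by periodicity, its lattice translates), not about every vertex. When the fundamental domain carries several vertices, a vertex in generic position has trivial stabilizer in the symmetry group, and nothing forces $\sum_{y\sim x} w_{xy}(y-x)$ to vanish there. (It does vanish on the square, triangular and hexagonal lattices, where every vertex happens to be a rotation center, but the paper explicitly allows more general $\C^{\delta_0}$.) What the symmetries \emph{do} give---and what the paper uses---is that the increments of the walk between successive returns to a fixed periodicity class have zero mean; this suffices for a CLT/KMT coupling but not for your pointwise martingale claim. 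A second, smaller omission: you reduce away the boundary but not the conical singularities, so the displacement $\gamma^\delta_{s'}-x$ is not even globally defined on $\hat\Omega^\delta$.

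Both issues are easy to repair. For the first, introduce the standard periodic corrector $\chi$ solving $\Delta\chi=-d$ on the torus (the solvability condition $\sum_{x\in F} d(x)=0$ holds because this vector is fixed by a nontrivial rotation), so that $\gamma^\delta_s+\chi(\gamma^\delta_s)$ \emph{is} a martingale and $\|\chi\|_\infty=O(\delta)$ is harmless in your BDG estimate. Alternatively, follow the paper: first reduce to the planar walk on $\C^\delta$ (pass to $\hat\Omega^\delta$ and use the Markov property to move $\eps/10$ away from conical singularities), then in the regime $t>\delta^{3/2}$ invoke the KMT coupling (Lemma~\ref{prop:KMT}) to compare with Brownian motion, and handle $t\le\delta^{3/2}$ by your Poisson bound, which works just as well up to that threshold as it does for $t<\delta^2$.
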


\begin{lem}
\label{prop: HK_bound}(Uniform bound of the heat kernel) For each
$\eps>0$, there are constants $C,c>0$ such that, for all $\delta<c$,
\begin{equation}
P^{\Od}(x,y,\delta^{-2}t)\leq C\delta^{2}\label{eq: HK_uniform_bound}
\end{equation}
whenever either $\dist(x,y)>\eps,$ or $t>\eps$. 
\end{lem}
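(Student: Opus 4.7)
The plan is to derive both cases of the lemma from a single Gaussian-type upper bound
\[
P^{\Od}(x,y,\delta^{-2}t)\;\leq\;\frac{C\,\delta^{2}}{t\wedge 1}\,\exp\!\bigl(-c\,|x-y|^{2}/t\bigr),
\]
valid uniformly in $x,y\in\Od$, $t>0$, and small $\delta$. Once this is in hand, case~(ii) follows by bounding $(t\wedge 1)^{-1}\leq\max(1,\eps^{-1})$ and dropping the exponential; case~(i) follows by distinguishing $t\geq 1$ (where $(t\wedge 1)^{-1}=1$ and $e^{-c\eps^{2}/t}\leq 1$) from $t<1$ (where $\sup_{t\in(0,1]}t^{-1}e^{-c\eps^{2}/t}$ is a finite constant depending only on $\eps$, with the maximum attained at $t\asymp\eps^{2}$). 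Both cases yield $P^{\Od}\leq C(\eps)\,\delta^{2}$, as desired.

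To establish the Gaussian bound I would invoke Delmotte's theorem \cite{Delmotte}, which on a symmetric weighted graph equates the parabolic Harnack inequality to the conjunction of volume doubling and a family of Poincar\'e inequalities, and outputs the heat kernel estimate $P(x,y,s)\leq CV(x,\sqrt{s})^{-1}\exp(-c\,d(x,y)^{2}/s)$ with constants depending only on those structural constants. I would then check that $\Od$ satisfies volume doubling and Poincar\'e with $\delta$-independent constants: away from $\SingularSet=\bdry\Omega\cup\Cone{}\cup\Puncture$ the graph is locally a rescaled copy of the bi-periodic lattice $\Cd$, for which both properties are immediate from translation invariance and symmetric weights; near a conical singularity of angle $\alpha$ the local picture is a quotient (or cover) of a planar ball in $\Cd$ by a finite rotation group; and near a Neumann corner of angle $\alpha$ the local picture is a wedge in $\Cd$ that becomes a full planar ball after reflection across the Neumann boundary. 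Dirichlet boundary conditions are absorbed using the pointwise domination $P^{\Od}\leq P^{\Od_{\mathrm{Neu}}}$, obtained by flipping every Dirichlet segment to Neumann. Translating the Delmotte output back to continuum variables via $d(x,y)\asymp|x-y|/\delta$ and $V(x,\sqrt{\delta^{-2}t})\asymp(\delta^{-2}t)\wedge\delta^{-2}$ yields the displayed bound.

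The main technical obstacle is the verification of uniform volume doubling and Poincar\'e at the singular set and at mixed Dirichlet--Neumann corners. Each such point is captured by a $\delta$-independent local model (plane, half-plane, cone, wedge, punctured plane) with a $\delta$-independent lattice pattern, so the check reduces to a finite list of cases; one must nevertheless confirm that the constants do not degenerate through the possibly non-proper embeddings of $\Cd$ permitted by the setup. Once this is secured, the remainder of the argument is routine manipulation of Delmotte's estimate.
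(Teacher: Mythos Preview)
Your approach is correct and rests on the same structural input as the paper's\textemdash Delmotte's theorem applied to the doubled surface $\hat{\Omega}^{\delta}$ after replacing Dirichlet by Neumann boundary\textemdash but it uses a different output of that theorem. The paper invokes Delmotte only for the parabolic Harnack inequality and then argues by hand: pick a near-maximizer $(x_{0},y_{0},t_{0})$ of the kernel, apply PHI to propagate that value over a macroscopic ball in $y$, and bound the resulting sum either by the total mass $\sum_{y}P^{\Od}(x_{0},y,\cdot)\leq 1$ (for the case $t\geq\eps$) or by the short-time large-diameter estimate of Lemma~\ref{lem: Short-time-large-diameter} (for the case $\dist(x,y)\geq\eps$ and small $t$), followed by a short iteration in $\eps$. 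You instead cite the Gaussian upper bound that is another equivalent condition in Delmotte's theorem and extract both cases by elementary calculus. Your route is shorter once one is willing to quote the full heat-kernel estimate, and it yields the Gaussian bound as a byproduct; the paper's route stays closer to first principles, using only PHI together with the already-established Lemma~\ref{lem: Short-time-large-diameter}. Two small remarks on your write-up: the description of a cone of angle $\alpha$ as ``a quotient (or cover) of a planar ball by a finite rotation group'' is not literally correct for all admissible angles $\alpha\in\frac{\pi}{3}\N$ (or $\frac{\pi}{2}\N$), though this is harmless since the VD/Poincar\'e check only needs uniformly bounded local geometry; and once you have passed to the all-Neumann double, mixed Dirichlet\textendash Neumann corners and punctures are irrelevant for the scalar kernel $P^{\Od}$, so those items can be dropped from your list of obstacles.
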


\begin{lem}
\label{prop: Holder-regularity}(Hölder regularity of heat kernels)
There exists a number $\theta>0$ such that, for any $\eta>0$, there
exists a constant $C_{\eta}$ with the following property: if $\dist(x,y)<\frac{1}{2}\dist(x,\SingularSet)<\frac{1}{2}\eta$
and $t>\eta^{2}$, then 
\begin{equation}
\left|\Pp^{\Od,\varphi}(x,x,\delta^{-2}t)-\Pp^{\Od,\varphi}(x,y,\delta^{-2}t)\right|\leq C_{\eta}\cdot|x-y|^{\theta}\cdot\delta^{2},\label{eq: Holder-regularity}
\end{equation}
where, to make sense of the left-hand side, we identify the vector
spaces $V_{y},$ $y\in B(x,\frac{\eta}{2}),$ using a trivialization
of $\varphi$ over $B(x,\frac{\eta}{2})$.
\end{lem}

\begin{lem}
\label{prop: (Spectral-gap)}(Spectral gap) There are constants $C,c>0$
independent of $\delta$, such that for all $\delta$ small enough,
one has
\begin{equation}
\left|\Theta^{\Od}(\delta^{-2}t)-\dimphi\right|<Ce^{-ct},\quad t\geq1.\label{eq: spectral_gap}
\end{equation}
\end{lem}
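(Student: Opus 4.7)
The plan is to reduce (\ref{eq: spectral_gap}) to two ingredients: a uniform upper bound $\Theta^{\Od,\varphi}(\delta^{-2})-\dimphi\leq M$ and a uniform spectral gap $\lambda_{1}(\Delta^{\Od,\varphi})\geq c\delta^{2}$. Indeed, for $t\geq 1$, factoring the smallest positive eigenvalue out of the spectral expansion yields
\[
\Theta^{\Od,\varphi}(\delta^{-2}t)-\dimphi \;=\; \sum_{\lambda>0} e^{-\lambda\delta^{-2}(t-1)}\,e^{-\lambda\delta^{-2}}
\;\leq\; e^{-\lambda_{1}\delta^{-2}(t-1)}\,\bigl(\Theta^{\Od,\varphi}(\delta^{-2})-\dimphi\bigr),
\]
which, combined with the two ingredients above, gives (\ref{eq: spectral_gap}) with $C=Me^{c}$.

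First I would handle the upper bound. Unitarity of the parallel transport $\varphi_{\gamma_{[0,t]}}$ and the probabilistic representation (\ref{eq: P_probab}) give the pointwise domination $|\Tr P^{\Od,\varphi}(x,x,t)|\leq d\cdot P^{\Od}(x,x,t)$, where $P^{\Od}$ denotes the scalar heat kernel. Summing over $x\in\Od$ and applying Lemma~\ref{prop: HK_bound} (with, say, $\eps=1$) produces
\[
\Theta^{\Od,\varphi}(\delta^{-2})\;\leq\; d\sum_{x\in\Od}P^{\Od}(x,x,\delta^{-2})\;\leq\; d\,C\,\delta^{2}|\Od|\;\leq\;M,
\]
since $|\Od|=O(\delta^{-2})$.

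The hard part is establishing the uniform spectral gap, which I would do by a compactness argument. The Friedrichs extension $\Delta^{\Omega,\varphi}$ on the compact surface $\Omega$ has discrete spectrum, so its first nonzero eigenvalue $\mu_{1}$ is strictly positive. Suppose the uniform gap failed; then along some subsequence $\delta_{n}\to 0$ there would exist unit-norm sections $f_{n}\in X^{\varphi}$ orthogonal to $\ker\Delta^{\Od_{n},\varphi}$ with eigenvalues $\lambda_{n}=o(\delta_{n}^{2})$. Writing $f_{n}=e^{\lambda_{n}s}P_{s}^{\Od_{n},\varphi}f_{n}$ at macroscopic scale $s\asymp\delta_{n}^{-2}$, H\"older regularity (Lemma~\ref{prop: Holder-regularity}) furnishes equicontinuity of the appropriately rescaled $f_{n}$ on compact subsets of $\Omega\setminus\SingularSet$; Arzel\`a--Ascoli together with the functional CLT (Lemma~\ref{prop: FCLT}) then extracts a locally uniform subsequential limit $f\in L^{2}(\Omega)$ satisfying $P_{s}^{\Omega,\varphi}f=f$ for every $s>0$, i.e.\ $f\in\ker\Delta^{\Omega,\varphi}$. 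The identification of discrete and continuous kernels from Section~\ref{sec: notation} lets the orthogonality $f_{n}\perp\ker$ pass to the limit, forcing $f=0$. The delicate step, where most care is needed, is ruling out loss of $L^{2}$ mass in the limit: concentration near conical singularities, corners, and punctures must be excluded via the heat-kernel estimates on the model spaces $\Od_{x}$, and escape to microscopic scales via Lemma~\ref{lem: Short-time-large-diameter}. The resulting contradiction with $\|f_{n}\|=1$ yields the spectral gap, and combined with the first paragraph completes the proof.
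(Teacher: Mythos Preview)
Your reduction to (i) a uniform bound $\Theta^{\Od,\varphi}(\delta^{-2})\leq M$ and (ii) a uniform spectral gap $\lambda_{1}\geq c\delta^{2}$ is correct, but the route is genuinely different from the paper's. The paper never isolates the spectral gap; it proves the exponential decay of $\Theta^{\Od,\varphi}(\delta^{-2}t)-k$ directly by showing that the heat semigroup contracts at one unit of macroscopic time and then iterating, case by case. If $\partial_{\Dir}\Omega\neq\emptyset$, the survival probability up to time $t$ is at most $(1-\eta)^{\lfloor t\rfloor}$. If there is no Dirichlet boundary and $\varphi$ is trivial, a Doeblin-type minorisation (obtained from PHI) gives a uniform total-variation contraction $\|P^{\Od}(x_{1},\cdot,\delta^{-2})-P^{\Od}(x_{2},\cdot,\delta^{-2})\|_{TV}\leq 1-\eta$. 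If $k=0$ with nontrivial $\varphi$, one finds loops $\beta_{(i)}$ whose holonomies have no common unit eigenvector and deduces $\sum_{y}\|P^{\Od,\varphi}(x,y,\delta^{-2})\|\leq 1-\eta'$. The general case splits $\varphi$ into its trivial and nontrivial parts. This is more hands-on than your compactness argument but yields explicit constants and avoids any limiting extraction; your approach is softer and would transfer more easily to other settings.

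Where your proposal is genuinely incomplete is the ``no mass loss'' step, and the tools you name for it (model-space heat kernels, Lemma~\ref{lem: Short-time-large-diameter}) are not the right ones. The clean way to close it is an $L^{\infty}$ bound: writing $f_{n}=e^{\lambda_{n}\delta_{n}^{-2}}P_{\delta_{n}^{-2}}^{\Od_{n},\varphi}f_{n}$, the pointwise estimate of Lemma~\ref{prop: HK_bound} and Cauchy--Schwarz give, with the normalisation $\delta_{n}^{2}\sum_{x}|f_{n}(x)|^{2}=1$,
\[
|f_{n}(x)|\;\leq\;C\Bigl(\sum_{y}P^{\Od_{n}}(y,x,\delta_{n}^{-2})^{2}\Bigr)^{1/2}\Bigl(\sum_{y}|f_{n}(y)|^{2}\Bigr)^{1/2}\;\leq\;C(C\delta_{n}^{2})^{1/2}\cdot\delta_{n}^{-1}\;=\;C',
\]
uniformly in $n$ and $x$. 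This single bound rules out both concentration near $\SingularSet$ (the contribution of an $\eta$-neighbourhood to the $L^{2}$ norm is $O(\mathcal{A}(\eta))\to 0$) and microscopic oscillation, so Arzel\`a--Ascoli plus the functional CLT then gives a limit $f$ with $\|f\|_{L^{2}(\Omega)}=1$. Without making this $L^{\infty}$ bound explicit, the contradiction does not close; Lemma~\ref{lem: Short-time-large-diameter} plays no role here.
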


We are in the position to prove convergence of the first two terms
in the key formula:
\begin{lem}
\label{lem: CLT_convergence} We have the following convergence results:
\begin{equation}
\int_{1}^{\infty}\left(\Theta^{\Od,\varphi}(\delta^{-2}t)-k\right)\ddt\stackrel{\text{as }\delta\to0}{\longrightarrow}\int_{1}^{\infty}\left(\Theta^{\Omega,\varphi}(t)-k\right)\ddt\label{eq: CLT_1}
\end{equation}
and 
\begin{multline}
\int_{0}^{1}\sum_{x\in\Od}\left(\Tr\Pp^{\Od,\varphi}(x,x,\delta^{-2}t)-d\cdot P^{\Od_{x}}(x,x,\delta^{-2}t)\right)\ddt\stackrel{\text{as }\delta\to0}{\longrightarrow}\\
\int_{0}^{1}\int_{\Omega}\left(\Tr P^{\Omega,\varphi}(x,x,t)-d\cdot P^{\Omega_{x}}(x,x,t)\right)\,dx\ddt.\label{eq: CLT_2}
\end{multline}
\end{lem}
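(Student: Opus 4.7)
The strategy is to establish pointwise convergence of each integrand in $t$ and then pass the limit under the integral by dominated convergence, with dominating functions drawn from the uniform bounds of Lemmas~\ref{prop: HK_bound} and~\ref{prop: (Spectral-gap)}.

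For (\ref{eq: CLT_1}), the spectral gap gives $|\Theta^{\Od,\varphi}(\delta^{-2}t)-k|\leq Ce^{-ct}$ uniformly in $\delta$ on $[1,\infty)$, and the analogous bound holds for $\Theta^{\Omega,\varphi}$ from the spectral gap of the Friedrichs Laplacian on $\Omega$. Hence it suffices to show $\Theta^{\Od,\varphi}(\delta^{-2}t)\to\Theta^{\Omega,\varphi}(t)$ pointwise for each fixed $t>0$. By the theta inversion identity this is a Riemann-sum statement: partition $\Omega$ into cells of diameter $\rho$, use Lemma~\ref{prop: Holder-regularity} to replace $\Tr\Pp^{\Od,\varphi}(x,x,\delta^{-2}t)$ by its spatial average on each cell in a local trivialization of $\varphi$, and invoke Lemma~\ref{prop: FCLT} at a representative vertex of each cell to identify the rescaled average with $\Tr P^{\Omega,\varphi}(x,x,t)$; sending $\delta\to 0$ and then $\rho\to 0$ recovers $\int_\Omega \Tr P^{\Omega,\varphi}(x,x,t)\,dx$.

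For (\ref{eq: CLT_2}) I would split $[0,1]=[0,\tau]\cup[\tau,1]$. On $[\tau,1]$, Lemma~\ref{prop: HK_bound} gives each of $\Tr\Pp^{\Od,\varphi}(x,x,\delta^{-2}t)$ and $P^{\Od_x}(x,x,\delta^{-2}t)$ of order $\delta^2$ uniformly, so the sum over $\Od$ is $O(1)$ uniformly in $t$; the pointwise convergence of the integrand in $t$ follows by applying the H\"older-plus-FCLT scheme of the first part separately to the two terms, and dominated convergence yields the desired limit. On $[0,\tau]$, for every $x\in\Od$ the lattices $\Od$ and $\Od_x$ coincide on some ball $B(x,r_x)$ with $r_x$ bounded below uniformly in $x$, so trivializing $\varphi$ on this simply connected ball and coupling the walks up to their common exit time gives
\[
\bigl|\Tr\Pp^{\Od,\varphi}(x,x,\delta^{-2}t)-d\cdot P^{\Od_x}(x,x,\delta^{-2}t)\bigr|\leq 2\,\P^x\bigl[\gamma^\delta_{\delta^{-2}t}=x,\;\diam\gamma^\delta_{[0,\delta^{-2}t]}\geq r_x\bigr].
\]
Conditioning on the first exit from $B(x,r_x)$ and applying Lemma~\ref{prop: HK_bound} to the post-exit walk (which must travel distance at least $r_x$ back to $x$) furnishes an extra factor $C\delta^2$, while Lemma~\ref{lem: Short-time-large-diameter} controls the exit probability by $Ct^3$; summed over the $O(\delta^{-2})$ vertices this yields the uniform bound $O(t^3)$, and integration against $dt/t$ produces $O(\tau^3)\to 0$ as $\tau\to 0$. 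A parallel Brownian estimate handles the continuum integral on $[0,\tau]$, and matching the three pieces completes the argument.

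The main obstacle is the short-time regime of (\ref{eq: CLT_2}): applied directly, Lemma~\ref{lem: Short-time-large-diameter} yields only an $O(t^3)$ bound per vertex, which is not summable over the $O(\delta^{-2})$ vertices of $\Od$. The essential observation is that the paths contributing to the difference must both exit the ball of agreement \emph{and} return to $x$ by time $\delta^{-2}t$; it is the combination of the Markov property and the post-exit heat-kernel bound that supplies the missing factor $\delta^2$ needed to compensate for the large number of vertices. A minor technical nuisance is ensuring $r_x$ is bounded below near the singular set, where the ball of agreement is dictated by the local geometry of $\Omega_x$, but the same argument goes through with $r_x$ replaced by the appropriate local scale.
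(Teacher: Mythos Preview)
Your plan is correct and matches the paper's proof closely: spectral gap plus dominated convergence for (\ref{eq: CLT_1}), with pointwise convergence of $\Theta$ obtained via H\"older regularity and the functional CLT (the paper implements this through a partition of unity $\{\psi_j\}$ together with an explicit cutoff at distance $\eta_0$ from $\Omega^\dagger$, which you should add since Lemma~\ref{prop: Holder-regularity} only applies away from the singular set, the near-singular contribution being controlled by Lemma~\ref{prop: HK_bound} and the small area); and for the short-time part of (\ref{eq: CLT_2}) exactly the coupling-plus-Markov argument you describe, with the paper's fixed $r$ from the definition of $\Omega_x$ serving as the uniform lower bound on your $r_x$. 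Your identification of the crucial extra $\delta^2$ factor coming from the post-exit heat-kernel bound is precisely the mechanism the paper uses.
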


\begin{proof}
In view of Lemma \ref{prop: (Spectral-gap)}, for (\ref{eq: CLT_1}),
it suffices to prove the convergence 
\[
\int_{1}^{T}\Theta^{\Od,\varphi}(\delta^{-2}t)\ddt\stackrel{\text{as }\delta\to0}{\longrightarrow}\int_{1}^{T}\Theta^{\Omega,\varphi}(t)\ddt\,dx.
\]
for any fixed $T>0$. Let $0<\eta<\eta_{0}$, and let $\{\psi_{j}\}$
be a partition of unity for $\Omega$ such that $\mathrm{diam}(\mathrm{supp}\;\psi_{j})<\eta$
for all $j$. We write 
\[
\int_{1}^{T}\Theta^{\Od}(\delta^{-2}t)\ddt=\sum_{j}\int_{1}^{T}\sum_{x\in\Od}\psi_{j}(x)\Tr\Pp^{\Od,\varphi}(x,x,\delta^{-2}t)\ddt
\]
and split the sum according to whether $j\in J:=\{j:\dist(\mathrm{supp}\;\psi_{j},\Omega^{\dagger})>\eta_{0}\}$.
By Lemma \ref{prop: HK_bound}, 
\[
\sum_{j\notin J}\int_{1}^{T}\sum_{x\in\Od}\psi_{j}(x)\Tr\Pp^{\Od,\varphi}(x,x,\delta^{-2}t)\ddt\leq\sum_{\dist(x,\SingularSet)\leq2\eta_{0}}\sum_{j}\psi_{j}(x)C\delta^{2}\leq C\mathcal{A}(\eta_{0}),
\]
where $\mathcal{A}(\eta_{0})\stackrel{\eta_{0}\to0}{\longrightarrow}0$
is the area of $\{x\in\Omega:\dist(x,\SingularSet)\leq2\eta_{0}\}$.
For $j\in J$, we may apply Lemma~\ref{prop: Holder-regularity}
to any $x,y\in\supp\psi_{j}$. Averaging (\ref{eq: Holder-regularity})
with weights $\frac{\psi_{j}(x)\psi_{j}(y)}{S_{j}^{2}}$, where $S_{j}=\sum_{x\in\Od}\psi_{j}(x),$
multiplying by $t^{-1},$ and integrating yields 
\[
\left|\int_{1}^{T}\sum_{x\in\Od}\psi_{j}(x)\Tr\Pp^{\Od,\varphi}(x,x,\delta^{-2}t)\ddt-\int_{1}^{T}\frac{1}{S_{j}}\sum_{x,y\in\Od}\psi_{j}(x)\psi_{j}(y)\Tr\Pp^{\Od,\varphi}(x,y,\delta^{-2}t)\ddt\right|\leq C_{\eta_{0}}S_{j}\eta^{\theta}\delta^{2}.
\]
 Summing these bounds over $j\in J$ yields the upper bound $C_{\eta_{0}}|\Od|\delta^{2}\eta^{\theta}$,
which goes to zero uniformly in $\delta$ as $\eta\to0$. Finally,
\begin{equation}
\int_{1}^{T}\frac{1}{S_{j}}\sum_{x,y\in\Od}\psi_{j}(x)\psi_{j}(y)\Tr\Pp^{\Od,\varphi}(x,y,\delta^{-2}t)\ddt=\E^{X}\left[\int_{1}^{T}\Tr\varphi(\gamma_{[0,\delta^{-2}t]}^{\delta})\psi_{j}(\gamma_{t}^{\delta})\ddt\right],\label{eq: expectation}
\end{equation}
with the initial point $X$ chosen at random with $\P(X=x)=\psi_{j}(x)/S_{j},$
and we pick a trivialization of $\varphi$ over $\supp\psi_{j}$.
The expression inside the expectation is continuous with respect to
the path $\gamma^{\delta},$ therefore, by Lemma \ref{prop: FCLT},
(\ref{eq: expectation}) converges to its continuous counterpart 
\[
\E^{X}\left[\int_{1}^{T}\Tr\varphi(\gamma_{[0,t]})\psi_{j}(\gamma_{t})dt\right]=\int_{1}^{T}\left(\int_{\Omega}\psi_{j}\right)^{-1}\int_{x,y\in\Omega}\psi_{j}(x)\psi_{j}(y)\Tr\Pp^{\Omega,\varphi}(x,y,t)dxdy.=:I_{j}
\]
In view of the bounds we have collected, we have that 
\[
\int_{1}^{T}\Theta^{\Od,\varphi}(\delta^{-2}t)\ddt\stackrel{\text{as }\delta\to0}{\longrightarrow}\lim_{\eta_{0}\to0}\lim_{\eta\to0}\sum_{j\in J}I_{j}.
\]
Since the continuous heat kernels satisfy the suitable counterparts
of Lemmas \ref{prop: HK_bound}\textendash \ref{prop: Holder-regularity}
(for instance, as a consequence of the discrete bounds and the convergence),
an argument as above gives that the latter quantity is equal to $\int_{1}^{T}\Theta^{\Omega,\varphi}(t)\ddt,$
as required.

For (\ref{eq: CLT_2}), the same argument as above, applied to $\Od$
and each of $\Od_{x},$ gives the convergence of the integral from
$t_{0}$ to $1$, for any fixed $t_{0}>0.$ Hence, it suffices to
show that the integral from $0$ to $t_{0}$ converges to $0$ as
$t_{0}\to0$ uniformly in $\delta$. We can write 
\[
\Tr\Pp^{\Od,\varphi}(x,x,\delta^{-2}t)-d\cdot P^{\Od_{x}}(x,x,\delta^{-2}t)=\E^{x}\left[\Tr\varphi(\gamma_{[0,\delta^{-2}t]})\ind_{\gamma_{\delta^{-2}t}=x}-d\cdot\ind_{\hat{\gamma}_{\delta^{-2}t}=x}\right],
\]
 where $\gamma$ and $\hat{\gamma}$ are random walks on $\Od$ and
$\Od_{x}$, respectively, coupled in such a way that they coincide
up until $\tau_{r}:=\min\{s:\dist(\gamma_{\delta^{-2}s},x)\geq r\}.$
On the event $\tau_{r}\geq t$, the expression in the expectation
is zero. Hence, we can write 
\begin{multline*}
\left|\Tr\Pp^{\Od,\varphi}(x,x,\delta^{-2}t)-d\cdot P^{\Od_{x}}(x,x,\delta^{-2}t)\right|\\
\leq d\cdot\P(\tau_{r}<t)\left(\sup_{s<t,|y-x|\geq r}P^{\Omega^{\delta}}(y,x,\delta^{-2}s)+\sup_{s<t,|y-x|\geq r}P^{\Omega_{x}^{\delta}}(y,x,\delta^{-2}s)\right)\\
\leq Cd\cdot\P(\tau_{r}<t)\delta^{2}\leq C't^{3}\delta^{2}.
\end{multline*}
where we have used Lemma \ref{prop: HK_bound} and then Lemma \ref{lem: Short-time-large-diameter}.
Summing over $x$ gives a bound on the integrand in the left-hand
side of (\ref{eq: CLT_2}) that is independent of $\delta$ and integrable
at $0$. This concludes the proof.
\end{proof}

\section{Local contributions}

\label{sec: Local_contr}

In this section, we compute the asymptotics of the local term $\sum_{x\in\Od}I_{\C^{\delta}}^{\Omega_{x}^{\delta}}(x)$
in (\ref{eq: key_formula}), where 
\begin{equation}
I_{\C^{\delta}}^{\Omega_{x}^{\delta}}(x)=\int_{0}^{\infty}\left(\HK{\Od_{x}}{x,x,t}-\HK{\C^{\delta}}{x,x,t}\right)\ddt\label{eq: local_contribution}
\end{equation}
Each of the model surfaces $\Lambda=\H,\Corner^{\alpha},\Cone{\alpha}$
has scaling acting on it, and $I_{\C^{\delta}}^{\Lambda^{\delta}}(x)=I_{\C^{\delta_{0}}}^{\Lambda^{\delta_{0}}}(N\cdot x).$
Thus, decreasing $\delta$ by going from $N$ to $N+1$ is tantamount
to adding new terms to the sum, corresponding to those $x$ whose
distance to a conical singularity, the boundary, or a puncture is
between $r\frac{N}{N+1}$ and $r$. The asymptotics of those new terms
is governed by Central limit theorem. We postpone the proof of the
following Lemmas to Section~\ref{sec: proof_of_Lemmas}:
\begin{lem}
(Local CLT with error bound) \label{lem: LCLT}If $\Lambda$ is one
of the model surfaces, $\Lambda^{\delta}$ its discretization, and
$\eps>0$, then there exist $q>0$ and $C>0$ such that 
\[
\left|\delta^{-2}\cdot P^{\Lambda^{\delta}}(x,y,\delta^{-2}t)-P^{\Lambda}(x,y,t)\right|\leq C\delta^{q}\cdot\max\{t^{-1},1\},
\]
for all $\delta,$ all $t\in(\delta^{q},\delta^{-q})$ and $x,y$
at distance at least $\eps$ from the tip (if $\Lambda$ is a wedge
or a cone).
\end{lem}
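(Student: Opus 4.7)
The plan is to establish the lemma first in the plane case $\Lambda=\C$ via a quantitative local CLT for the continuous time random walk on $\Cd$, and then deduce the half-plane, wedge, and cone cases from the plane case by combining the method of images with the finite branched-covering structure of the allowed model surfaces.

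For the plane case, the random walk on $\C^{\delta_{0}}$ is invariant under translations by the period lattice, so a Floquet--Bloch Fourier transform reduces the heat kernel to an integral of the form
\[
P^{\C^{\delta_{0}}}(0,z,s)\;=\;\frac{1}{|F^{*}|}\int_{F^{*}}\bigl\langle u(k),\,e^{ik\cdot z}\,e^{-s\hat L(k)}u(k)\bigr\rangle\,dk
\]
over the dual fundamental domain $F^{*}$, where $\hat L(k)$ is the matrix-valued symbol of the discrete Laplacian. Our normalization of the weights, together with the rotational and reflection symmetries, forces the lowest band of $\hat L(k)$ to have the expansion $\tfrac12|k|^{2}+O(|k|^{4})$ near $k=0$, while the other bands and the high-frequency part of the spectrum are bounded below by a positive constant. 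Splitting the integral at $|k|=s^{-1/2+\eta}$ for small $\eta>0$, the low-frequency contribution of the lowest band is compared term-by-term with the Gaussian integral representing $P^{\C}$; the $O(|k|^{4})$ correction in the symbol produces an error of order $s^{-1-\beta}$ for some $\beta>0$, while the high-frequency part and the higher bands contribute at most $O(e^{-cs^{2\eta}})$. Rescaling via $s=\delta^{-2}t$ yields the claimed bound in the plane case throughout the regime $t\in(\delta^{q},\delta^{-q})$.

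To extend to $\H$, $\Cone{\alpha}$, and $\Corner^{\alpha}$, we use that the permitted wedge and cone angles are rational multiples of $\pi$ compatible with the lattice symmetries, so that each model surface is a quotient (or a quotient of a finite branched cover) of $\C$ by a finite group $G_{\Lambda}$ of rotations and reflections, and the discretization $\Lambda^{\delta}$ is compatible with this structure. Consequently, for some signs $\sigma_{g}\in\{\pm1\}$ determined by the Dirichlet/Neumann boundary conditions,
\[
P^{\Lambda^{\delta}}(x,y,t)\;=\;\sum_{g\in G_{\Lambda}}\sigma_{g}\,P^{\Cd}(\tilde x,g\cdot\tilde y,t),
\]
where $\tilde x,\tilde y$ are lifts of $x,y$ to the covering plane, and the same identity holds for the continuum kernels. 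The distance-to-tip condition $\dist(x,\text{tip}),\dist(y,\text{tip})\geq\eps$ guarantees that the images $g\cdot\tilde y$ are separated by a distance bounded below in terms of $\eps$, so that by the Gaussian upper bound on the heat kernels (cf. Lemma \ref{prop: HK_bound}) only $O(1+\sqrt{t}/\eps)$ images contribute non-negligibly. For $t\leq\delta^{-q}$ this is polynomially many significant terms; applying the plane estimate term-by-term and absorbing the polynomial into a slightly smaller choice of $q$ gives the lemma for $\Lambda$.

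The main technical obstacle is the uniformity of the local CLT over the wide range $t\in(\delta^{q},\delta^{-q})$. The lower end prevents us from entering the truly discrete regime where the Gaussian approximation breaks down, while the upper end is needed so that the heat kernel remains large compared to the targeted error $\delta^{q}\max\{t^{-1},1\}$. Choosing $q$ sufficiently small ensures simultaneously that the symbol expansion $\hat L(k)=\tfrac12|k|^{2}+O(|k|^{4})$ yields a genuine gain of a power of $\delta$, and that the high-frequency tail is super-polynomially small on the scales considered.
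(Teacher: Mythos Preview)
Your Floquet--Bloch argument for $\Lambda=\C$ is a legitimate alternative to the paper's route (which is entirely probabilistic and rests on the KMT strong coupling of Lemma~\ref{prop:KMT}), and the reflection identity $P^{\H^{\delta}}(x,y,t)=P^{\C^{\delta}}(x,y,t)\pm P^{\C^{\delta}}(x,\bar y,t)$ correctly reduces the half-plane to the plane, just as the paper does.

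The gap is in the cone (and hence wedge) case. Your image formula
\[
P^{\Lambda^{\delta}}(x,y,t)=\sum_{g\in G_{\Lambda}}\sigma_{g}\,P^{\Cd}(\tilde x,g\cdot\tilde y,t)
\]
with a \emph{finite} group $G_{\Lambda}$ holds only when the plane actually covers the cone, i.e.\ when the cone angle is $2\pi/n$ with $n\in\N$. The admissible angles here are arbitrary positive multiples of $\pi/3$ (or $\pi/2$): a cone of angle $4\pi/3$ (which arises, for instance, from doubling a corner of angle $2\pi/3$) or a cone of angle $7\pi/3$ is \emph{not} a finite quotient of $\C$. The parenthetical ``quotient of a finite branched cover'' does not rescue the argument, since the common finite cover of $\Cone{\alpha}$ and $\C$ is itself a cone $\Cone{2\pi m}$ with a singular tip; expressing $P^{\Lambda^{\delta}}$ through it gives a sum of $P^{\Cone{2\pi m,\delta}}$'s rather than of $P^{\Cd}$'s, so you are back to needing the local CLT on a cone, where translation invariance is lost and your Fourier machinery no longer applies. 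The paper treats the cone directly: it couples the cone walk $\gamma^{\delta}$ to the cone Brownian motion $\gamma$ via KMT applied to their planar ``local-move'' shadows $\hat\gamma^{\delta},\hat\gamma$ (which coincide with the cone processes as long as the tip is avoided), and handles the event that the walk enters a $\delta^{\nu}$-neighbourhood of the tip by a rotational-symmetry trick: the probability of landing in $\bigcup_{k}\sigma^{k}B(y,\eps)$, with $\sigma$ the cone rotation by $\pi/3$, pulls back to a planar quantity, while the strong Markov property at the first tip visit combined with the H\"older estimate~(\ref{eq: Holder}) shows the individual rotates $\sigma^{k}B(y,\eps)$ are hit almost equiprobably afterwards. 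This tip argument is precisely the ingredient your reduction lacks.
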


\begin{lem}
(Uniform tail bound for the heat kernel) \label{lem: HK_infinite_tail_bound}If
$\Lambda^{\delta}$ is one of the model surfaces, then there exists
$C>0$ such that, for any $\delta,t>0$ and $x,y\in\Lambda^{\delta}$,
\[
P^{\Lambda^{\delta}}(x,y,\delta^{-2}t)\leq C\delta^{2}t^{-1}.
\]
\end{lem}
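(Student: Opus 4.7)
The plan is to reduce to a fixed mesh $\delta_{0}$ by scale invariance of the continuous-time random walk, and then to extract the bound from a uniform Gaussian upper bound on the heat kernel of Delmotte type, after disposing of boundary cases by reflection and monotonicity.

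First I would use that $\Lambda^{\delta}=N^{-1}\Lambda^{\delta_{0}}$ with $N=\delta_{0}/\delta$, together with the fact that the law of the continuous-time walk depends only on the weighted combinatorial graph, to write $P^{\Lambda^{\delta}}(x,y,t')=P^{\Lambda^{\delta_{0}}}(Nx,Ny,t')$ for every $t'>0$. Setting $s:=\delta^{-2}t$, the inequality to prove reduces to
\[
P^{\Lambda^{\delta_{0}}}(x',y',s)\leq Cs^{-1}\qquad\text{uniformly in }x',y'\in\Lambda^{\delta_{0}},\ s>0.
\]
For $s\leq 1$ this is immediate from $P\leq 1$ (taking $C\geq 1$), so one may assume $s\geq 1$.

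For $\Lambda=\C$ and $\Lambda=\Cone{\alpha}$, the graph $\Lambda^{\delta_{0}}$ is connected, of uniformly bounded degree, has symmetric weights bounded above and away from zero, and has quadratic volume growth $V(B(x,r))\asymp r^{2}$ (with constants depending on the fixed angle $\alpha$ in the cone case). I would argue that a scale-invariant Poincar\'e inequality is inherited from the bi-periodic lattice via its built-in symmetries, so that Delmotte's parabolic Harnack inequality applies and yields the Gaussian upper bound
\[
P^{\Lambda^{\delta_{0}}}(x',y',s)\leq\frac{C}{V(x',\sqrt{s})}\exp(-c|x'-y'|^{2}/s)\leq\frac{C'}{s},\qquad s\geq 1.
\]
For $\Lambda=\H$ and $\Lambda=\Corner^{\alpha}$ with Dirichlet components of the boundary, killing at the boundary shows that the corresponding heat kernel is dominated pointwise by that on $\C$ or $\Cone{\alpha}$, and the bound is inherited. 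For Neumann components, the reflection symmetry of $\C^{\delta_{0}}$ across the boundary line identifies the reflected walk on $\H^{\delta_{0}}$ with a walk on $\C^{\delta_{0}}$, giving $P^{\H^{\delta_{0}}}(x',y',s)=P^{\C^{\delta_{0}}}(x',y',s)+P^{\C^{\delta_{0}}}(x',\bar{y}',s)$, so the bound again follows; wedges with Neumann sides are handled similarly by unfolding across each Neumann ray into a larger cone or into the plane.

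The main obstacle will be verifying Delmotte's hypotheses uniformly for the cone, whose tip is a geometric singularity of the graph. Away from the tip the graph is locally isomorphic to $\C^{\delta_{0}}$, so volume doubling and the Poincar\'e inequality follow from standard estimates; near the tip I expect them to follow from the finite-volume character of the local configuration at the fixed angle $\alpha$, together with a direct comparison argument using the rotational and reflection symmetries of $\C^{\delta_{0}}$.
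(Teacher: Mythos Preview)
Your scaling reduction to mesh $\delta_{0}$ and your treatment of half-planes and wedges by reflection and domination match the paper. The genuine difference is in how the cone is handled. You apply Delmotte's Gaussian upper bound directly on $\Cone{\alpha,\delta_{0}}$, which obliges you to verify volume doubling and the Poincar\'e inequality for balls containing the tip; you correctly flag this as the main obstacle, and it can indeed be done, though it is not entirely trivial for angles $\alpha>2\pi$. The paper sidesteps this verification by a symmetry trick: coupling the cone walk $\gamma^{\delta}$ to a planar walk $\hat{\gamma}^{\delta}$ via identical local increments, one has $\gamma^{\delta}_{t}\in\bigcup_{k}\sigma^{k}(y)$ if and only if $\hat{\gamma}^{\delta}_{t}\in\bigcup_{k}\hat{\sigma}^{k}(y)$, where $\sigma,\hat{\sigma}$ are the rotations by $\pi/3$ (resp.\ $\pi/2$) on the cone and on the plane. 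Hence $P^{\Cone{\alpha,\delta}}(x,y,t)\leq\sum_{k}P^{\C^{\delta}}(x,\hat{\sigma}^{k}(y),t)$, a sum of at most six planar heat-kernel values, and the cone case reduces to the plane, where the bound is immediate from the local CLT. Your route is more systematic and would survive without the exact lattice symmetry; the paper's route is shorter here precisely because the lattice was assumed to have that symmetry.
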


Let $\Lambda_{1}\ni x_{0}$ be a continuous model surface equipped
with boundary conditions. Let $\Lambda_{2}$ be another model surface
that contains an isometric copy $B'(x_{0},\eta)$ of the ball $B(x_{0},\eta)\subset\Lambda_{1}$
not containing tips of a wedge or a cone, with corresponding parts
of the boundary having the same boundary conditions, and let $\Lambda_{1,2}^{\delta}$
be their discretizations that respect the isometry. We will denote,
for $x\in B(x_{0},\eta),$ 
\[
I_{\Lambda_{1}}^{\Lambda_{2}}(x):=\int_{0}^{\infty}\left(\HK{\Lambda_{2}}{x,x,t}-\HK{\Lambda_{1}}{x,x,t}\right)\ddt,
\]
where we identify the points in $B(x_{0},\eta)$ with their isomorphic
copies. We use a similar notation for discretizations $\Lambda_{1,2}^{\delta}$
of $\Lambda_{1,2}$ We have the following Lemma:
\begin{cor}
\label{cor: int_convergence}In the above setup, there exist $\rho>0$
and $C>0$ such that 
\[
\left|\delta^{-2}I_{\Lambda_{1}^{\delta}}^{\Lambda_{2}^{\delta}}(x)-I_{\Lambda_{1}}^{\Lambda_{2}}(x)\right|\leq C\delta^{\rho};
\]
for all $\delta$ and all $x\in B(x_{0},\frac{\eta}{2}).$
\end{cor}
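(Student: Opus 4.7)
The plan is to change variables $t=\delta^{-2}s$ in the discrete integral so that
\[
\delta^{-2}I_{\Lambda_{1}^{\delta}}^{\Lambda_{2}^{\delta}}(x)-I_{\Lambda_{1}}^{\Lambda_{2}}(x)=\int_{0}^{\infty}\bigl[\delta^{-2}(P^{\Lambda_{2}^{\delta}}-P^{\Lambda_{1}^{\delta}})(x,x,\delta^{-2}s)-(P^{\Lambda_{2}}-P^{\Lambda_{1}})(x,x,s)\bigr]\,\frac{ds}{s},
\]
and then to split $(0,\infty)$ into three pieces---small $s\in(0,\delta^{a})$, medium $s\in(\delta^{a},\delta^{-a})$, and large $s\in(\delta^{-a},\infty)$---for a small exponent $a>0$ to be optimized at the end.

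On the medium range I would apply Lemma~\ref{lem: LCLT} to each of $\Lambda_{1}^{\delta}$ and $\Lambda_{2}^{\delta}$, taking $\eps=\eta/2$ (legitimate since $x\in B(x_{0},\eta/2)$ lies at distance at least $\eta/2$ from any tip of $\Lambda_{1},\Lambda_{2}$), to obtain $|\delta^{-2}P^{\Lambda_{i}^{\delta}}(x,x,\delta^{-2}s)-P^{\Lambda_{i}}(x,x,s)|\leq C\delta^{q}\max\{s^{-1},1\}$. Choosing $a<q$ ensures validity on the whole medium range; integrating the triangle-inequality bound against $ds/s$ over $(\delta^{a},\delta^{-a})$ then contributes $O(\delta^{q}(\delta^{-a}+|\log\delta|))=O(\delta^{q-a})$.

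For the small-$s$ regime the LCLT error blows up like $s^{-1}$ and cannot be integrated down to $0$, so I instead bound the discrete and continuous differences separately via the short-time coupling already used in the proof of Lemma~\ref{lem: CLT_convergence}. Since $\Lambda_{1}^{\delta}$ and $\Lambda_{2}^{\delta}$ are isometric on $B'(x_{0},\eta)$, the random walks can be coupled to coincide until the exit time $\tau_{\eta/2}$ from the ball of radius $\eta/2$; the common prefix cancels in the difference, and what remains is bounded by $\P(\tau_{\eta/2}<\delta^{-2}s)\cdot\sup\{P^{\Lambda_{i}^{\delta}}(y,x,u):|y-x|\geq\eta/2,\,u>0\}$. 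Lemma~\ref{lem: Short-time-large-diameter} gives $\P(\tau_{\eta/2}<\delta^{-2}s)\leq Cs^{3}$, and Lemma~\ref{prop: HK_bound} bounds the supremum by $C\delta^{2}$, so $\delta^{-2}|(P^{\Lambda_{2}^{\delta}}-P^{\Lambda_{1}^{\delta}})(x,x,\delta^{-2}s)|\leq Cs^{3}$; the Gaussian tail of Brownian motion gives a stronger bound for the continuous difference. Integrating $Cs^{3}$ against $ds/s$ over $(0,\delta^{a})$ contributes $O(\delta^{3a})$.

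For the large-$s$ regime, Lemma~\ref{lem: HK_infinite_tail_bound} and its continuous counterpart (obtainable from standard Brownian heat-kernel estimates on the model surfaces, or by taking $\delta\to0$ in the discrete bound via Lemma~\ref{lem: LCLT}) yield $|\delta^{-2}P^{\Lambda_{i}^{\delta}}(x,x,\delta^{-2}s)|,|P^{\Lambda_{i}}(x,x,s)|\leq Cs^{-1}$, whence the integrand is $\leq4Cs^{-1}$ and the contribution is $\int_{\delta^{-a}}^{\infty}s^{-2}\,ds=O(\delta^{a})$. Setting $a=q/2$ balances the three pieces and yields the conclusion with $\rho=q/2$. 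The main obstacle is the need to handle the small-$s$ layer separately from the medium range: the LCLT error $C\delta^{q}\max\{s^{-1},1\}$ is sharp in $\delta$ but diverges at $s=0$, so one must precede the LCLT range by a short-time coupling layer of width $\delta^{a}$ with $a<q$, and this is precisely what forces a three-piece partition rather than a naive two-piece one.
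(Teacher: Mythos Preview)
Your argument is correct and follows essentially the same route as the paper: a three-piece split of the time integral into short, intermediate, and long ranges, handled respectively by the coupling/short-time bound (Lemma~\ref{lem: Short-time-large-diameter} plus the heat-kernel sup bound), the local CLT (Lemma~\ref{lem: LCLT}), and the tail bound (Lemma~\ref{lem: HK_infinite_tail_bound}). The only cosmetic difference is that the paper uses the asymmetric window $(\delta^{q/2},\delta^{-q})$ for the intermediate range while you use the symmetric $(\delta^{q/2},\delta^{-q/2})$; both choices lie in the validity range $(\delta^{q},\delta^{-q})$ of Lemma~\ref{lem: LCLT} and both give $\rho=q/2$ as the bottleneck.
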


\begin{proof}
Let $q$ be as in Lemma \ref{lem: LCLT}. At small times $t\leq\delta^{q/2}$,
we repeat the argument in the end of the proof of Lemma \ref{lem: CLT_convergence}
for $d=1$ and $\Lambda_{1,2}^{\delta}$ instead of $\Od,\Od_{x}$;
this gives 
\[
\left|\HK{\Lambda_{2}^{\delta}}{x,x,\delta^{-2}t}-\HK{\Lambda_{1}^{\delta}}{x,x,\delta^{-2}t}\right|\leq C\delta^{2}t^{3},\quad t\leq\delta^{q/2},
\]
and, integrating, 
\[
\left|\int_{0}^{\delta^{q/2}}\left(\HK{\Lambda_{2}^{\delta}}{x,x,\delta^{-2}t}-\HK{\Lambda_{1}^{\delta}}{x,x,\delta^{-2}t}\right)\ddt\right|\leq C\delta^{2}\cdot\delta^{3q/2}.
\]
At large times $t>\delta^{-q}$, we use Lemma \ref{lem: HK_infinite_tail_bound}
to get 
\[
\left|\int_{\delta^{-q}}^{\infty}\left(\HK{\Lambda_{2}^{\delta}}{x,x,\delta^{-2}t}-\HK{\Lambda_{1}^{\delta}}{x,x,\delta^{-2}t}\right)\ddt\right|\leq2C\delta^{2}\int_{\delta^{-q}}^{\infty}t^{-2}dt\leq2C\delta^{2+q}.
\]
Clearly, similar estimates, with right-hand side divided by $\delta^{2},$
hold for continuous heat kernels, e.g., as a consequence of convergence.
At intermediate times $\delta^{\frac{q}{2}}<t<\delta^{-q}$, we apply
Lemma \ref{lem: LCLT} to each of $\Lambda_{1,2}$ separately to get
\begin{multline*}
\int_{\delta^{q/2}}^{\delta^{-q}}\left|\delta^{-2}\HK{\Lambda_{1,2}^{\delta}}{x,x,\delta^{-2}t}-\HK{\Lambda_{1,2}}{x,x,\delta^{-2}t}\right|\ddt\\
\leq C\delta^{q}\int_{\delta^{q/2}}^{\delta^{-q}}\ddt\max\{t^{-1},1\}\leq2Cq\delta^{q}(\log\delta^{-1}+\delta^{-\frac{q}{2}})\leq\hat{C}\delta^{\frac{q}{2}}.
\end{multline*}
Combining all the estimates above yields the result.
\end{proof}
\begin{cor}
\label{cor: third_term}We have 
\[
\sum_{x\in\Od}\int_{\delta^{-2}}^{\infty}P^{\Omega_{x}^{\delta}}(x,x,t)\ddt\stackrel{\delta\to0}{\longrightarrow}\int_{\Omega}\int_{1}^{\infty}P^{\Omega_{x}}(x,x,t)\ddt.
\]
\end{cor}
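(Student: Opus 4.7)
The plan is to make the change of variables $t=\delta^{-2}s$ and then analyze the resulting expression as a Riemann sum in $x$ weighted against the measure $ds/s$ on $[1,\infty)$, using Lemmas \ref{lem: LCLT} and \ref{lem: HK_infinite_tail_bound} as the main inputs.

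After this substitution the left-hand side becomes $\int_{1}^{\infty}g_{\delta}(s)\,\frac{ds}{s}$, where
\[
g_{\delta}(s):=\sum_{x\in\Od}P^{\Omega_{x}^{\delta}}(x,x,\delta^{-2}s).
\]
By Lemma \ref{lem: HK_infinite_tail_bound}, each summand is bounded by $C\delta^{2}/s$; since $|\Od|=O(\delta^{-2})$, this gives the uniform bound $g_{\delta}(s)\leq C'/s$. As $\int_{1}^{\infty}(C'/s)\,\frac{ds}{s}<\infty$, dominated convergence will reduce the corollary to the pointwise claim $g_{\delta}(s)\to G(s):=\int_{\Omega}P^{\Omega_{x}}(x,x,s)\,dx$ for each $s>0$; the right-hand side of the corollary then follows by Fubini.

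To establish this pointwise convergence, I would fix $s>0$ and a small parameter $\eta>0$, and split $g_{\delta}(s)$ according to whether $\dist(x,\SingularSet)<\eta$. The near-singularity part is $O(\eta/s)$, both discretely (using Lemma \ref{lem: HK_infinite_tail_bound} together with the fact that the $\eta$-neighborhood of $\SingularSet$ contains $O(\eta\delta^{-2})$ vertices of $\Od$) and in the continuum (using the standard bound $P^{\Omega_{x}}(x,x,s)\leq C/s$ for the model kernels and the $O(\eta)$ area bound). On the complementary region, once $\delta$ is small enough that $s\in(\delta^{q},\delta^{-q})$, Lemma \ref{lem: LCLT} applied to each model surface $\Omega_{x}\in\{\C,\H,\Cone{\alpha},\Corner^{\alpha}\}$, with its parameter $\eps$ set to $\eta$ (which keeps $x$ away from any cone or wedge tip), yields
\[
\left|\delta^{-2}P^{\Omega_{x}^{\delta}}(x,x,\delta^{-2}s)-P^{\Omega_{x}}(x,x,s)\right|\leq C\delta^{q}\max\{s^{-1},1\};
\]
summing over the $O(\delta^{-2})$ relevant points of $\Od$ with weight $\delta^{2}$ produces a total error of $O(\delta^{q})$. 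What is left is the Riemann-sum convergence
\[
\sum_{x\in\Od:\,\dist(x,\SingularSet)\geq\eta}\delta^{2}\,P^{\Omega_{x}}(x,x,s)\ \longrightarrow\ \int_{\{\dist\geq\eta\}}P^{\Omega_{x}}(x,x,s)\,dx,
\]
which follows because the integrand is bounded on $\{\dist\geq\eta\}$ and continuous in $x$ except on a finite union of piecewise smooth arcs (the loci where the type of $\Omega_{x}$ switches between plane, half-plane, wedge and cone). Sending $\delta\to0$ first and then $\eta\to0$ then completes the argument.

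The only non-routine point is the Riemann-sum statement for the continuum integrand on $\{\dist\geq\eta\}$, which amounts to a geometric inspection of the definition of $\Omega_{x}$; once that is in hand, the rest is a straightforward combination of Lemmas \ref{lem: LCLT} and \ref{lem: HK_infinite_tail_bound} with dominated convergence.
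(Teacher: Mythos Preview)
Your proposal is correct and follows essentially the same route as the paper: change variables $t\to\delta^{-2}s$, use the uniform bound on the model-surface heat kernels to discard an $\eta$-neighborhood of $\SingularSet$, and invoke Lemma~\ref{lem: LCLT} on the remainder. The only cosmetic difference is that you handle large times via dominated convergence with the majorant $C'/s$ from Lemma~\ref{lem: HK_infinite_tail_bound}, whereas the paper splits the $s$-integral explicitly at $\delta^{-q}$ (as in Corollary~\ref{cor: int_convergence}); both arguments use the same two lemmas in the same roles.
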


\begin{proof}
As in the proof of Lemma \ref{lem: CLT_convergence}, we change the
variable $t\to\delta^{-2}t$. We then use Lemma \ref{prop: HK_bound}
ensure that the sum and the integral over the $\eta_{0}$-neighborhood
of the boundary and singularities tends to zero with $\eta_{0},$
uniformly in $\delta.$ This allows to consider the sum over the complement
of that neighborhood only. In that region, we argue as in the proof
of Corollary \ref{cor: int_convergence} that the summand converges
to the integrand uniformly: apply Lemma \ref{lem: LCLT} on the integral
from $1$ to $\delta^{-q}$ and Lemma \ref{lem: HK_infinite_tail_bound}
to the integral from $\delta^{-q}$ to infinity. The only difference
with the proof of Corollary \ref{cor: int_convergence} is that now
we do not need to deal with small $t.$
\end{proof}

\subsection{Conical singularities}

\label{subsec: conical_sing}Let us compute the contribution of an
$r$-neighborhood of the tip of a conical singularity with angle $\alpha$
to (\ref{eq: local_contribution}). Changing the scale to $\delta_{0}$,
we see that 
\[
\sum_{x\in\Cone{\alpha,\delta}:|x|\leq r}I_{\C^{\delta}}^{\Cone{\alpha,\delta}}(x)=\sum_{x\in\Cone{\alpha,\delta_{0}}:|x|\leq rN}I_{\C^{\delta_{0}}}^{\Cone{\alpha,\delta_{0}}}(x),
\]
where $|\cdot|$ denotes the distance to the tip; that is, decreasing
$\delta$ for a fixed $r$ simply results in adding new terms to the
sum. The asymptotics of those terms as $|x|\to\infty$ can be read
off Corollary \ref{cor: int_convergence}: if $rN\leq|x|<r(N+1),$
then 
\[
I_{\C^{\delta_{0}}}^{\Cone{\alpha,\delta_{0}}}(x)=I_{\C^{\delta}}^{\Cone{\alpha,\delta}}\left(\frac{x}{N}\right)=\delta^{2}\cdot I_{\C}^{\Cone{\alpha}}\left(\frac{x}{N}\right)+O(\delta^{2+\rho})=I_{\C}^{\Cone{\alpha}}\left(1\right)\cdot\delta_{0}^{2}|x|^{-2}+O(|x|^{-2-\rho}),
\]
where $1$ is any point at distance $1$ from the tip, and we used
rotational invariance of $I_{\C}^{\Cone{\alpha}}(ax),$ Brownian scaling
$I_{\C}^{\Cone{\alpha}}(ax)=a^{-2}I_{\C}^{\Cone{\alpha}}(x)$, and
the relation $\delta N=\delta_{0}.$ The error term $O(|x|^{-2-\rho})$
sums to a constant over $\Cone{\alpha,\delta_{0}},$ and, recalling
that $\C^{\delta_{0}}$ has $\delta_{0}^{-2}$ vertices per unit area,
we have 
\[
\sum_{x\in\Cone{\alpha,\delta_{0}}:|x|\leq rN}\delta_{0}^{2}|x|^{-2}=\int_{x\in\Cone{\alpha}:1\leq|x|\leq rN}|x|^{-2}+\check{D}_{\alpha}+o(1)=\alpha\cdot\log(rN)+\check{D}_{\alpha}+o(1).
\]
Taking into account that $\log(rN)=\log r+\log\delta_{0}-\log\delta$,
we conclude
\begin{equation}
\sum_{x\in\Cone{\alpha,\delta_{0}}:|x|\leq rN}I_{\C^{\delta_{0}}}^{\Cone{\alpha,\delta_{0}}}(x)=-\alpha\cdot I_{\C}^{\Cone{\alpha}}(1)\cdot\log\delta+\hat{\ConstD}_{\alpha}+o(1),\label{eq: cone_contr}
\end{equation}
 where $\hat{D}_{\alpha}$ is a (lattice-dependent) constant. 

\subsection{Boundary segments}

\label{subsec: bdry_segments}Let $l\subset\text{\ensuremath{\partial}\ensuremath{\ensuremath{\Omega}}}$
be a side of a triangle or a square comprising $\Omega;$ we introduce
a local coordinate in which $l$ is identified with $(0;1)\subset\partial\H$.
Let $l^{\delta}$ be the corresponding segment of $\partial\Od$.
Let $\alpha_{0,1}$ be the angles of the wedges at its endpoints $0$
and 1, and denote $\hat{\alpha}_{0,1}:=\min\{\alpha_{0,1}/2;\pi/2\}$.
We consider the contribution to (\ref{eq: local_contribution}) of
the points that are at distance at most $r$ from $l^{\delta}$, but
at the distance greater than $r_{\alpha_{0,1}}=r/\sin(\hat{\alpha}_{0,1})$
from its endpoints $0$ and $1$, respectively. This contribution
reads 
\begin{equation}
\sum_{x\in R_{r}^{\delta}}I_{\C^{\delta}}^{\H^{\delta}}(x)-\sum_{x\in\Gamma_{0}^{\delta}}I_{\C^{\delta}}^{\H^{\delta}}(x)-\sum_{x\in\Gamma_{1}^{\delta}}I_{\C^{\delta}}^{\H^{\delta}}(x)-\sum_{x\in\HalfCorn_{0}^{\delta}}I_{\C^{\delta}}^{\H^{\delta}}(x)-\sum_{x\in\HalfCorn_{1}^{\delta}}I_{\C^{\delta}}^{\H^{\delta}}(x),\label{eq: side_breakdown}
\end{equation}
 where 
\begin{align*}
R_{r}^{\delta} & =\{x\in\H^{\delta}:\im x\leq r;0\leq\re x<1\};\\
Y_{0}^{\delta} & =\{x\in\H^{\delta}:|x|<r_{\alpha_{0}};0<\arg x\leq\hat{\alpha}_{0}\}; & Y_{1}^{\delta} & =\{x\in\H^{\delta}:|x-1|<r_{\alpha_{1}};\pi-\hat{\alpha}_{1}\leq\arg(x-1)<\pi\};\\
\Gamma_{0}^{\delta} & =\{x\in\H^{\delta}:\im x<r;\hat{\alpha}_{0}<\arg x\leq\pi/2\}; & \Gamma_{1}^{\delta} & =\{x\in\H^{\delta}:\im x<r;\pi/2<\arg(x-1)\leq\pi-\hat{\alpha}_{1}\},
\end{align*}
 see Figure \ref{fig: boundary}; 
\begin{figure}

\centering{}\includegraphics[width=0.7\textwidth]{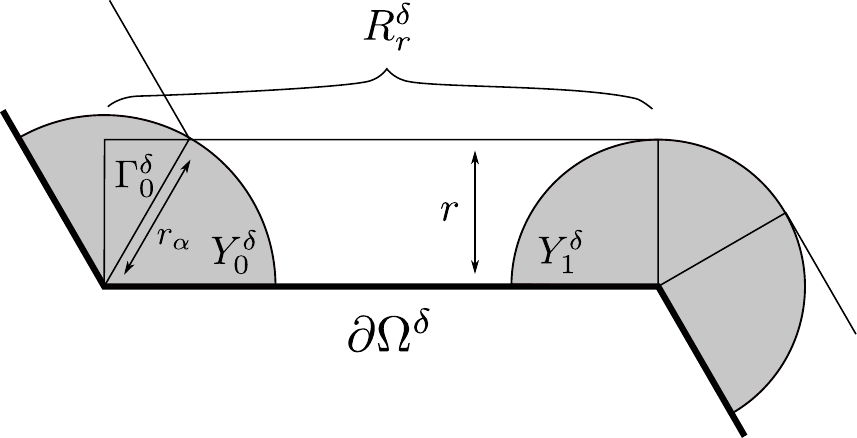}\caption{\label{fig: boundary}A decomposition of a neighborhood of a boundary
segment. The rectangle $R_{r}^{\delta}$ includes two sectors $Y_{0,1}^{\delta}$
and a triangle $\Gamma_{0}^{\delta};$ in this case, $\Gamma_{1}^{\delta}=\emptyset$
since the corresponding angle is greater than $\pi$. The shaded sectors
of radii $r_{\alpha}=r/\sin\frac{\alpha}{2}$ for $\alpha=2\pi/3$
and $r_{\alpha}=r$ for $\alpha=4\pi/3>\pi$ are the regions for which
$\protect\Od_{x}$ is a wedge; in the white part of $R_{r}^{\delta},$
$\Omega_{x}^{\delta}=\protect\H^{\delta}.$}
\end{figure}
the boundary conditions in $\H^{\delta}$ above are inherited from
$l^{\delta}$. We first treat the sum over $R_{r}^{\delta},$ which
we can split into $N=\delta_{0}\delta^{-1}$ strips $R_{r}^{\delta}(k):=\{\im x\leq r,\frac{k}{N}\leq\re x<\frac{k+1}{N}\}$
that all give equal contributions. As in the cone case, we see that
decreasing $\delta$ is tantamount to adding new terms to the sum
over $R_{r}^{\delta}(0)$, i.e.,
\begin{multline*}
\sum_{x\in R_{r}^{\delta}(0)}I_{\C^{\delta}}^{\H^{\delta}}(x)=\sum_{x\in R_{rN}^{\delta_{0}}(0)}I_{\C^{\delta_{0}}}^{\H^{\delta_{0}}}(x)=\hat{B}-I_{\C}^{\H}(i)\cdot\sum_{x\in R_{\infty}^{\delta_{0}}(0)\setminus R_{rN}^{\delta_{0}}(0)}\delta_{0}^{2}\left((\im x)^{-2}+O((\im x)^{-2-\rho})\right)\\
=\hat{B}-I_{\C}^{\H}(i)\cdot\left((rN)^{-1}+O(N^{-1-\rho})\right)
\end{multline*}
where $\hat{B}=\sum_{x\in R_{\infty}^{\delta_{0}}(0)}I_{\C^{\delta_{0}}}^{\H^{\delta_{0}}}(x)$
is a lattice-dependent constant, so that 
\[
\sum_{x\in R_{r}^{\delta}}I_{\C^{\delta}}^{\H^{\delta}}(x)=\hat{B}\cdot N-I_{\C}^{\H}(i)r^{-1}+O(\delta^{\rho}),
\]
where we can compute, with the sign $s=\pm1$ depending on the boundary
conditions as $s_{\Neu}=+1$ and $s_{\Dir}=-1$, 
\[
I_{\C}^{\H}(i):=\int_{0}^{\infty}\left(\HK{\H}{i,i,t}-\HK{\C}{i,i,t}\right)\ddt=s\cdot\int_{0}^{\infty}\HK{\C}{i,-i,t}\ddt=s\cdot\int_{0}^{\infty}\frac{1}{2\pi t}e^{-\frac{2}{t}}\ddt=s\cdot\frac{1}{4\pi},
\]
The contributions of $Y_{0,1}^{\delta}$ to (\ref{eq: side_breakdown})
will cancel the corresponding contributions from corners, thus we
will leave them as they are for a while. The contribution of $\Gamma_{0,1}^{\delta}$
is computed as in the cone case, applying Corollary \ref{cor: int_convergence}
and then using Brownian scaling and shift invariance of $I_{\C}^{\H}(i):$
\[
\sum_{y\in\Gamma_{0}^{\delta}}I_{\C^{\delta}}^{\H^{\delta}}(y)=s\cdot\sum_{\substack{0\leq\Im x\leq rN;\\
\hat{\alpha}_{0}<\arg x<\pi/2
}
}(\im x)^{-2}\delta_{0}^{2}\left(I_{\C}^{\H}(i)+O(\im x^{-\rho})\right)=s\cdot\frac{\cot(\hat{\alpha}_{0})}{4\pi}\log\delta+\hat{D}_{\alpha_{0}}+o(1),
\]
 and similarly 
\[
\sum_{y\in\Gamma_{1}^{\delta}}I_{\C^{\delta}}^{\H^{\delta}}(y)=s\cdot\frac{\cot(\hat{\alpha}_{1})}{4\pi}\log\delta+\hat{D}_{\alpha_{1}}+o(1).
\]

\subsection{Boundary corners}

\label{subsec: bdry_corners}We parameterize a boundary corner $\Upsilon^{\alpha}$
by a local coordinate $z$ so that $\Upsilon^{\alpha}=\{z\in\C:0<\arg z<\alpha\},$
and denote we denote by $\HalfCorn_{\Left}^{\delta}$ (respectively,
$Y_{\Right}^{\delta}$) the set $\HalfCorn_{1}^{\delta}$ (respectively,
$Y_{0}^{\delta}$) corresponding to the boundary segment adjacent
to $\Upsilon^{\alpha}$ on the left (respectively, on the right).
We also denote $Y_{\mathrm{middle}}^{\delta}:=\{x\in\Corner^{\alpha,\delta}:|x|\leq r_{\alpha}\}\setminus\left(Y_{\Left}^{\delta}\cup Y_{\Right}^{\delta}\right)$,
which is non-empty if and only if $\alpha>\pi$. The contribution
of $\Corner^{\alpha}$ to (\ref{eq: local_contribution}) can be written
as 
\begin{multline}
\sum_{x:|x|<r_{\alpha}}I_{\C^{\delta}}^{\Upsilon^{\alpha,\delta}}(x)=\sum_{x\in Y_{\mathrm{middle}}^{\delta}}I_{\C^{\delta}}^{\Upsilon^{\alpha,\delta}}(x)+\sum_{x\in\HalfCorn_{\Right}^{\delta}}I_{\H^{\delta}}^{\Upsilon^{\alpha,\delta}}(x)+\sum_{x\in\HalfCorn_{\Left}^{\delta}}I_{\hat{\H}^{\delta}}^{\Upsilon^{\alpha,\delta}}(x)\\
+\sum_{x\in\HalfCorn_{\Right}^{\delta}}I_{\C^{\delta}}^{\H^{\delta}}(x)+\sum_{x\in\HalfCorn_{\Left}^{\delta}}I_{\C^{\delta}}^{\hat{\H}^{\delta}}(x).\label{eq: corner_breakdown}
\end{multline}
 where $\hat{\H}^{\delta}$ stands for the upper-half plane $\H^{\delta}$
rotated counterclockwise by $\alpha-\pi$ around $\pi$ (so that its
boundary coincides with the \emph{left} boundary of the corner), and
the boundary conditions $\Upsilon^{\alpha,\delta},\H^{\delta},\hat{\H}^{\delta}$
are inherited from those in $\Omega^{\delta}$. The first three terms
yield, similarly to the computations above, 
\[
-\hat{C}_{\alpha}^{b\hat{b}}\cdot\log\delta+\hat{D}_{\alpha}^{b\hat{b}}+o(1),
\]
where 
\begin{equation}
\hat{C}_{\alpha}^{b\hat{b}}=\int_{0}^{\hat{\alpha}}I_{\H_{b}}^{\Upsilon^{\alpha}}(e^{i\theta})\,d\theta+\int_{\alpha-\hat{\alpha}}^{\alpha}I_{\hat{\H}_{\hat{b}}}^{\Upsilon^{\alpha}}(e^{i\theta})\,d\theta+\mathbf{1}_{\alpha>\pi}\int_{\pi/2}^{\alpha-\pi/2}I_{\C}^{\Upsilon^{\alpha}}(e^{i\theta})\,d\theta,\label{eq: C_corner_before}
\end{equation}
$\hat{D}_{\alpha}^{b\hat{b}}$ are constants and $b,\hat{b}\in\{\Dir,\Neu\}$
are boundary conditions on $\H,\hat{\H}.$

Observe that when collecting the contributions to (\ref{eq: local_contribution})
along $\partial\Od$, the last two terms in (\ref{eq: corner_breakdown})
cancel out the corresponding terms in (\ref{eq: side_breakdown}).
The total contribution of the $r$-neighborhood of $\partial\Od$
to (\ref{eq: local_contribution}) is therefore 
\begin{multline}
\hat{B}_{\Dir}\cdot N\cdot|\partial_{\Dir}\Omega|+\hat{B}_{\Neu}\cdot N\cdot|\partial_{\Neu}\Omega|-I_{\C}^{\H_{\Dir}}(i)r^{-1}|\partial_{\Dir}\Omega|-I_{\C}^{\H_{\Neu}}(i)r^{-1}|\partial_{\Neu}\Omega|\\
-\left(\sum_{p\in\Corner}C_{p}\right)\cdot\log\delta+\sum_{i\in\Crnrs}\hat{D}_{\alpha_{i}}^{b_{i}b_{i+1}}+o(1),\label{eq: bdry_contr}
\end{multline}
where $C_{p}$ depends only on the angle and the boundary conditions
$b,\hat{b}\in\{\Dir,\Neu\}$ on the segments adjacent to the $p\simeq\Corner_{b\hat{b}}^{\alpha}$
as 
\begin{equation}
C_{\Corner_{b\hat{b}}^{\alpha}}=\hat{C}_{\alpha}^{b\hat{b}}-(s_{b}+s_{\hat{b}})\cdot\frac{\cot(\min\{\frac{\alpha}{2};\frac{\pi}{2}\})}{4\pi}.\label{eq: C_corner_implicit}
\end{equation}
Taking into account the discussion at the end of Section \ref{sec: the Key formula},
we have
\begin{equation}
B_{\Dir,\Neu}=-d\cdot\left(\hat{B}_{\Dir,\Neu}\mp\frac{1}{2}\sum_{x\in T^{\delta_{0}}\cap[0,1)}A_{x}\right),\label{eq: B}
\end{equation}
where the sum is over the vertices in one boundary edge of the fundamental
domain $T^{\delta_{0}}.$

\subsection{Punctures}

\label{subsec: punctures}Using a suitable modification of Lemma \ref{lem: LCLT},
similarly to the conical singularity case, we have, in the local coordinate
where $p=0,$ 
\begin{multline}
\sum_{x\in\C^{\delta}:|x|\leq r}I_{\C^{\delta}}^{\C^{\delta}\setminus\{0\},\varphi_{p}}(x)=I_{\C}^{\C\setminus\{0\},\varphi_{p}}(1)\cdot\sum_{x\in\C^{\delta_{0}}:|x|\leq rN}\delta_{0}^{2}\left(|x|^{-2}+O(|x|^{-2-\rho})\right)\\
=-2\pi I_{\C}^{\C\setminus\{0\},\varphi_{p}}(1)\log\delta+\ConstD_{\varphi_{p}}+o(1).\label{eq: puncture}
\end{multline}

\section{Explicit computations for the logarithmic term}

\label{sec: logterm_explicit}In this section, we compute the integrals
involving heat kernels that contribute to the logarithmic term of
the asymptotics. The results are not new. Namely, as pointed out in
\cite{finski2020spanning,greenblatt2021discrete}, the constant $C$
in (\ref{eq: main}) is related to the spectral zeta-function by $C=-2\zeta_{\Omega}(0),$
see Remark~\ref{rem: zeta_0}. The value of $\zeta_{\Omega}(0)$
has been computed in a much greater generality by Cheeger, see \cite[Theorem 4.4]{cheeger1983spectral}
and the discussion thereafter. 

Here, we propose an alternative computation based on the following
identity for the heat kernel on the universal cover of a punctured
plane:
\begin{lem}
We have, for the heat kernel $\tilde{P}:=P^{\widetilde{\C\setminus\{0\}}}$,
\[
\int_{0}^{\infty}\tilde{P}(1,e^{i\alpha},t)\cdot\ddt=\frac{1}{\pi\alpha^{2}}.
\]
 
\end{lem}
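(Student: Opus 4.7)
The plan is to compute the integral directly by means of the spectral (Fourier) decomposition of $\tilde P$ on the universal cover, and then reduce the resulting double integral to a classical partial-fraction identity. Parametrizing the cover by $(r,\theta)\in(0,\infty)\times\R$ with the flat metric $dr^{2}+r^{2}d\theta^{2}$, and noting that the angular variable has continuous spectrum on the cover, separation of variables together with the standard modified-Bessel radial kernel gives
\[
\tilde{P}((1,0),(1,\alpha),t)=\frac{1}{\pi t}\,e^{-1/t}\int_{0}^{\infty}I_{\nu}(1/t)\cos(\nu\alpha)\,d\nu.
\]
Substituting $u=1/t$ in the time integral transforms the quantity of interest into
\[
\int_{0}^{\infty}\tilde{P}(1,e^{i\alpha},t)\,\ddt=\frac{1}{\pi}\int_{0}^{\infty}e^{-u}\int_{0}^{\infty}I_{\nu}(u)\cos(\nu\alpha)\,d\nu\,du.
\]

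I would evaluate this via Watson's integral representation
\[
I_{\nu}(u)=\frac{1}{\pi}\int_{0}^{\pi}e^{u\cos\phi}\cos(\nu\phi)\,d\phi-\frac{\sin\pi\nu}{\pi}\int_{0}^{\infty}e^{-u\cosh\tau-\nu\tau}\,d\tau,
\]
carrying out the $\nu$-integration first in the distributional sense (using $\int_{0}^{\infty}\cos(\nu\alpha)\cos(\nu\phi)\,d\nu=\tfrac{\pi}{2}\delta(\alpha-\phi)$ for $\phi,\alpha\in(0,\pi)$ and the elementary Laplace transform $\int_{0}^{\infty}\sin(\pi\nu)\cos(\nu\alpha)e^{-\nu\tau}d\nu=\tfrac{1}{2}\bigl[\tfrac{\pi+\alpha}{(\pi+\alpha)^{2}+\tau^{2}}+\tfrac{\pi-\alpha}{(\pi-\alpha)^{2}+\tau^{2}}\bigr]$), then the $u$-integration. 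For $\alpha\in(0,\pi)$ this yields
\[
\int_{0}^{\infty}\tilde{P}\,\ddt=\frac{1}{2\pi(1-\cos\alpha)}-\frac{1}{4\pi^{2}}\int_{0}^{\infty}\mathrm{sech}^{2}(\tau/2)\left[\frac{\pi+\alpha}{(\pi+\alpha)^{2}+\tau^{2}}+\frac{\pi-\alpha}{(\pi-\alpha)^{2}+\tau^{2}}\right]d\tau.
\]
The remaining integral is handled using the Fourier transform $\int_{\R}\mathrm{sech}^{2}(\tau/2)e^{i\omega\tau}d\tau=4\pi\omega/\sinh(\pi\omega)$, computed by closing the contour in the upper half-plane and summing the double-pole residues at $\tau=i(2k+1)\pi$, combined with $(a^{2}+\tau^{2})^{-1}=a^{-1}\int_{0}^{\infty}e^{-au}\cos(u\tau)du$. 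A short calculation reduces it to $\tfrac{1}{\pi}\sum_{m\ne0}(\alpha+2\pi m)^{-2}$, and Euler's partial-fraction identity $\sum_{m\in\Z}(\alpha+2\pi m)^{-2}=(2(1-\cos\alpha))^{-1}$ then causes the $(1-\cos\alpha)^{-1}$ terms to cancel, leaving $\int_{0}^{\infty}\tilde{P}\,\ddt=\tfrac{1}{\pi\alpha^{2}}$.

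The main technical obstacle is the rigorous justification of the interchange of integration order and the distributional manipulations; I would handle this by first truncating the $\nu$-integral to a finite interval $[0,N]$ and passing to the limit $N\to\infty$ via dominated convergence, using the rapid decay $I_{\nu}(u)\le(u/2)^{\nu}/\Gamma(\nu+1)$ for $\nu$ large and $u$ fixed. Extension of the identity to $\alpha\in\R\setminus 2\pi\Z$ outside $(0,\pi)$ follows from analyticity in $\alpha$ of both sides, or equivalently from the covering-space identity $\sum_{k\in\Z}\tilde{P}((1,0),(1,\alpha+2\pi k),t)=P^{\C}(1,e^{i\alpha},t)$ combined with the cotangent identity and the formula already established on $(0,\pi)$.
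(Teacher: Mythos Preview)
Your computation is correct: the spectral formula, the Watson split, the $\mathrm{sech}^{2}$ Fourier transform, and the geometric-series reduction to $\frac{1}{\pi}\sum_{m\neq 0}(\alpha+2\pi m)^{-2}$ all check out, and Euler's identity then collapses everything to $\frac{1}{\pi\alpha^{2}}$ as claimed. The extension beyond $(0,\pi)$ via analyticity is also fine.

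Your route, however, is entirely different from the paper's. The paper avoids special functions altogether: it invokes the conformal invariance of the Lawler--Werner Brownian loop measure. One computes the mass of loops of winding number one in a conical annulus $\{r\le|z|\le 1\}/\{z\sim e^{i\alpha}z\}$, pushes it forward by $z\mapsto -i\log z$ to a flat cylinder of circumference $\alpha$, and compares the $-\log r\to\infty$ asymptotics on both sides; the cylinder side is an elementary Gaussian integral giving $-\frac{\log r}{\pi\alpha}$, while the annulus side gives $-\alpha\log r\cdot\int_{0}^{\infty}\tilde P\,\frac{dt}{t}$, whence the result. What the paper's argument buys is brevity and conceptual transparency (no Bessel identities, no distributional juggling), at the cost of importing the loop-measure framework. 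What your argument buys is self-containedness: it needs only classical analysis and standard transform tables, and in particular does not rely on conformal invariance at all. Your main residual obligation---justifying the $\nu$-integration against the Watson representation---is real but routine, and your truncation-plus-dominated-convergence plan using $I_{\nu}(u)\le (u/2)^{\nu}/\Gamma(\nu+1)$ is adequate for it.
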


\begin{proof}
We use the Brownian loop measure of Lawler and Werner, see \cite{lawler2004brownian}
or \cite[Section 5.6]{Lawler_book}, defined as a $\sigma$-finite
measure on the space of unrooted closed loops in a Riemann surface
$\Lambda$ by 
\begin{equation}
\mu_{\Lambda}=\int_{0}^{\infty}\frac{1}{t}P^{\Lambda}(z,z,t)\mu_{\Lambda,z,t}^{\sharp}\,|dz|^{2}dt,\label{eq: loop_measure}
\end{equation}
where $P^{\Lambda}$ is the heat kernel in $\Lambda$ with Dirichlet
boundary conditions, and $\mu_{\Lambda,z,t}^{\sharp}$ is the Brownian
probability measure on the paths from $z$ to $z$ of duration $t$.
We will only need the conformal invariance of this measure, see \cite[Proposition 6]{lawler2004brownian}
which is stated for planar domains, but the proof, being a local computation,
extends verbatim to Riemann surfaces. Consider the annular region
$\mathcal{A}_{r,\alpha}=\{r\leq|z|\leq1\}/\{z\sim e^{i\alpha}z\}$
in the cone of angle $\alpha$, $\tilde{\mathcal{A}}_{r,\alpha}$
its universal cover, and let $E$ denote the set of loops in $\mathcal{A}_{r,\alpha}$
that wind around the annulus once counterclockwise. The map $\phi:z\mapsto-i\log z$
maps $\text{\ensuremath{\mathcal{A}_{r,\alpha}}}$ onto the cylinder
$\mathcal{O}_{\alpha,r}=\mathcal{S}_{r}/\{z\sim z+\alpha\},$ where
$\mathcal{S}_{r}=\{0\leq\im z\leq-\log r\},$ so, by the conformal
invariance, we have 
\begin{multline}
\int_{\mathcal{A}_{r,\alpha}}\int_{0}^{\infty}P^{\mathcal{\tilde{A}}_{r,\alpha}}(z,ze^{i\alpha},t)|dz|^{2}\ddt=\mu_{\mathcal{A}_{r,\alpha}}(E)=\\
=\mu_{\mathcal{O}_{\alpha,r}}(\phi(E))=\int_{\substack{\{0\leq\re z<\alpha\}\cap S_{r}}
}\int_{0}^{\infty}P^{\mathcal{S}_{r}}(z,z+\alpha,t)|dz|^{2}\ddt.\label{eq: univ_cover_proof}
\end{multline}
Note that by scaling invariance, the total $\mu_{\Cone{\alpha}}$
measure of the loops that wind around $\Cone{\alpha}$ and intersect
a given circle $|z|=r$ does not depend on $r$. We claim that it
is also finite. Indeed, by conformal invariance, we can pass to the
cylinder $\mathcal{O}_{\alpha}=\C/\{z\sim z+\alpha\}$, when the circle
is mapped to $l_{h}:=\{w:\im w=h\},$ and then use that for some $c,C>0,$
\[
\mu_{\mathcal{O}_{\alpha},z,t}^{\sharp}(\{\gamma:\gamma\cap l_{h}\neq\emptyset)\leq Ce^{-c\frac{(\im z-h)^{2}}{t}}.
\]
Also, $P^{\mathcal{O}_{\alpha}}(z,z,t)\sim Ct^{-\frac{1}{2}}$ as
$t\to\infty$. These two bounds imply that the contribution to (\emph{\ref{eq: loop_measure}})
from the points $z$ with $|\im z-h|>1$ is finite. Since the probability
that a bridge with a small $t$ winds around $\mathcal{O}_{\alpha}$
is exponentially small, the region $|\im z-h|\leq1$ also gives a
finite contribution.

Hence, up to $O(1)$ as $r\to0$, the left-hand side of (\ref{eq: univ_cover_proof})
equals 
\[
\int_{\mathcal{A}_{r,\alpha}}\int_{0}^{\infty}\tilde{P}(z,ze^{i\alpha},t)\ddt=\int_{\mathcal{A}_{r,\alpha}}|z|^{-2}\int_{0}^{\infty}\tilde{P}(1,e^{i\alpha},t)\ddt=-\alpha\log r\int_{0}^{\infty}\tilde{P}(1,e^{i\alpha},t)\ddt.
\]
We conclude by comparing this to the right-hand side of (\ref{eq: univ_cover_proof}),
which is, up to $O(1),$
\[
\int_{\substack{\{0\leq\re z<\alpha\}\cap S_{r}}
}\int_{0}^{\infty}P^{\C}(z,z+\alpha,t)\ddt=-\alpha\log r\int_{0}^{\infty}\frac{1}{2\pi t}e^{-\frac{\alpha^{2}}{2t}}\ddt=-\frac{\log r}{\pi\alpha}.
\]
\end{proof}
Since $\widetilde{\C\setminus\{0\}}$ also covers each of the cones
$\Cone{\alpha}\backsimeq\C/\{z\sim e^{i\alpha}z\}$, we have 
\[
\HK{\Cone{\alpha}}{x,y,t}=\sum_{k\in\Z}\tilde{P}(x,ye^{ik\alpha},t).
\]
We now can compute, using that $\sum_{k=1}^{\infty}1/k^{2}=\frac{\pi^{2}}{6}$,
\begin{multline}
I_{\C}^{\Cone{\alpha}}(1)=\int_{0}^{\infty}\left(\sum_{k\in\Z}\tilde{P}(1,e^{\alpha ik},t)-\tilde{P}(1,e^{2\pi ik},t)\right)\ddt\\
=\sum_{k\in\Z\setminus\{0\}}\left(\frac{1}{\pi(\alpha k)^{2}}-\frac{1}{\pi(2\pi k)^{2}}\right)=\frac{\pi}{3\alpha^{2}}-\frac{1}{12\pi}
\end{multline}
By the reflection principle, we have $\HK{\Corner^{\alpha}}{x,y,t}=\HK{\Cone{2\alpha}}{x,y,t}\pm\HK{\Cone{2\alpha}}{x,\bar{y},t}$
for $\Corner^{\alpha}=\Corner_{\Neu}^{\alpha}$ and $\Corner^{\alpha}=\Corner_{\Dir}^{\alpha}$
respectively, and hence, using that $\sum_{k\in\Z}(x+k)^{-2}=\pi^{2}\sin^{-2}\pi x$,
we get 
\begin{multline}
\int_{0}^{\frac{\alpha}{2}}I_{\H}^{\Corner^{\alpha}}(e^{i\theta})d\theta=\int_{0}^{\frac{\alpha}{2}}\sum_{k\in\Z\setminus\{0\}}\left(\frac{1}{\pi(2\alpha k)^{2}}-\frac{1}{\pi(2\pi k)^{2}}\pm\frac{1}{\pi(2\theta+2\alpha k)^{2}}\mp\frac{1}{\pi(2\theta+2\pi k)^{2}}\right)d\theta\\
=\frac{\pi}{24\alpha}-\frac{\alpha}{24\pi}\pm\frac{\pi}{4\alpha^{2}}\int_{0}^{\alpha/2}\left(\sin^{-2}\left(\frac{\pi}{\alpha}\theta\right)-\frac{\alpha^{2}}{\pi^{2}\theta^{2}}\right)d\theta\mp\frac{1}{4\pi}\int_{0}^{\alpha/2}\left(\sin^{-2}\theta-\frac{1}{\theta^{2}}\right)d\theta\\
=\frac{\pi}{24\alpha}-\frac{\alpha}{24\pi}\pm\frac{1}{4\pi}\cot\frac{\alpha}{2}.\label{eq: C_corner_D_or_N}
\end{multline}
Using that, by reflection principle applied to the line $\arg z=\alpha$,
we have 
\begin{eqnarray}
\HK{\Corner_{\Neu\Dir}^{\alpha}}{x,y,t} & = & \HK{\CornerD^{2\alpha}}{x,y,t}+\HK{\CornerD^{2\alpha}}{x,e^{2i\alpha}\bar{y},t};\label{eq:refl3}\\
\HK{\Corner_{\Dir\Neu}^{\alpha}}{x,y,t} & = & \HK{\Corner_{\Neu}^{2\alpha}}{x,y,t}-\HK{\Corner_{\Neu}^{2\alpha}}{x,e^{2i\alpha}\bar{y},t},
\end{eqnarray}
a similar straightforward but tedious computation yields 
\begin{equation}
\int_{0}^{\frac{\alpha}{2}}I_{\H_{\Neu}}^{\Corner_{\Dir\Neu}^{\alpha}}(e^{i\theta})d\theta=-\frac{\pi}{48\alpha}-\frac{\alpha}{24\pi}-\frac{1}{4\alpha}+\frac{1}{4\pi}\cot\frac{\alpha}{2};\label{eq: DN1}
\end{equation}
\begin{equation}
\int_{0}^{\frac{\alpha}{2}}I_{\H_{\Dir}}^{\Corner_{\Neu\Dir}^{\alpha}}(e^{i\theta})d\theta=-\frac{\pi}{48\alpha}-\frac{\alpha}{24\pi}+\frac{1}{4\alpha}-\frac{1}{4\pi}\cot\frac{\alpha}{2}.\label{eq: DN2}
\end{equation}
Now we are ready to collect the values of $C_{\Corner^{\alpha}}$
for $\alpha\leq\pi$: $\hat{C}_{\alpha}^{\Dir\Dir}$ and $\hat{C}_{\alpha}^{\Neu\Neu}$
consist of two equal terms given by (\ref{eq: C_corner_D_or_N}),
with the cotangent terms canceling out the corresponding terms in
(\ref{eq: C_corner_implicit}), while $\hat{C}_{\alpha}^{\Dir\Neu}$
is given by the sum of (\ref{eq: DN1}) and (\ref{eq: DN2}) above.
If $\alpha>2\pi$, then we need, in addition, to compute the contribution
of the third term in (\ref{eq: C_corner_before}) which is done similarly,
and change the integration limits in (\ref{eq: C_corner_D_or_N}),
(\ref{eq: DN1}\textendash \ref{eq: DN2}) to $\frac{\pi}{2};$ we
leave it to the reader to check that the answer is (unsurprisingly)
given by the same analytic expression in $\alpha$.

In the case of a puncture, we can compute 
\begin{multline*}
I_{\C}^{\C\setminus\{0\},\varphi_{p}}(1)=\int_{0}^{\infty}\left(\sum_{k\in\Z}^{\infty}\Tr M^{k}\tilde{P}(1,e^{2\pi ik},t)-d\cdot\sum_{k\in\Z}\tilde{P}(1,e^{2\pi ik},t)\right)\ddt\\
=\sum_{k\in\Z\setminus\{0\}}^{\infty}\left(\Tr M^{k}-d\right)\frac{1}{\pi(2\pi k)^{2}}=\frac{1}{2\pi}\frac{1}{\pi^{2}}\sum_{k=1}^{\infty}\left(\re\Tr M^{k}-d\right)\frac{1}{k^{2}}.
\end{multline*}

\section{Explicit computation of the constant $B$ for some lattices}

\begin{figure}

\includegraphics[width=0.8\textwidth]{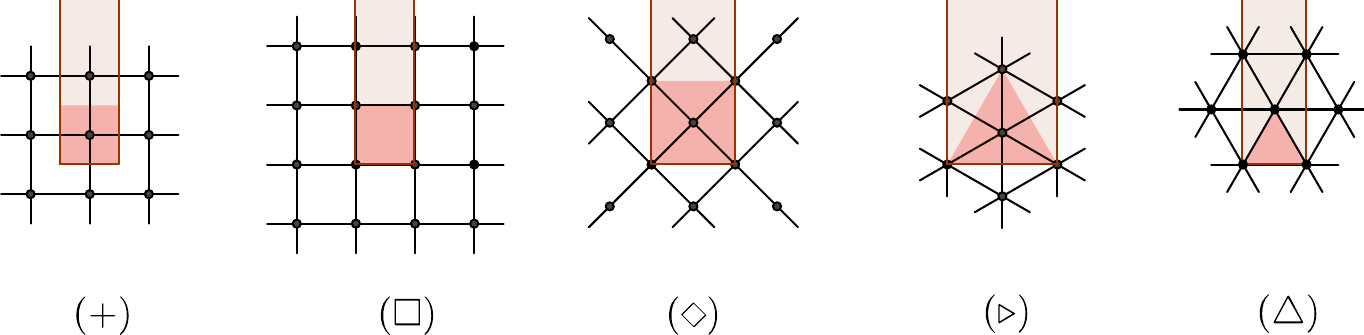}\caption{\label{fig:NiceLattices}The lattices considered in Section \ref{sec: constantsB},
their fundamental domains and half-strips $R_{\infty}^{\delta_{0}}$}
\end{figure}
\label{sec: constantsB}In this section, we show how the exact values
of $B_{\Neu}=-B_{\Dir}$ can be computed for several nice lattices,
see Figure \ref{fig:NiceLattices}. We assume $d=1,$ since, as remarked
before, the constants $A,B_{\Dir},B_{\Neu}$ depend linearly on $d.$
We note that the values $B_{\Neu}^{\LatticeA}$ and $B_{\Dir}^{\LatticeB}$
are related to each other by planar UST duality, see \cite[Section 5]{DuplantierDavid},
therefore, in fact, each of $B_{\Neu}^{\LatticeA},B_{\Neu}^{\LatticeB},B_{\Dir}^{\LatticeA},B_{\Dir}^{\LatticeB}$
can be deduced from the results in \cite{DuplantierDavid}. Similarly,
it can be deduced from duality that $B_{\Neu}^{\LatticeC}=B_{\Dir}^{\LatticeC},$
and hence $B_{\Neu}^{\LatticeC}=B_{\Dir}^{\LatticeC}=0.$ Here, we
give a direct self-contained computation.

For each of those lattices, the constant $\hat{B}_{\Neu}=-\hat{B}_{\Dir}$
can be expressed in the form
\[
\hat{B}_{\Neu}=\sum_{x\in R_{\infty}^{\delta_{0}}(0)}I_{\C^{\delta_{0}}}^{\H^{\delta_{0}}}(x)=\sum_{x\in R_{\infty}^{\delta_{0}}(0)}\int_{0}^{\infty}P^{\mathbb{C}^{\delta_{0}}}(x,\bar{x},t)\,\frac{dt}{t}=\int_{0}^{\infty}\sum_{y\in S}P^{\mathbb{C}^{\delta_{0}}}(0,y,t)\,\frac{dt}{t},
\]
where $S:=\{\bar{x}-x\mid x\in R_{\infty}^{\delta_{0}}(0)\}$, and
we shift the lattice ($+$) so that it has a vertex at the origin.
For the lattice $(\LatticeA)$ (resp. $(\LatticeB)$), we have $S=\{v\in\mathbb{C}^{\delta_{0}}:\re v=0,\im v<0\text{ odd}\}$
(resp., $S=\{v\in\mathbb{C}^{\delta_{0}}:\re v=0,\im v<0\text{ even}\}$).
For the lattices $(\LatticeC)$, $(\LatticeD)$, $(\LatticeE)$, we
have $S=\{v\in\mathbb{C}^{\delta_{0}}:\re v=0,\im v<0\}.$

Denote $(X_{t},Y_{t}):=\gamma_{t}^{\mathbb{C}^{\delta_{0}}}$, started
at the origin. Since $P^{\mathbb{C}^{\delta_{0}}}(0,y,t)=P^{\mathbb{C}^{\delta_{0}}}(0,\bar{y},t)$,
one has 
\begin{equation}
2\hat{B}_{\Neu}^{\ast}=\begin{cases}
\int_{0}^{\infty}\P(X_{t}=0\text{ and }Y_{t}\text{ is odd})\,\frac{dt}{t}, & \text{ \ensuremath{\ast=\LatticeA,}}\\
\int_{0}^{\infty}\left(\P(X_{t}=0\text{ and }Y_{t}\text{ is even})-\P(X_{t}=Y_{t}=0)\right)\,\frac{dt}{t}, & \text{ \ensuremath{\ast=\LatticeB},}\\
\int_{0}^{\infty}\left(\P(X_{t}=0)-\P(X_{t}=Y_{t}=0)\right)\,\frac{dt}{t}, & \text{ \ensuremath{\ast=\LatticeC,\LatticeD,\LatticeE}\ensuremath{.}}
\end{cases}\label{eq: 2BNhat}
\end{equation}
Let $\rwZ_{t}$ be the continuous time symmetric random walk on $\mathbb{Z}$
that moves with intensity $1$. Note that in the cases $*=\LatticeA,\LatticeB,\LatticeC,\LatticeD,$
we have $X_{t}\stackrel{\mathcal{D}}{=}\alpha^{\ast}\rwZ_{\beta^{\ast}t},$
with $\beta^{\LatticeA}=\beta^{\LatticeB}=1,$ $\beta^{\LatticeC}=2$,
$\beta^{\LatticeD}=\frac{2}{\sqrt{3}}.$ In the cases $*=\LatticeA$
and $*=\LatticeB$, $X_{t}$ and $Y_{t}$ are independent. We compute
$\P(\rwZ_{t}\text{ is even})=\frac{1}{2}(1+e^{-2t})=:K^{\Z_{2}}(t)$
and $\P(\rwZ_{t}=0)=e^{-t}\I(t)=:K^{\Z}(t)$ where $\I(t)=\frac{1}{\pi}\int_{0}^{\pi}e^{t\cos\phi}\,d\phi$
is the modified Bessel function of the first kind (see \cite{CJK1}
or the computation in the end of this section). Recall from Section
\ref{sec: the Key formula} that in all cases, $\int_{0}^{\infty}\left(\P(X_{t}=Y_{t}=0)-e^{-t}\right)\,\frac{dt}{t}=-A$~. 

Below we will use the formula 
\begin{equation}
\int_{0}^{\infty}\left(e^{-at}I_{0}(t)-e^{-bt}\right)\,\frac{dt}{t}=\log(a-\sqrt{a^{2}-1})+\log2b\text{ for }a\geq1,b>0,\label{eq: Bessel_Laplace}
\end{equation}
which follows from the formula for the Laplace transform of the Bessel
function, $\int_{0}^{\infty}e^{-at}\I(t)\,dt=1/\sqrt{a^{2}-1},$ $a>1$,
by taking the derivative of (\ref{eq: Bessel_Laplace}) with respect
to $a$ and matching the behavior as $a\to\infty.$

With these observations, we are ready to evaluate the right-hand side
of (\ref{eq: 2BNhat}) for $*=\LatticeA,\LatticeB,\LatticeC,\LatticeD$:

\begin{multline*}
\hat{B}_{\Neu}^{\LatticeA}=\frac{1}{2}\int_{0}^{\infty}K^{\Z}(t)(1-K^{\Z_{2}}(t))\,\frac{dt}{t}\\
=\frac{1}{4}\left[\int_{0}^{\infty}\left(e^{-t}I_{0}(t)-e^{-t}\right)\,\frac{dt}{t}-\int_{0}^{\infty}\left(e^{-3t}I_{0}(t)-e^{-t}\right)\,\frac{dt}{t}\right]=\frac{1}{4}\log(3-\sqrt{8})=\frac{1}{2}\log(\sqrt{2}-1),
\end{multline*}

\begin{multline*}
\hat{B}_{\Neu}^{\LatticeB}=\frac{1}{2}\int_{0}^{\infty}K^{\Z}(t)\left(K^{\Z_{2}}(t)-K^{\Z}(t)\right)\,\frac{dt}{t}\\
=\frac{1}{2}\left[\underbrace{\int_{0}^{\infty}\left(e^{-t}I_{0}(t)-e^{-t}\right)\,\frac{dt}{t}}_{=\log2}-\underbrace{\int_{0}^{\infty}[K^{\Z}(t)]^{2}-e^{-t}\,\frac{dt}{t}}_{=-A}-\underbrace{\int_{0}^{\infty}K^{\Z}(t)(1-K^{\Z_{2}}(t))\,\frac{dt}{t}}_{=\log(\sqrt{2}-1)}\right],
\end{multline*}
\begin{multline*}
\hat{B}_{\Neu}^{\LatticeC}=\frac{1}{2}\int_{0}^{\infty}\left(K^{\Z}(2t)-\P(X_{t}=Y_{t}=0)\right)\,\frac{dt}{t}\\
=\frac{1}{2}\left[\underbrace{\int_{0}^{\infty}\left(e^{-2t}I_{0}(2t)-e^{-t}\right)\,\frac{dt}{t}}_{=0}-\underbrace{\int_{0}^{\infty}\left(\P(X_{t}=Y_{t}=0)-e^{-t}\right)\,\frac{dt}{t}}_{=-A}\right]
\end{multline*}
\begin{multline*}
\hat{B}_{\Neu}^{\LatticeD}=\frac{1}{2}\int_{0}^{\infty}\left(K^{\Z}(2/\sqrt{3}\cdot t)-\P(X_{t}=Y_{t}=0)\right)\,\frac{dt}{t}\\
=\frac{1}{2}\left[\underbrace{\int_{0}^{\infty}\left(e^{-2/\sqrt{3}\cdot t}I_{0}(2/\sqrt{3}\cdot t)-e^{-t}\right)\,\frac{dt}{t}}_{=\log\sqrt{3}}-\underbrace{\int_{0}^{\infty}\left(\P(X_{t}=Y_{t}=0)-e^{-t}\right)\,\frac{dt}{t}}_{=-A}\right]
\end{multline*}
and plugging into (\ref{eq: B}) gives 
\begin{align*}
B_{\mathcal{N}}^{\LatticeA} & =-B_{\mathcal{D}}^{\LatticeA}=-\frac{1}{2}\log(\sqrt{2}-1);\\
B_{\mathcal{N}}^{\LatticeB} & =-B_{\mathcal{D}}^{\LatticeB}=\frac{1}{2}(\log(\sqrt{2}-1)-\log2);\\
B_{\mathcal{N}}^{\LatticeC} & =-B_{\mathcal{D}}^{\LatticeC}=0\\
B_{\mathcal{N}}^{\LatticeD} & =-B_{\mathcal{D}}^{\LatticeD}=-\frac{1}{4}\log3.
\end{align*}

In the case $\star=\LatticeE,$ $X_{t}$ is equal in law to a scaled
copy of a continuous time walk on $\mathbb{Z}$ jumping by $-2,-1,1,2$
with intensities $\frac{\sqrt{3}}{6},\frac{2\sqrt{3}}{6},\frac{2\sqrt{3}}{6},\frac{\sqrt{3}}{6}$
respectively. Its generator is $\tilde{\Delta}=f(\Delta_{\mathbb{Z}})$,
where $f(x):=\frac{2\sqrt{3}}{3}(x^{2}+3x)$ and $\Delta_{\mathbb{Z}}$
is the generator of $\rwZ_{t}$. When considered on $(\Z/M\Z),$ both
$\Delta$ and $\tilde{\Delta}$ have eigenfunctions $\psi_{m}(x)=e^{\frac{2\pi imx}{M}}$,
and the eigenvalues are $\lambda_{m}=1-\cos\frac{2\pi im}{M}$ and
$\tilde{\lambda}_{m}=f(\lambda_{m})$, respectively. Thus, for any
$M$ we explicitly find that $\P(X_{t}\in M\mathbb{Z})=\frac{1}{M}\sum_{m=1}^{M}\exp(-tf(1-\cos\frac{2\pi im}{M}))$
and, passing to the limit as $M\to\infty$, we have 
\[
\P(X_{t}=0)=\int_{0}^{1}\exp(-tf(1-\cos{2\pi is}))\,ds,
\]
so 
\begin{multline*}
B_{\mathcal{N}}^{\LatticeE}=-B_{\mathcal{D}}^{\LatticeE}=-\frac{1}{2}\int_{0}^{\infty}\left(\P(X_{t}=0)-e^{-t}\right)\,\frac{dt}{t}\\
=-\frac{1}{2}\int_{0}^{1}\int_{0}^{\infty}\left(\exp(-tf(1-\cos{2\pi is}))-\exp(-t)\right)\,\frac{dt}{t}\,ds=\frac{1}{2}\int_{0}^{1}\log f(1-\cos{2\pi is})\,ds
\end{multline*}

\section{Proof of Theorem \ref{thm: main}}

\label{sec: proof_of_the_main}To identify the universal constant
term in (\ref{eq: main}), we derive an analog of the key formula
in the continuum. This is slightly more delicate than in the discrete
because the continuous heat kernel is more singular at $t=0$, in
particular, there is no $s$ for which the integral $\int_{0}^{\infty}P^{\C}(x,x,t)t^{s-1}dt$
converges. Therefore, we perform the analytic continuation in two
steps. Starting with (\ref{eq: zet_int_theta_cont}), we write 

\begin{align}
\Gamma(s)\zeta^{\Omega,\varphi}(s) & =\int_{1}^{\infty}\left(\Theta^{\Omega,\varphi}(t)-k\right)t^{s-1}\,dt+\int_{\Omega}\int_{0}^{1}\left(\Tr P^{\Omega,\varphi}(x,x,t)-d\cdot P^{\Omega_{x}}(x,x,t)\right)t^{s-1}\,dtdx\label{eq: regul}\\
 & +d\int_{\Omega}\int_{0}^{1}\left(P^{\Omega_{x}}(x,x,t)-P^{\C}(x,x,t)\right)t^{s-1}\,dxdt+d|\Omega|\int_{0}^{1}\frac{1}{2\pi t}t^{s-1}dt-\int_{0}^{1}kt^{s-1}\,dt.\nonumber 
\end{align}
The first two terms converge for all $s$ and rest converge when $\re s>1$.
The last two terms evaluate to $\frac{d|\Omega|}{2\pi(s-1)}-\frac{k}{s}$.
Let's have a closer look at the third term. To construct its analytic
continuation to $s=0$, we first remark that it is, in fact, already
analytic for $\re s>\frac{1}{2}.$ Indeed, for $x\in\H$, let $\rho:=\im x$;
by Brownian scaling, $P^{\H}(x,x,t)=\rho^{-2}P^{\H}(i,i,t/\rho^{2})$
and thus 
\[
\int_{0}^{1}\left|P^{\H}(x,x,t)-P^{\C}(x,x,t)\right||t^{s-1}|\,dt=(2\pi)^{-1}\rho^{2\re s-2}\int_{0}^{\rho^{-2}}e^{-\frac{2}{t}}|t^{s-2}|\,dt\leq\begin{cases}
C\rho^{2\re s-2}, & \re s<1;\\
C|\log\rho|, & \re s=1,\\
C, & \re s>1.
\end{cases}
\]
Hence, the integral in the third term in (\ref{eq: regul}) over $x:\Omega_{x}\backsimeq\H$
converges absolutely for all $s$ with $\re s>\frac{1}{2}.$ For $\Omega_{x}$
a cone, the same scaling argument leads to the same bound with $\rho$
the distance to the tip, thus, the contribution of those $x$ converges
absolutely for $\re s>0$. Finally, near a corner, we break the integral
down as in (\ref{eq: corner_breakdown}) and treat the first three
terms as in the cone case and the last two as in the half-plane case.

Now, for $\frac{1}{2}<\re s<1$, we can write 
\[
\int_{0}^{1}\left(P^{\Omega_{x}}(x,x,t)-P^{\C}(x,x,t)\right)t^{s-1}\,dt=I_{\C}^{\Omega_{x}}(x,s)-\int_{1}^{\infty}P^{\Omega_{x}}(x,x,t)t^{s-1}\,dt-\frac{1}{2\pi(s-1)},
\]
where $I_{\Lambda_{1}}^{\Lambda_{2}}(x,s):=\int_{0}^{\infty}\left(P^{\Lambda_{2}}(x,x,t)-P^{\Lambda_{1}}(x,x,t)\right)t^{s-1}\,dt,$
cf. the notation in Section \ref{sec: Local_contr}. The last two
terms give a contribution that is analytic over $\re s<1,$ hence
our task is to analytically continue $\int_{\Omega}I_{\C}^{\Omega_{x}}(x,s).$
We split it into contributions of neighborhoods of cones, boundary
segments, corners and punctures, and leverage the fact that for the
scaling on each of $\Omega_{x},$ one has $I_{\C}^{\Omega_{x}}(x,s)=a^{2-2s}I_{\C}^{\Omega_{x}}(ax,s)$.
For $\Omega_{x}\backsimeq\Cone{\alpha}\backsimeq\C/\{z\sim e^{i\alpha}z\},$
we thus get
\begin{equation}
\int_{x\in\Cone{\alpha}:|x|<r}I_{\C}^{\Cone{\alpha}}(x,s)=I_{\C}^{\Cone{\alpha}}(1,s)\cdot\int_{x\in\Cone{\alpha}:|x|<r}|x|^{2s-2}=\alpha\cdot I_{\C}^{\Cone{\alpha}}(1,s)\cdot\frac{r^{2s}}{2s},\label{eq: cone_cont_continuation}
\end{equation}
This is analytic for $\re s\leq1$ when divided by $\Gamma(s)$, which
is the only thing we care about; when we eventually evaluate the derivative
at $0$, we get some value that can be absorbed into the constant
$D_{\Cone{\alpha}}.$ For the contributions of boundary segments and
boundary corners, we split their neighborhoods as in Sections \ref{subsec: bdry_segments}
and \ref{subsec: bdry_corners}. The integral over $R_{r}$ yields
\begin{equation}
\int_{\im x\leq r,0\leq\re x\leq1}I_{\C}^{\H}(x,s)\,dx=I_{\C}^{\H}(i,s)\cdot\int_{0}^{r}\rho^{2s-2}d\rho=I_{\C}^{\H}(i,s)\cdot\frac{r^{2s-1}}{2s-1}.\label{eq: side_contrib_zeta_zero}
\end{equation}
For the other contributions, note that those of $Y_{0,1}$ cancel
out as in Sections (\ref{subsec: bdry_segments}\textendash \ref{subsec: bdry_corners}),
and other contributions can be treated as in the cone case, eventually
contributing a constant that can be absorbed into $D_{\Corner^{\alpha}};$
same applies to punctures. Dividing (\ref{eq: regul}) by $\Gamma(s)$
and differentiating at $s=0$, we get 
\begin{multline}
-\log\Det_{\zeta}\Delta^{\Omega,\varphi}=\int_{1}^{\infty}\left(\Theta^{\Omega^{\delta},\varphi}(t)-k\right)\ddt+\int_{\Omega}\int_{0}^{1}\left(\Tr P^{\Omega,\varphi}(x,x,t)-d\cdot P^{\Omega_{x}}(x,x,t)\right)\ddt dx\\
-d\int_{\Omega}\int_{1}^{\infty}P^{\Omega_{x}}(x,x,t)\ddt dx-k\EulerGamma-I_{\C}^{\H_{\Dir}}(i)r^{-1}|\partial_{\Dir}\Omega|-I_{\C}^{\H_{\Neu}}(i)r^{-1}|\partial_{\Neu}\Omega|+\sum_{p\in\Cone{}\cup\Corner\cup\Puncture}\tilde{D}_{p}.\label{eq: logdet_cont}
\end{multline}

\begin{rem}
\label{rem: zeta_0}The equation (\ref{eq: regul}) also allows one
to see that in fact, $C=-2\zeta_{\Omega}(0).$ Since $\frac{1}{\Gamma(s)}\sim s$
near the origin, $\zeta_{\Omega}(0)$ only receives the contributions
from those terms in the right-hand side of (\ref{eq: regul}) that
have a pole at the origin, that is, the third term and $-\frac{k}{s}$.
The third term is split into the contributions from near boundary
segments, i.e., where $\Omega_{x}=\H,$ and the contributions from
neighborhoods of the punctures, the cone tips and the corners. The
former is evaluated at (\ref{eq: side_contrib_zeta_zero}) and has
no pole at the origin, and the contribution from a cone tip is computed
in (\ref{eq: cone_cont_continuation}), with the residue at $s=0$
matching the coefficient found in (\ref{eq: cone_contr}). A similar
result holds for boundary corners and punctures. 
\end{rem}

We are in the position to put everything together and prove Theorem
\ref{thm: main}:
\begin{proof}[Proof of Theorem \ref{thm: main}.]
 We look at the key formula (\ref{eq: key_formula}) term by term.
The first three terms converge to the first three terms of (\ref{eq: logdet_cont})
by Lemma \ref{lem: CLT_convergence} and Corollary \ref{cor: third_term}.
The fourth term can be broken into the contributions of neighborhoods
of conical singularities, punctures, and the boundary, whose asymptotics
is given by (\ref{eq: cone_contr}), (\ref{eq: puncture}), (\ref{eq: bdry_contr}),
and the constants $C_{p}$ are made explicit in Section \ref{sec: logterm_explicit}.
The fifth term gives the ``volume'' contribution that is discussed
in the end of Section \ref{sec: the Key formula}.
\end{proof}
\begin{rem}
\label{rem: more_general}The underlying triangulation or quadrangulation
structure of $\Omega$ was only used in the discretization procedure,
but otherwise it plays no role in the proof. While we found no elegant
way to state Theorem \ref{thm: main} in a more general form that
would account for that, we give an example: let $\T_{\omega_{1}^{\delta},\omega_{2}^{\delta}}=\C^{\delta}/(\omega_{1}^{\delta}\Z+\omega_{2}^{\delta}\Z)$
be a sequence of discretized tori, whose periods $\omega_{1,2}^{\delta}$
converge as $\delta\to0$ to $\omega_{1,2}\in\C\setminus\{0\}$ with
$\omega_{2}/\omega_{1}\notin\R$. Then, the asymptotics (\ref{eq: main})
holds, and since there are no boundary or corners/cones, it takes
the form 
\[
\log\Det\Delta^{\Od,\varphi}=A\cdot|\T_{\omega_{1}^{\delta},\omega_{2}^{\delta}}|-2\dim\ker\Delta^{\Od,\varphi}\cdot\log\delta+\log\Det_{\zeta}\Delta^{\Omega,\varphi}+o(1),\quad\delta\to0.
\]
This extends the results in \cite{DuplantierDavid,CJK1,CJK2,Friedli}.
The above proof applies verbatim; also note that the symmetries of
the lattice (other than the double periodicity) are not needed here,
as long as the embedding is such that the random walk converges to
the Brownian motion. 
\end{rem}

\section{Proof of the Lemmas}

\label{sec: proof_of_Lemmas}

In preparation for the proof of Lemmas \ref{lem: symmetry}, \ref{lem: domain},
we prove the following bounds on the transition kernel $P^{\hat{\Omega}}(x,y,t)$
of for the process $\hat{\gamma}$ defined in Section \ref{sec: heat_kernel_continuum},
in particular, establishing the existence of this transition kernel.
\begin{lem}
The transition kernel $P^{\hat{\Omega}}(x,y,t)$ exists, and there
is a constant $C>0$ such that if $B_{R}(x_{0})\subset\hat{\Omega}$
is isometric to a Euclidean disc, then 
\begin{align}
P^{\hat{\Omega}}(x,y,t) & \leq\frac{C}{R^{2}}e^{-\frac{R^{2}}{8t}}, & \forall x\notin B_{R}(x_{0}),y\in B_{\frac{R}{2}}(x_{0}),t>0;\label{eq: comparison1}\\
|P^{\hat{\Omega}}(x,y,t)-P^{\C}(x,y,t)| & \leq\frac{C}{R^{2}}e^{-\frac{R^{2}}{2t}}, & \forall x,y\in B_{\frac{R}{2}}(x_{0}),t>0.\label{eq: comparison2}
\end{align}
\end{lem}

\begin{proof}
If $\check{\gamma}$ is the Brownian motion in $\C$ started at the
origin and $\tau=\min\{t:|\check{\gamma}(t)|\geq r\},$ then, by symmetry,
$\P(|\check{\gamma}_{t}|\geq r|\tau<t)\geq\frac{1}{2}.$ Therefore,
\begin{equation}
\P(\diam(\check{\gamma}_{[0,t]})>2r)\leq\P(\tau_{r}<t)\leq2\P(|\check{\gamma}_{t}|>r)=2e^{-\frac{r^{2}}{2t}}.\label{eq: Brownian_short_time}
\end{equation}
Define the sequence of stopping times $\tau_{0}=0,$ $\tau_{2k+1}=\min\{t>\tau_{2k}:|\hat{\gamma}_{t}-x_{0}|=\frac{3}{4}R\},$
$\tau_{2k}=\min\{t>\tau_{2k-1}:|\hat{\gamma}_{t}-x_{0}|=R\}.$ If
$\phi$ is any non-negative continuous function supported inside $B_{R/2}(x_{0}),$
we have $\E^{x}\phi(\hat{\gamma}_{t})=\sum_{k=1}^{\infty}\E^{x}[\phi(\hat{\gamma}_{t})\ind_{t\in[\tau_{2k-1},\tau_{2k}]}].$
Conditionally on $\F(\hat{\gamma}_{[0,\tau_{2k-1}]})$ and on the
event $t\in[\tau_{2k-1},\tau_{2k}],$ the distribution of $\hat{\gamma}_{t}$
is that of the (time-shifted) Brownian motion started at a point on
$\pa B_{\frac{3R}{4}}(x_{0})$ and conditioned to stay in $B_{R}(x_{0}).$
If $\tilde{P}(x,y,t)$ denotes the heat kernel of this conditioned
Brownian motion, then we have, for some $C>0,$ 
\begin{equation}
\sup_{x\in\pa B_{\frac{3R}{4}}(x_{0}),y\in B_{\frac{R}{2}}(x_{0}),t>0}\tilde{P}(x,y,t)\leq\frac{C}{R^{2}}.\label{eq: conditioned_to_stay}
\end{equation}
Indeed, by Brownian scaling, we may assume $R=1$. Write $P^{\D}(x,y,t)=\sum_{i}e^{-\lambda_{i}t}\psi_{i}(x)\psi_{i}(y)$
and $Q(x,t)=\int_{\D}P^{\D}(x,y,t)dy$, where $\psi_{i}$ and $\lambda_{i}$
are normalized eigenfunctions and eigenvalues of the Laplacian in
the unit disc $\D$ with zero boundary conditions. We have $\tilde{P}(x,y,t)=P^{\D}(x,y,t)/Q(x,t),$
and $P^{\D}(x,y,t)\leq P^{\C}(x,y,t)=\frac{1}{2\pi t}e^{-\frac{|x-y|^{2}}{2t}}.$
From this, it follows that $\tilde{P}(x,y,t)$ is bounded for small
$t$ and $\tilde{P}(x,y,t)\stackrel{t\to\infty}{\longrightarrow}\psi_{1}(y)$
for large $t,$ and (\ref{eq: conditioned_to_stay}) follows. We arrive
at 
\begin{multline*}
\E^{x}\phi(\hat{\gamma}_{t})\leq\sum_{k=0}^{\infty}\frac{C}{R^{2}}\left(\int_{B_{\frac{R}{2}}(x_{0})}\phi\right)\P(t\in[\tau_{2k-1},\tau_{2k}])\\
\leq\frac{C}{R^{2}}\left(\int_{B_{\frac{R}{2}}(x_{0})}\phi\right)\P(\tau_{1}<t)\leq\frac{2C}{R^{2}}e^{-\frac{R^{2}}{8t}}\left(\int_{B_{\frac{R}{2}}(x_{0})}\phi\right)
\end{multline*}
which proves (\ref{eq: comparison1}). To prove (\ref{eq: comparison2}),
let $\tau_{R}=\min\{s:|\hat{\gamma}_{s}-x_{0}|=R\}$ and note that
the Brownian motions $\hat{\gamma},\check{\gamma}$ are coupled to
coincide up to $\tau,$ so that 
\[
|\E^{x}\phi(\hat{\gamma}_{t})-\E^{x}\phi(\check{\gamma}_{t})|\leq(\E(\E^{\hat{\gamma}_{\tau}}(\phi(\hat{\gamma}_{t})|\tau<t)+\E^{\hat{\gamma}_{\tau}}(\phi(\check{\gamma}_{t})|\tau<t))\P(\tau_{R}<t),
\]
and (\ref{eq: comparison2}) follows by (\ref{eq: comparison1}) and
(\ref{eq: Brownian_short_time}).
\end{proof}
\begin{proof}[Proof of Lemma \ref{lem: symmetry}.]
 We first check that $\hat{\gamma}_{t}$ is a $\Leb(\hat{\Omega})$-symmetric
Markov process. For $\hat{x}_{0},\hat{y}_{0}\in\hat{\Omega}$ and
small $\eps_{x,y}>0,$ fix a ball $B_{\eps}(\check{x}_{0})\subset\C$
and an isometry $\tilde{\sigma}$ of $B_{\eps_{x}}(\check{x}_{0})$
to $B_{\eps_{x}}(\hat{x}_{0}).$ Let $\check{\gamma}:[0,\tau]\to\C$
be a path with $\check{\gamma}(0)\in B_{\eps_{x}}(\check{x}_{0}).$
If its lift $\hat{\gamma}$ ends up in $B_{\eps_{y}}(\hat{y}_{0})$,
we denote by $\sigma_{y}(\check{\gamma},\tilde{\sigma})$ the isometry
of $B_{\eps_{y}}(\check{y}_{0})$ to $B_{\eps_{y}}(\hat{y}_{0})$
obtained by extending $\tilde{\sigma}$ along $\check{\gamma}$ (otherwise,
$\sigma_{y}(\check{\gamma},\tilde{\sigma})$ is undefined.) We denote
by $\mathcal{Y}$ the set of all $\check{y}_{0}$ obtained in this
way, and by $\Sigma_{y}$ the set of all isometries $\sigma_{y}(\check{\gamma},\tilde{\sigma})$
modulo shifts. If $\hat{\gamma}$ ends up in $B_{\eps_{x}}(\hat{x}_{0}),$
we define $\sigma_{x}(\check{\gamma},\tilde{\sigma}),$ $\mathcal{X}$
and $\Sigma_{x}$ similarly. Note that $\mathcal{X},\mathcal{Y}$
are discrete sets in the plane, and $\Sigma_{x},\Sigma_{y}$ are finite
with $|\Sigma_{x}|=|\Sigma_{y}|$.

We now write, for the coupled Brownian motions $\check{\gamma},\hat{\gamma},$
\begin{multline*}
\int_{\hat{x}\in B_{\eps_{x}}(\hat{x}_{0})}\P^{\hat{x}}(\hat{\gamma}_{t}\in B_{\eps_{y}}(\hat{y}_{0}))\,d\hat{x}\\
=\frac{1}{|\Sigma_{x}|}\sum_{\sigma\in\Sigma_{x}}\sum_{\sigma'\in\Sigma_{y}}\sum_{\check{y}_{0}\in\mathcal{Y}}\int_{\check{x}\in B_{\eps_{x}}(\check{x}_{0})}\int_{\check{y}\in B_{\eps_{y}}(\check{y}_{0})}\P^{\check{x}}\left(\sigma_{y}(\check{\gamma}_{[0,t]},\sigma)\sim\sigma'|\check{\gamma}_{t}=\check{y}\right)P^{\C}(\check{x},\check{y},t)\,d\check{x}d\check{y},
\end{multline*}
where $\sim$ stands for equality of isometries modulo shifts. By
the reversibility of the planar Brownian notion, denoting by $\check{\gamma}^{-1}$
the time-reversal of $\check{\gamma},$ 
\[
\P^{\check{x}}\left(\sigma_{y}(\check{\gamma}_{[0,t]},\sigma)\sim\sigma'|\check{\gamma}_{t}=\check{y}\right)=\P^{\check{x}}\left(\sigma_{x}(\check{\gamma}_{[0,t]}^{-1},\sigma')\sim\sigma|\check{\gamma}_{t}=\check{y}\right)=\P^{\check{y}}\left(\sigma_{x}(\check{\gamma}_{[0,t]},\sigma')\sim\sigma|\check{\gamma}_{t}=\check{x}\right).
\]
Also, $P^{\C}(\check{x},\check{y},t)=P^{\C}(\check{y},\check{x},t),$
and if we shift every point in $\mathcal{Y}$ to some fixed point
in this set, then the point $\check{x}_{0}$ gets shifted to every
point in $\mathcal{X}.$ Collecting all these observations together,
we end up with 
\[
\int_{\hat{x}\in B_{\eps_{x}}(\hat{x}_{0})}\P^{\hat{x}}(\hat{\gamma}_{t}\in B_{\eps_{y}}(\hat{y}_{0}))\,d\hat{x}=\int_{\hat{y}\in B_{\eps_{y}}(\hat{y}_{0})}\P^{\hat{y}}(\hat{\gamma}_{t}\in B_{\eps_{x}}(\hat{x}_{0}))\,d\hat{y},
\]
which implies that $\hat{\gamma}_{t}$ is $\Leb(\hat{\Omega})$-symmetric.
The transition kernel of the reflected process can be written as $\hat{P}(x,y,t)+\hat{P}(x,\overline{y},t),$
implying symmetry, and killing upon hitting a closed subset preserves
the class of symmetric processes.

For the boundedness, the bound (\ref{eq: comparison1}) and the symmetry
imply that $P^{\hat{\Omega}}(x,y,t)\leq CR^{-2}$ if either $x$ or
$y$ are at distance at least $R$ from any conical tip of $\hat{\Omega}.$
Now, fix a small $R>0$, and let $x\in B_{R}(x_{0}),$ where $x_{0}$
is a conical tip. If $y\notin B_{R}(x_{0}),$ then $P^{\hat{\Omega}}(x,y,t)\leq CR^{-2}$
by the strong Markov property with respect to exit time from $B_{R}(x_{0}).$
Else, coupling the Brownian motions in $\hat{\Omega}$ and in the
infinite cone $\mathcal{C^{\alpha}}_{x_{0}}$ up to exiting $B_{R}(x_{0}),$
we similarly obtain 
\[
P^{\hat{\Omega}}(x,y,t)\leq CR^{-2}+P^{\mathcal{C}_{x_{0}}}(x,y,t).
\]
Coupling to the Brownian motion in $\C$, we have $P^{\mathcal{C}_{x_{0}}}(x,y,t)\leq\sum_{\check{y}\in\mathcal{Y}}P^{\C}(\check{x},\check{y},t)\leq\text{\ensuremath{\mathrm{const}}}(t,\alpha)$
where $\mathcal{Y}$ is the finite set of endpoints of paths in $\C$
starting at $\check{x}$ that lift to a path from $x$ to $y.$

For the smoothness, if $B_{R}(x_{0})$ is Euclidean, $y\in B_{\frac{R}{2}}(x_{0})$
and $x\notin B_{R}(x_{0}),$ we can write as in the proof of (\ref{eq: comparison1}):

\[
P^{\hat{\Omega}}(x,y,t)=\sum_{k}\E\E\left[\left.\tilde{P}(\gamma_{\tau_{2k-1}},y,t-\tau_{2k-1})\ind_{t\in\tau_{2k-1},\tau_{2k}}\right|\F_{\tau_{2k-1}}\right],
\]
 and note that all the derivatives of $\tilde{P}(x,y,t)$ are uniformly
bounded over $x\in\pa B_{\frac{3R}{4}}(x_{0}),$ $y\in B_{\frac{R}{2}}(x_{0}),$
$t>0.$ Since $R$ is at our disposal, this proves smoothness for
$x\neq y;$ for $x=y$, one can use a similar decomposition, on the
event that $t<\tau_{2},$ we can use that the heat kerned in the disc
is smooth.
\end{proof}
In order to prove Lemma \ref{prop: FCLT}, we first invoke the functional
CLT in the plane:
\begin{lem}
(Coupling the random walk to the Brownian motion.) \label{prop:KMT}
It is possible to multiply the weights $w_{xy}$ by a common factor
so that for any $\eta>0$, there exist $C,\epsilon>0$ such that for
any $\delta,T>0$ the random walk $\gamma_{\delta^{-2}t}^{\delta}$
on $\C^{\delta}$ can be coupled to the Brownian motion $\gamma_{t}$
so that 
\begin{equation}
\mathbb{P}\left[\sup\limits _{t\in[0,T]}|\gamma_{\delta^{-2}t}^{\delta}-\gamma_{t}|>T^{\frac{1}{4}+\eta}\delta^{\frac{1}{2}-2\eta}\right]<C\exp(-(T\delta^{-2})^{\epsilon})\label{eq:PropKMT}
\end{equation}
\end{lem}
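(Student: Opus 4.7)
\medskip

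The plan is to reduce to a strong approximation theorem for sums of i.i.d. $\R^2$-valued random vectors via a renewal (regeneration) structure, apply a multidimensional KMT-type result, and then bridge from discrete steps to continuous time and from the regeneration skeleton to the full trajectory.

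First I would set up a renewal decomposition. Fix a reference vertex $x_0$ in a fundamental domain of $\C^{\delta_0}$ and, on $\C^\delta$, let $\tau_1<\tau_2<\dots$ be the successive times at which the rescaled walk hits some lattice translate of $x_0/N$. Because the walk is Markov with bounded one-step displacements and the fundamental domain is finite, the displacements $\xi_i:=\gamma^\delta_{\tau_i}-\gamma^\delta_{\tau_{i-1}}$ are i.i.d., and the inter-regeneration times $\tau_i-\tau_{i-1}$ have uniformly exponential moments in $\delta$ after the Brownian rescaling $t\mapsto \delta^{-2}t$. By the reflection and rotational symmetries of $\C^{\delta_0}$, each $\xi_i$ has mean zero and scalar covariance matrix $\sigma^2 I$; multiplying all weights $w_{xy}$ by the common factor $\sigma^{-2}$ normalizes $\sigma^2$ to $1$, which is the rescaling claimed in the lemma. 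Exponential moments of $|\xi_i|$ follow from the bounded jump range of $\gamma^\delta$ together with exponential tails of the number of steps between regenerations.

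Next I would invoke a multidimensional strong approximation theorem for the i.i.d. sums $S_k=\xi_1+\dots+\xi_k$. Using Einmahl--Zaitsev type results (valid for $\R^2$-valued i.i.d. vectors with finite exponential moments, zero mean, and identity covariance), one can construct, on a common probability space, a standard planar Brownian motion $\widetilde W$ such that, for some $c,c'>0$ independent of $n$,
\[
\P\!\left[\max_{k\le n}\bigl|S_k-\widetilde W_k\bigr|>n^{1/4+\eta/2}\right]\le c\exp(-n^{c'}).
\]
Combined with the law of large numbers $\tau_k/k\to\mu$ (with exponential concentration), and with a standard 1D KMT-style coupling of the renewal counting process to its deterministic mean, I can parametrize $\widetilde W$ by continuous time and pass from the skeleton $\{\gamma^\delta_{\tau_k}\}_{k}$ to a continuous-time Brownian motion $\gamma_t$ with the same errors up to logarithmic factors, absorbed in the $\eta$.

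Finally I would control the fluctuations of $\gamma^\delta$ between successive regeneration times. Because these intervals have exponential tails (after the $\delta^{-2}$ time change they are of length $O(\delta^2)$ in rescaled time), a standard maximal inequality, together with a union bound over the $O(T\delta^{-2})$ regenerations occurring before time $T$, bounds the in-between fluctuations by $\delta^{1/2-2\eta}T^\eta$ except on an event of probability $\le c\exp(-\delta^{-c'})$. Taking $n\asymp T\delta^{-2}$ and combining with the previous step yields the displayed bound $T^{1/4+\eta}\delta^{1/2-2\eta}$ with the required Gaussian-type tail. The main obstacle is the multidimensional strong approximation itself: in $d=2$ one does not have the $\log n$ rate of the one-dimensional KMT, and the $n^{1/4+\eta}$ power is essentially optimal and is exactly what produces the $\delta^{1/2-2\eta}$ in the bound; the rest of the work is bookkeeping of exponential moments and of the regenerative structure.
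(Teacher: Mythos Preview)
Your approach is essentially the same as the paper's: both reduce to i.i.d.\ increments via a regeneration structure on the periodic lattice, invoke Einmahl's multidimensional KMT for the skeleton, and then control the time-change and the in-between fluctuations via exponential tail bounds before rescaling. The only minor difference is that the paper inserts a mandatory time gap (defining $t_{k+1}:=\min\{t\ge t_k+1:\gamma_t^{\delta_0}\cong\gamma_0^{\delta_0}\}$) to make the continuous-time regeneration times well defined and the increments genuinely i.i.d.\ with exponential tails, and handles the time discrepancy by a direct Chernoff bound on $|t_k-\tau k|$ rather than a separate 1D coupling---you should tighten your definition of $\tau_i$ accordingly, since in continuous time ``successive hitting times'' of a set containing the current state are not well posed.
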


\begin{proof}
For the random walk $\gamma_{t}^{\delta_{0}}$on $\C^{\delta_{0}}$,
we define a sequence of times $t_{0}=0$, 
\[
t_{k+1}:=\min\{t\geq t_{k}+1:\gamma_{t}^{\delta_{0}}\cong\gamma_{0}^{\delta_{0}}\},
\]
where $\cong$ means equality modulo a shift of $\C^{\delta_{0}}$.
Then $t_{k}$ and $\gamma_{t_{k}}^{\delta_{0}}$ have i.i.d. increments
with exponentially small tails; because of the symmetries of the lattice,
the increments of $\gamma_{t_{k}}^{\delta_{0}}$ have zero mean and
a scalar covariance matrix $\Sigma$. Let $\tau=\E t_{1}.$ Put $n=\left\lfloor 2\tau^{-1}\delta^{-2}T\right\rfloor .$
Einmahl's version of KMT theorem (\cite[Theorem 4]{Einmahl}, plug
in $H(t):=\exp(\sqrt{t})$, $x:=\delta_{0}n^{\frac{1}{4}}/3$) provides
a coupling of $\gamma_{t_{k}}^{\delta_{0}}$ and a Brownian motion
$\tilde{\gamma}_{t}$ with covariance matrix $\Sigma$ such that
\[
\mathbb{P}\left[\sup\limits _{k\leq n}|\gamma_{t_{k}}^{\delta_{0}}-\tilde{\gamma}_{k}|>\frac{\delta_{0}}{3}n^{\frac{1}{4}}\right]<K_{1}\cdot n\cdot\exp(-K_{2}n^{\frac{1}{4}}).
\]
with $K_{1,2}$ depending only on $\C^{\delta_{0}}.$ We put $\gamma_{t}=\tilde{\gamma}_{t/\tau}.$
For $t>0$ we set $k(t):=\max\{k:t_{k}<t\}$, then $t_{k(t)}<t\leq t_{k(t)+1}$.
We estimate 
\begin{equation}
|\gamma_{t}^{\delta_{0}}-\gamma_{t}|\leq|\gamma_{t}^{\delta_{0}}-\gamma_{t_{k(t)}}^{\delta_{0}}|+|\gamma_{t_{k(t)}}^{\delta_{0}}-\gamma_{\tau k(t)}|+|\gamma_{\tau k(t)}-\gamma_{t}|.\label{eq: three_events}
\end{equation}
By Chernoff bound, given $\eta>0$, we have $\P(|t_{k}-\tau k|>n^{\frac{1}{2}+\eta})\le C\exp(-cn^{2\eta})$
for each $k\leq n$. Therefore, $\P(\exists k\leq n:|t_{k}-\tau k|\geq n^{\frac{1}{2}+\eta})\leq Cn\exp(-cn^{2\eta}).$
In particular, $\P(\exists t\leq T\delta^{-2}:k(t)>n)\leq Cn\exp(-cn^{2\eta}).$
Also, $\P(\exists k\leq n:t_{k+1}-t_{k}\geq n^{\frac{1}{2}+\eta})\leq C\exp(-cn^{\eta}).$
Together, this implies that $\P(\exists t\leq T\delta^{-2}:|t-\tau k(t)|\geq2n^{\frac{1}{2}+\eta})\leq Cn\exp(-cn^{\eta}).$
For the Brownian motion $\gamma_{t}$, for each fixed $k\leq n$,
we have $\P(\exists t:|t-\tau k|\leq2n^{\frac{1}{2}+\eta};|\gamma_{t}-\gamma_{\tau k}|\geq\frac{\delta_{0}}{3}n^{\frac{1}{4}+\eta})\leq C\cdot\exp(-cn^{\eta}).$
Summing over $k,$ we conclude that $\P(\exists t\leq\delta^{-2}T:|\gamma_{t}-\gamma_{\tau k(t)}|\geq\frac{\delta_{0}}{3}n^{\frac{1}{4}+\eta})\leq C\cdot n\cdot\exp(-cn^{\eta}).$
Also, because of exponential tails of $t_{k}-t_{k-1},$ we have $\P(\exists t\leq\delta^{-2}T:|\gamma_{t}^{\delta_{0}}-\gamma_{t_{k(t)}}^{\delta_{0}}|\geq\frac{\delta_{0}}{3}n^{\frac{1}{4}})\leq C\cdot n\cdot\exp(-c(n^{1/4}+n^{2\eta})).$
Combining the estimates of the three terms in (\ref{eq: three_events})
together, we see that 
\[
\P\left[\sup\limits _{t\in[0,T\delta^{-2}]}|\gamma_{t}^{\delta_{0}}-\gamma_{t}|>\delta_{0}n^{\frac{1}{4}+\eta}\right]\leq\hat{C}\exp(-n^{\epsilon}),
\]
 for any $\epsilon<\min(\eta;\frac{1}{4}).$ Scaling time by $\delta^{-2}$,
the lattice by $N$, and the weights $w_{xy}$ so that $N^{-1}\cdot\gamma_{\delta^{-2}t}\sim\delta_{0}^{-1}\cdot\gamma_{t}$
is a standard Brownian motion, yields the result.
\end{proof}
\begin{proof}[Proof of Lemma \ref{prop: FCLT}]
 Let $\hat{\Omega}$ (respectively, $\hat{\Omega}^{\delta}$) be
two copies of $\Omega$ glued along the boundary. The random walk
$\hat{\gamma}_{[\delta^{-2}t]}^{\delta}$ in $\hat{\Omega}^{\delta}$
(respectively, the Brownian motion $\hat{\gamma}_{t}$ in $\hat{\Omega}$)
can be coupled to a random walk $\check{\gamma}_{[\delta^{-2}t]}^{\delta}$
in $\C^{\delta}$ (respectively, to planar Brownian motion $\check{\gamma}_{t}$)
by moving in the same way locally; note that the BM in $\hat{\Omega}$
never visits conical singularities. By (\ref{eq:PropKMT}), $\check{\gamma}_{[\delta^{-2}t]}^{\delta}$
can be coupled to $\check{\gamma}_{t}$ in such a way that $\sup_{t\leq T}|\check{\gamma}_{\delta^{-2}t}^{\delta}-\check{\gamma}_{t}|\to0$
as $\delta\to0$ almost surely. On the event of probability $1$ that
$\hat{\gamma}_{[0,T]}$ does not visit conical singularities, this
implies $\dist(\hat{\gamma}_{\delta^{-2}t}^{\delta},\hat{\gamma}_{t})\leq|\check{\gamma}_{\delta^{-2}t}^{\delta}-\check{\gamma}_{t}|$
for all $t\leq T$ eventually. Reflecting the random walk and the
Brownian motion at the Neumann boundary does not increase distances.
If $\tau$ (resp. $\tau^{\delta}$) is the first time $\hat{\gamma}_{t}$
(resp. $\hat{\gamma}_{\delta^{-2}t}^{\delta}$) hits $\partial_{D}\Omega$
(resp. $\partial_{D}\Od$), then, almost surely, $\hat{\gamma}_{t}$
will have points on both sides of the boundary in each interval $(\tau,\tau+\eps)$.
On that event, almost surely, $\tau^{\delta}\to\tau$ and hence $\hat{\gamma}_{\delta^{-2}\tau^{\delta}}^{\delta}\to\hat{\gamma}_{\tau}$.
Therefore, stopping at Dirichlet boundary also does not affect the
convergence, and $\sup_{t\leq T}\dist(\gamma_{[\delta^{-2}t]}^{\delta};\gamma_{t})\to0$,
almost surely. This completes the proof.
\end{proof}
\begin{proof}[Proof of Lemma \ref{lem: Short-time-large-diameter}.]
 It suffices to prove the bound for the walk on $\C^{\delta}$ (by
passing first to $\hat{\Omega}^{\delta}$ as in the proof of Lemma
\ref{prop: FCLT} and then assuming by Markov property that $x$ is
at distance at least $\eps/10$ from conical singularities). If $t>\delta^{\frac{3}{2}},$
we use Lemma \ref{prop:KMT} and bound $\P(\diam\gamma_{[0,t\delta^{-2}]}^{\delta}>\eps)\leq\P(\diam\gamma_{[0,t]}>\frac{\eps}{3})+\P(\sup_{s\in[0,t]}|\gamma_{s\delta^{-2}}^{\delta}-\gamma_{s}|>\frac{\eps}{3}).$
The first term converges to $0$ super-polynomially in $t$ while
the second one bounded from above by $C\cdot\exp(-(t\delta^{-2})^{\epsilon})\leq C\cdot\exp(-t{}^{-\frac{\epsilon}{3}})$
provided that $t\leq1$ and $\delta^{\frac{1}{2}-2\eta}<\frac{\eps}{3}.$

If $t\leq\delta^{\frac{3}{2}},$ then $\delta^{-2}t\le t^{\frac{1}{3}}\delta^{-1}.$
Pick $\hat{c}>0$ in such a way that for all $\delta$ small enough,
the random walk on $\C^{\delta}$ needs at least $K:=\left\lfloor \hat{c}\delta^{-1}\right\rfloor $
steps to reach diameter $\eps$. The probability of this is bounded
by $\P(X\geq K),$ where $X$ is a Poisson random variable with mean
$MK$, and $M=t^{\frac{1}{3}}\delta^{-1}\max_{x\in\C^{\delta_{0}}}\{\sum_{y\sim x}w_{xy}\}/K\leq c't^{\frac{1}{3}},$
with $c'$ a constant depending on $\C^{\delta_{0}}$ and $\eps.$
If $M<\frac{1}{2},$ then, for any $\alpha>0$, we have by Stirling
bound 
\[
\P(X\geq K)\leq\sum_{n=K}^{\infty}\frac{(MK)^{n}}{n!}e^{-MK}\leq2\frac{(MK)^{K}}{K!}e^{-MK}\leq2M^{\alpha K}e^{K(1-M+(1-\alpha)\log M)}.
\]
Since $1-M+\log M<0$ for $M<1,$ we can pick $\alpha>0$ such that
the exponential is bounded by $1$ for all $M<\frac{1}{2},$ i.e.,
$\P(X\geq K)\leq2(c't^{\frac{1}{3}})^{\alpha\left\lfloor \hat{c}\delta^{-1}\right\rfloor }.$
For $\delta$ small enough, the exponent is at least $20$, and so
we have $\P(\diam\gamma_{[0,t\delta^{-2}]}^{\delta}>\eps)\leq\P(X\geq K)\leq2t^{\frac{20}{6}}$
provided that $t<(c')^{-6}.$
\end{proof}
For $x\in\Od$ and $r>10N^{-1}>0,$ denote $Q(x,r):=[0;r^{2}]\times B(x,r).$
The \emph{parabolic Harnack inequality} (PHI) asserts that there exists
a constant $C_{H}$ such that, for any $\delta,$ any $x,r$ such
that $B(x,r)\cap\partial_{D}\Od=\emptyset$, and any $u$ positive
and satisfying 
\begin{equation}
\partial_{t}u=\delta^{2}\cdot\Delta^{\Od}u\label{eq: Parabolic}
\end{equation}
 in $Q(x,r)$, one has 
\begin{equation}
\inf_{Q_{+}(x,r)}u\geq C_{H}\sup_{Q_{-}(x,r)}u,\label{eq: PHI}
\end{equation}
 where $Q_{-}(x,r)=[\frac{1}{4}r^{2};\frac{1}{2}r^{2}]\times B(x,\frac{r}{2})$
and $Q_{+}(x,r)=[\frac{3}{4}r^{2};r^{2}]\times B(x,\frac{r}{2}).$
In our setting, PHI follows from \cite[Theorem 1.7]{Delmotte}. Delmotte
uses normalized Laplcian in which the random walk jumps at rate one;
however since he allows for jumps from a vertex to itself, the two
setups are equivalent; note that we rescale the graph distance but
don't rescale time, hence the additional factor of $\delta^{2}$ in
(\ref{eq: Parabolic}). 

Of the three conditions of \cite[Theorem 1.7]{Delmotte}, the volume
doubling condition $DV(C_{1})$ and uniform ellipticity conditions
$\Delta(c)$ are obvious in our setting: they state that $|B(x,r)|\leq C_{1}|B(x,2r)|$
for any $x,r$, and $\min_{y\sim x}w_{yx}\geq cw_{x}$ for any $x$,
respectively. This third one, the Poincaré inequality $P(C_{2})$,
asserts that 
\[
\sum_{B(x_{0},r)}w_{x}(f(x)-\overline{f}_{B})^{2}\leq C_{2}r^{2}\sum_{x\sim y\in B(x_{0},2r)}w_{xy}(f(x)-f(y))^{2},\quad\overline{f}_{B}=\frac{1}{\sum_{x\in B(x_{0},r)}w_{x}}\sum_{B(x_{0},r)}w_{x}f(x)
\]
Recall (see e.g. \cite{EvansGariepi}) the classical proof of the
Poincaré inequality for a ball $B$: by Cauchy-Schwarz, $(f(x)-f(y))^{2}\leq\left(\int_{[xy]}|\nabla f(z)|\,dz\right)^{2}\leq|x-y|\int_{[xy]}|\nabla f(z)|^{2}\,dz,$
from which the Poincaré inequality follows by integrating over $x,y\in B$.
This proof extends to the discrete settings of balls in $\C^{\delta},$
simply by replacing integration with summation, in particular, integration
over a segment $[xy]$ with summation over $H\delta$-neighborhood
of $[xy]$ for a large enough fixed $H.$ In the final step, we sum
with weights $w_{x}w_{y};$ note that $w_{xy}$ in the right-hand
side can be ignored since they are uniformly bounded away from $0.$
For balls in an infinite cone or an infinite wedge, it suffices to
map the cone or the wedge by a bi-Lipschitz map to the plane or the
half-plane and apply the same proof, increasing $H$ if necessary.
Since we only apply PHI to balls in model surfaces, these cases are
all we need. A typical example of a function $u$ that PHI is applied
to is $P^{\Od}(x_{0},x,\delta^{-2}t).$ 

We recall the standard argument that PHI implies Hölder regularity
of solutions to (\ref{eq: Parabolic}), namely, there exist $\theta>0$
and $\Chol>0$ such that any $\delta>0$ and for $Q(x,r)$ as above,
one has 
\begin{equation}
|u(r^{2},x)-u(r^{2},y)|\leq\Chol\cdot\left(\frac{|x-y|}{r}\right)^{\theta}\cdot\osc{Q(x,r)}u,\quad y\in B(x,r).\label{eq: Holder}
\end{equation}
To prove (\ref{eq: Holder}), note that if $\hat{u}$ is $u$ normalized
so that $\inf_{Q(x,r)}\hat{u}=0$ and $\sup_{Q(x,r)}\hat{u}=1$, and
$\sup_{Q_{-}(x,r)}\hat{u}\geq\frac{1}{2},$ then 
\[
\osc{Q_{+}(x,r)}\hat{u}=\sup_{Q_{+}(x,r)}\hat{u}-\inf_{Q_{+}(x,r)}\hat{u}\leq1-\frac{C_{H}}{2}.
\]
 If $\sup_{Q_{-}(x,r)}\hat{u}\leq\frac{1}{2},$ then passing to $1-\hat{u}$
leads to the same conclusion. Hence, $\osc{Q_{+}(x,r)}u\leq c\cdot\osc{Q(x,r)}u,$
with $c=1-\frac{C_{H}}{2}<1.$ Applying the same reasoning to $Q_{+}(x,r)$
and iterating, we conclude that if $|y-x|<\frac{r}{2^{k}}$, then
$|u(r^{2},x)-u(r^{2},y)|<c^{k}\osc{Q(x,r)}u$, yielding (\ref{eq: Holder}).
\begin{proof}[Proof of Lemma \ref{prop: HK_bound}]
Since turning Dirichlet boundary into Neumann one only increases
$P^{\Od},$ by passing to $\hat{\Omega}^{\delta}$ as in the proof
of Lemma \ref{prop: FCLT}, we may assume that $\partial\Od=\emptyset$.
Let $M=M_{\eps,\eta}^{\delta}:=\sup_{t\geq\eps,\dist(x,y)\geq\eta}P^{\Od}(x,y,\delta^{-2}t),$
and fix $t_{0}\geq\eps$ and $x_{0},y_{0}$ with $\dist(x_{0},y_{0})\geq\eta$
such that $P^{\Od}(x_{0},y_{0},\delta^{-2}t_{0})>\frac{M}{2}.$ Applying
PHI to $P^{\Od}(x,\cdot,\cdot)$ and to $Q(y_{0},\sqrt{\eps})$ shifted
in time by $t_{0}-\frac{\eps}{4}$, we find that $P^{\Od}(x_{0},y,\delta^{-2}t)\geq C_{H}\frac{M}{2}$
if $\dist(y,y_{0})\leq\sqrt{\eps}/2$ and $t\in[t_{0}+\frac{\eps}{2},t_{0}+\frac{3\eps}{4}].$
Since $\sum_{y}P^{\Od}(x_{0},y,\delta^{-2}t)=1,$ this implies, for
$\eta=0,$ 
\[
C_{H}\frac{M_{\eps,0}^{\delta}}{2}\cdot\#\{y:|y-y_{0}|<\frac{\sqrt{\eps}}{2}\}\leq1,
\]
 that is, $M_{\eps,0}^{\delta}\leq C(\eps)\delta^{2}.$ If $\eta\neq0$
is fixed and $\eps<4\eta^{2}$, then $\dist(y,y_{0})<\frac{\sqrt{\eps}}{2}$
implies $\dist(y,x_{0})\geq\eta/2$. By Lemma \ref{lem: Short-time-large-diameter},
we have $\sum_{y:\dist(x_{0},y)>\eta/2}P^{\Od}(x_{0},y,\delta^{-2}t)\leq C(\eta)t^{3}$,
thus
\[
C_{H}\frac{M_{\eps,\eta}^{\delta}}{2}\cdot\#\{y:|y-y_{0}|<\frac{\sqrt{\eps}}{2}\}\leq C(\eta)t^{3},
\]
i.e., $M_{\eps,\eta}^{\delta}\leq C'(\eta)\frac{\delta^{2}}{\eps}t^{3}$.
Note that if $M_{\eps,\eta}^{\delta}>M_{2\eps,\eta}^{\delta},$ then
we could take $t_{0}\leq2\eps$ and thus $t\leq\frac{11}{4}\eps$,
in which case the last inequality becomes $M_{\eps,\eta}^{\delta}\leq C''(\eta)\delta^{2}\eps^{2}.$
Hence, $M_{0,\eta}^{\delta}=M_{\eps_{0}(\eta),\eta}^{\delta}\leq M_{\eps_{0}(\eta),0}^{\delta}$
for some $\eps_{0}(\eta)>0$. 
\end{proof}
\begin{proof}[Proof of Lemma \ref{prop: Holder-regularity}.]
 Since $||P^{\Od,\varphi}(x,y,\tau)||\leq P^{\Od}(x,y,\tau)$, we
have, by Lemma \ref{prop: HK_bound}, 
\[
\osc Q(P^{\Od,\varphi}(x,\cdot,\cdot))\leq C\delta^{2},
\]
 where $Q:=[t-\frac{\eta^{2}}{4},t]\times B(x,\frac{\eta}{2}).$ Since
the matrix components of $P^{\Omega,\varphi}(x,\cdot,\delta^{-2}t)$
satisfy (\ref{eq: Parabolic}), the result now follows directly from
(\ref{eq: Holder}).
\end{proof}
\begin{proof}[Proof of Lemma \ref{prop: (Spectral-gap)}.]
 The proof proceeds case by case.

\emph{Case 1.} Suppose that $\partial_{\Dir}\Omega\neq\emptyset$,
thus $k=0.$ Since $||P^{\Od,\varphi}(x,y,t)||\leq|P^{\Od}(x,y,t)|,$
it suffices to prove the result for the trivial line bundle. The probability
that by time $1$, the Brownian motion $\gamma_{t}$ started at $x$
has hit the Dirichlet boundary is a positive continuous function on
$\Omega,$ hence it is bounded from below, say by $2\eta.$ Hence,
for $\delta$ small enough, the probability that the random walk $\gamma_{\delta^{-2}t}^{\delta}$
hits the Dirichlet boundary before $t=1$ is bounded below by $\eta$,
independently of the starting point. By Markov property, this implies
that the probability that it does not hit $\partial_{\Dir}\Od$ by
time $t$ is bounded above by $(1-\eta)^{\left\lfloor t\right\rfloor }$,
i.e., 
\[
\sum_{y\in\Od}P^{\Od}(x,y,\delta^{-2}t)\leq(1-\eta)^{\left\lfloor t\right\rfloor }.
\]
Using PHI as in the proof of Lemma \ref{prop: HK_bound}, we see that
this implies $P^{\Od}(x,y,\delta^{-2}t)<C\delta^{2}(1-\eta)^{\left\lfloor t\right\rfloor }$
for any $x,y$. Summing this bound over $x=y\in\Od$ yields the desired
result.

\emph{Case 2.} Suppose that $\Omega$ has no Dirichlet boundary and
$\varphi$ is the trivial line bundle, thus $k=1$. We claim that
the heat kernel $P^{\Od}(x,y,\delta^{-2}t)$ at time $t=1$ is uniformly
contracting in the total variation distance, i.e., there exists $\eta>0$
such that for any $x_{1},x_{2}$, and any $\delta$ small enough,
one has
\begin{equation}
\frac{1}{2}\sum_{y}\left|P^{\Omega^{\delta}}(x_{1},y,\delta^{-2})-P^{\Omega^{\delta}}(x_{2},y,\delta^{-2})\right|\leq1-\eta.\label{eq:mixtv}
\end{equation}
Indeed, for any test function $\psi>0$, by compactness, we have for
the \emph{continuous} heat kernel, 
\[
\inf_{x}\int_{\Omega}P^{\Omega}\left(x,y,\frac{1}{2}\right)\psi(y)dy=:2c_{\psi}>0.
\]
since the expression under infimum is positive and continuous in $x$.
Hence, 
\[
\inf_{x}\sum_{y\in\Od}P^{\Od}\left(x,y,\delta^{-2}/2\right)\psi(y)>c_{\psi}.
\]
 for $\delta$ small enough. If $\supp\psi\subset B(y_{0},r)$, say
with with $r=1$, this implies that 
\[
\inf_{x}\sup_{y\in B(y_{0},r)}P^{\Od}(x,y,\delta^{-2}/2)\geq c'_{\psi}\delta^{2},
\]
and hence, by PHI, $\inf_{x}\inf_{y\in B(y_{0},r)}P^{\Od}(x,y,\delta^{-2})\geq C_{H}c'_{\psi}\delta^{2}.$
This gives the desired improvement on the trivial bound of $1$ on
the LHS of (\ref{eq:mixtv}). Now, iterating (\ref{eq:mixtv}), we
see that $\frac{1}{2}\sum_{y}\left|P^{\Omega^{\delta}}(x_{1},y,\delta^{-2}t)-P^{\Omega^{\delta}}(x_{2},y,\delta^{-2}t)\right|\leq(1-\eta)^{\left\lfloor t\right\rfloor }$,
hence 
\begin{multline*}
\sum_{x}P^{\Omega^{\delta}}(x,x,\delta^{-2}t)-k=\sum_{x}P^{\Omega^{\delta}}(x,x,\delta^{-2}t)-\frac{1}{|\Omega^{\delta}|}\left(\sum_{x,y}P^{\Omega^{\delta}}(y,x,\delta^{-2}t)\right)\\
=\frac{1}{|\Omega^{\delta}|}\sum_{x,y}\left(P^{\Omega^{\delta}}(x,x,\delta^{-2}t)-%\sum_{y}
P^{\Omega^{\delta}}(y,x,\delta^{-2}t)\right)\leq2\cdot(1-\eta)^{\left\lfloor t\right\rfloor }
\end{multline*}
for all $\delta$ small enough, independently of $t$, as required. 

\emph{Case 3.} Suppose that $\Omega$ has no Dirichlet boundary and
$k=0$. Pick any $x_{0}\in\Od$, and (non-contractible) loops $\beta_{(1)},\dots\beta_{(n)}$
rooted at $x_{0}$ such that $\{\varphi(\gamma_{j})\}_{j=0}^{n}$
do not have a common eigenvector of eigenvalue 1; if such loops did
not exist, then the translations of the common eigenvector would form
a covariant constant. Pick a a small $r$ such that $B(x_{0},r)$
is contractible, and, for $x,y\in B(x_{0},r),$ denote, for $i=0,1,\dots,n$,
$P_{(i)}^{\Od}(x,y,t):=\P^{x}([\gamma_{[0,t]}^{\delta}]=[\beta_{(i)}],\gamma_{t}^{\delta}=y)$,
where $\beta_{(0)}$ is a contractible loop, and we identify the points
of $B(x_{0},r)$ in order to compute the homotopy type of a non-closed
path. There exists a constant $c>0$ such that $c\delta^{2}\leq P_{(i)}^{\Od}(x,y,\delta^{-2})\leq c^{-1}\delta^{2}$
for each $i$ and each $x,y\in B(x_{0},r)$ and all $\delta$ small
enough; the upper bound follows from Lemma \ref{prop: HK_bound} and
the lower one is done exactly as in Case 2. We can write 
\begin{multline}
\|\Pp^{\Omega^{\delta},\varphi}(x,y,\delta^{-2}t)\|=\left\Vert \E^{x}\varphi(\gamma_{[0,\delta^{-2}t]}^{\delta})\ind_{\ind_{\gamma_{t}^{\delta}=y}}\right\Vert \\
\leq\left\Vert \sum_{i=0}^{n}\varphi(\beta_{(i)})\cdot P_{(i)}^{\Od}(x,y,\delta^{-2})\right\Vert +\P^{x}(\forall i,[\gamma_{[0,t]}^{\delta}]\neq[\beta_{(i)}],\gamma_{t}^{\delta}=y)\label{eq: norm_bound}
\end{multline}
where in the last term, we used that $||\varphi(\gamma_{[0,\delta^{-2}t]}^{\delta})||\leq1$
as $\varphi$ is unitary. We claim that 
\begin{equation}
\sup_{c\leq p_{i}\leq c^{-1},\left\Vert v\right\Vert =1}\frac{\left\Vert \sum_{i=0}^{n}p_{i}\varphi(\beta_{(i)})v\right\Vert }{\left(\sum_{i=0}^{n}p_{i}\right)}=:1-\eta<1.\label{eq: sup_bound}
\end{equation}
Indeed, the fraction is strictly smaller than $1$ unless all $\varphi(\beta_{(i)})v$
are non-negative multiples of each other, but since $\varphi(\beta_{0})=\mathrm{Id},$
this would mean that $v$ is a common eigenvector of eigenvalue $1$.
Applying (\ref{eq: sup_bound}) to (\ref{eq: norm_bound}), we get,
for any $x\in B(x_{0},r)$ and all $\delta$ small enough, 
\[
\sum_{y\in\Od}\left\Vert \Pp^{\Omega^{\delta},\varphi}(x,y,\delta^{-2})\right\Vert \leq(1-\eta)\P(E)+P(E^{c})\leq1-\eta'
\]
where $E=\{\gamma_{\delta^{-2}}^{\delta}\in B(x_{0},r)\text{ and }\exists i:[\gamma_{[0,\delta^{-2}]}^{\delta}]=[\beta_{(i)}]\},$
and we have used that $\P(E)\geq\sum_{y\in B(x_{0},r)}P_{(0)}^{\Od}(x,y,\delta^{-2})$
is uniformly bounded from below. Since the bounds obtained, in fact,
did not depend on $x_{0},$ the proof is now completed as in Case
1.

\emph{Case 4.} Suppose that $\Omega$ has no Dirichlet boundary. Let
$\varphi_{0}$ be the trivial sub-bundle of the maximal dimension
of $\varphi$, which is a direct sum of trivial line bundles. Since
$\varphi$ is unitary, we have $\varphi=\varphi_{0}\oplus\varphi_{0}^{\perp}$
for $\varphi_{0}^{\perp}$ the (point-wise) orthogonal complement
to $\varphi_{0}$; moreover, $\varphi_{0}^{\perp}$ has no trivial
line sub-bundles. Applying Case 2 to $\varphi_{0}$ and Case 3 to
$\varphi_{1}$ concludes the proof.
\end{proof}
\begin{proof}[Proof of Lemma \ref{lem: LCLT}]
 It suffices to consider the case when $\Lambda$ is a cone (or,
in particular, a plane). Indeed, $\P^{\H^{\delta}}(x,y,t)=\P^{\C^{\delta}}(x,y,t)\pm\P^{\C^{\delta}}(x,\bar{y},t),$
with $\bar{y}$ denoting the reflection with respect to the boundary
and the sign being $+$ for Neumann boundary condition and $-$ for
Dirichlet one. Similarly, for a corner $\Corner^{\alpha}$ with Dirichlet
or Neumann boundary conditions, one has $\P^{\Corner^{\alpha,\delta}}(x,y,t)=\P^{\Cone{2\alpha,\delta}}(x,y,t)\pm\P^{\Cone{2\alpha,\delta}}(x,\bar{y},t),$
where the cone $\Cone{2\alpha,\delta}$ is obtained by gluing two
copies of $\Corner^{\alpha,\delta}$ along the boundary. Finally,
$\P^{\Corner_{\Dir\Neu}^{\alpha,\delta}}(x,y,t)=\P^{\Corner_{\Dir}^{2\alpha,\delta}}(x,y,t)+\P^{\Corner_{\Dir}^{2\alpha,\delta}}(x,\bar{y},t),$
where $\Corner_{\Dir}^{2\alpha,\delta}$ is obtained by gluing two
copies of $\Corner_{\Dir\Neu}^{\alpha,\delta}$ along the Neumann
boundary. 

The proof below is for the case when there is no vertex at the tip
of the cone; we explain the necessary modifications for that case
in Remark \ref{rem: vertex_at_tip}. We begin by comparing the probabilities
$\P(\gamma_{\delta^{-2}t}^{\delta}\in B(y,\eps))$ and $\P(\gamma_{t}\in B(y,\eps))$
(hereinafter $\P=\P^{x}$) for a mesoscopic scale $\delta\ll\eps(\delta)\ll1$
to be specified later and $t\leq T:=\delta^{-\frac{1}{10}}$; the
goal is to show that they agree up to $O(\max\{t^{-1},1\}\eps^{2}\delta^{\rho}$)
for some $\rho>0$. We couple $\gamma_{\delta^{-2}t}^{\delta}$ and
$\gamma_{t}$ to the planar random walk and the Brownian motion $\hat{\gamma}_{[\delta^{-2}t]}^{\delta},\hat{\gamma}_{t}$
by the same local moves, and assume that $\hat{\gamma}_{\delta^{-2}t}^{\delta},\hat{\gamma}_{t}$
are coupled as in Lemma \ref{eq:PropKMT}, say, with $\eta=\frac{1}{10}.$
Pick a positive $\nu<\frac{1}{2}-2\eta-\frac{1}{10}(\frac{1}{4}+\eta)$,
so that $T^{\frac{1}{4}+\eta}\delta^{\frac{1}{2}-\eta}\ll\delta^{\nu}.$
We put $\mathcal{D}=\{\sup_{t\in[0,T]}|\hat{\gamma}_{\delta^{-2}t}^{\delta}-\hat{\gamma}_{t}|>\delta^{\nu}\}$,
$\mathcal{T}=\{\inf_{t\in(0,T)}(\dist(\gamma_{t},0))<2\delta^{\nu}\},$
where $0$ is the tip of the cone, and $\mathcal{B}:=\{\eps(\delta)-\delta^{\nu}\leq\dist(\gamma_{t};y)\leq\eps(\delta)+\delta^{\nu}\}.$
On the event $\mathcal{D}^{c}\cap\mathcal{T}^{c}$, we have $\sup_{t\in(0,T)}|\gamma_{\delta^{-2}t}^{\delta}-\gamma_{t}|\leq\delta^{\nu}$,
therefore, on $\mathcal{D}^{c}\cap\mathcal{T}^{c}\cap\mathcal{B}^{c}$,
either $\gamma_{\delta^{-2}t}^{\delta},\gamma_{t}\in B(y,\eps)$ or
$\gamma_{\delta^{-2}t}^{\delta},\gamma_{t}\notin B(y,\eps)$ simultaneously.
This implies 
\begin{multline}
\left|\P(\gamma_{\delta^{-2}t}^{\delta}\in B(y,\eps))-\P(\gamma_{t}\in B(y,\eps))\right|\leq\P(\mathcal{D})+\P(\mathcal{B})\\
+\left|\P(\gamma_{\delta^{-2}t}^{\delta}\in B(y,\eps),\mathcal{D}^{c},\mathcal{T})-\P(\gamma_{t}\in B(y,\eps),\mathcal{D}^{c},\mathcal{T})\right|.\label{eq: LCLT_meso_bound}
\end{multline}
We have $\P(\mathcal{D})\leq C\cdot\delta^{10}$ and $\P(\mathcal{B})\leq C\eps\delta^{\nu}\cdot t^{-1}$
provided that $\eps\gg\delta^{\nu},$ thus it remains to estimate
the last term. Let $\sigma$ denote the rotation of the cone around
its tip by $\pi/3$ (we assume here $\Omega$ is triangulated, the
other case is completely similar), and let $\hat{\sigma}$ be the
rotation, by the same angle, of the plane obtained as a quotient the
universal cover of the cone punctured at its tip; thus $\hat{\sigma}^{6}=\Id$
and $\sigma^{\frac{3\alpha}{\pi}}=\Id.$ We have $\gamma_{\delta^{-2}t}^{\delta}\in\cup_{k}\sigma^{k}\left(B(y,\eps)\right)$
if and only if $\hat{\gamma}_{\delta^{-2}t}^{\delta}\in\cup_{k}\hat{\sigma}^{k}\left(B(y,\eps)\right),$
and thus, as above, 
\begin{multline}
\left|\P(\gamma_{\delta^{-2}t}^{\delta}\in\cup_{k}\sigma^{k}(B(y,\eps),\mathcal{D}^{c},\mathcal{T})-\P(\gamma_{t}\in\cup_{k}\sigma^{k}(B(y,\eps)),\mathcal{D}^{c},\mathcal{T})\right|\\
\le\left|\P(\hat{\gamma}_{\delta^{-2}t}^{\delta}\in\cup_{k}\hat{\sigma}^{k}(B(y,\eps),\mathcal{D}^{c},\mathcal{T})-\P(\hat{\gamma}_{t}\in\cup_{k}\hat{\sigma}^{k}(B(y,\eps)),\mathcal{D}^{c},\mathcal{T})\right|\\
\leq\text{\ensuremath{\P}(\ensuremath{\mathcal{B}\cap\mathcal{T}})\ensuremath{\leq}}C\eps\delta^{\nu}t^{-1}\label{eq: LCLT_sym_nound}
\end{multline}
Now let $\tau:=\min\{t:\dist(\gamma_{t},0))<2\cdot\delta^{\nu}\}.$
On $\mathcal{T}$, to estimate the difference between probabilities
to arrive to $B(y,\eps)$ and to $\sigma(B(y,\eps))$, we use strong
Markov property with respect to $\tau.$ We have 
\begin{multline*}
\left|\P^{\gamma_{\delta^{-2}\tau}^{\delta}}\left[\gamma_{\delta^{-2}(t-\tau)}^{\delta}\in B(y,\eps)\right]-\P^{\gamma_{\delta^{-2}\tau}^{\delta}}\left[\gamma_{\delta^{-2}(t-\tau)}^{\delta}\in\sigma(B(y,\eps))\right]\right|\\
=\left|\P^{\gamma_{\delta^{-2}\tau}^{\delta}}\left[\gamma_{\delta^{-2}(t-\tau)}^{\delta}\in B(y,\eps)\right]-\P^{\sigma^{-1}(\gamma_{\delta^{-2}\tau}^{\delta})}\left[\gamma_{\delta^{-2}(t-\tau)}^{\delta}\in B(y,\eps)\right]\right|\\
\leq\sup_{x,z\in B(0,3\delta^{\nu}),\,t\leq T}\left|\P^{x}(\gamma_{\delta^{-2}t}^{\delta}\in B(y,\eps))-\P^{z}(\gamma_{\delta^{-2}t}^{\delta}\in B(y,\eps))\right|.
\end{multline*}
We estimate the expression in the supremum separately for $t>10\delta^{\nu}>(3\delta^{\nu})^{2}$
and for $t<10\delta^{\nu}.$ In the first case, we use Hölder continuity
(\ref{eq: Holder}) with $r=3\delta^{\frac{\nu}{2}}$, time shifted
by $t-r^{2}>0$, and bounding the oscillation by $1$; this gives
the bound of $\leq C\cdot\delta{}^{\theta\frac{\nu}{2}}.$ In the
second case, we use Lemma \ref{lem: Short-time-large-diameter} to
get the bound of $\leq C\delta^{3\nu}\ll\delta^{\theta\frac{\nu}{2}}.$
We infer that 
\[
\left|\frac{\pi}{3\alpha}\P(\gamma_{\delta^{-2}T}^{\delta}\in\cup_{k}\sigma^{k}(B(y,\eps)),\mathcal{D}^{c},\mathcal{T})-\P(\gamma_{\delta^{-2}T}^{\delta}\in B(y,\eps(\delta)),\mathcal{D}^{c},\mathcal{T})\right|\leq C\cdot\delta{}^{\frac{\theta\nu}{2}}.
\]
A similar estimate holds for the continuous heat kernel. Therefore,
we finally get
\begin{multline}
\left|\P(\gamma_{\delta^{-2}T}^{\delta}\in B(y,\eps))-\P(\gamma_{T}\in B(y,\eps))\right|\\
\leq C\delta{}^{10}+C\cdot\delta{}^{\frac{\theta\nu}{2}}+C\cdot\eps\delta^{\nu}\cdot t^{-1}\leq C\eps^{2}\delta^{\rho}\cdot\max\{t^{-1},1\}.\label{eq: mesosc_estimate}
\end{multline}
with $\rho>0$ if we choose $\eps(\delta):=\delta^{\mu}$ with $\mu$
small enough.

Now, for $\kappa>0,$ as in the proof of Lemma \ref{prop: HK_bound},
we can bound 
\[
\sup_{t\geq\frac{\delta^{2\kappa}}{2},z\in B(y,\eps)}\P(\gamma_{\delta^{-2}t}^{\delta}=z)\leq C\delta^{2-2\kappa}.
\]
 Using this to bound the oscillation in the Hölder bound (\ref{eq: Holder}),
we get, for $t\geq\delta^{2\kappa},$
\[
\left|\frac{1}{|B(y,\eps)|}\P(\gamma_{\delta^{-2}t}^{\delta}\in B(y,\eps))-\P(\gamma_{\delta^{-2}t}^{\delta}=y)\right|\leq\left(\frac{\eps}{\delta^{\kappa}}\right)^{\theta}\cdot\delta^{2-2\kappa}=\delta^{\mu\theta+2-(2+\theta)\kappa}=:\delta^{2+q},
\]
 provided that $\kappa$ is small enough (in which case also $\eps\ll\delta^{\nu}$
so that the Hölder bound applies). A similar estimate holds for the
continuous heat kernel. Combining this with (\ref{eq: mesosc_estimate})
gives the claim.
\end{proof}
\begin{rem}
\label{rem: vertex_at_tip} If there is a vertex at the tip of the
cone, then the coupling of $\gamma_{\delta^{-2}t}^{\delta}$ and $\hat{\gamma}_{\delta^{-2}t}^{\delta}$
fails after the moment the former hits the tip, in particular, the
distributions of the first time they leave the tip will be different.
However, forcing $\hat{\gamma}_{\delta^{-2}t}^{\delta}$ to leave
the tip simultaneously with $\gamma_{\delta^{-2}t}^{\delta}$ yields
a coupling of $\gamma_{\delta^{-2}t}^{\delta}$ to $\hat{\gamma}_{\delta^{-2}t+\tau}^{\delta}$
such that still $\gamma_{\delta^{-2}t}^{\delta}\in\cup_{k}\sigma^{k}\left(B(y,\eps)\right)$
if and only if $\hat{\gamma}_{\delta^{-2}t+\tau}^{\delta}\in\cup_{k}\hat{\sigma}^{k}\left(B(y,\eps)\right).$
Here $\tau$ a random variable given by a sum of $N$ i. i. d. contributions,
where $N$ is the number of visits to the tip. The expectation of
$N$, and hence that of $\tau$, is $O(\log(\delta^{-2}t)),$ and
therefore it will introduce a negligible error into the above computations.
\end{rem}

\begin{proof}[Proof of Lemma \ref{lem: HK_infinite_tail_bound}]
 As explained in in the proof of Lemma \ref{lem: LCLT}, it suffices
to consider the case of a cone. Moreover, since, in the notation of
that proof, $\gamma_{\delta^{-2}t}^{\delta}\in\cup_{k}\{\sigma^{k}(y)\}$
if and only if $\hat{\gamma}_{\delta^{-2}t}^{\delta}\in\cup_{k}\{\hat{\sigma}^{k}(y)\},$
it is in fact sufficient to consider the case of a plane, where it
is immediate from the local Central limit theorem. 
\end{proof}

\section{Appendix: the domain of the Dirichlet form $\protect\En^{\Omega,\varphi}$}

\label{sec: Appendix}For a unitary vector bundle $\varphi$, the
point-wise scalar product $\langle\cdot;\cdot\rangle$ induces the
scalar product on the set $L^{2,\varphi}(\Omega)$ be of $L^{2}$
sections of $\varphi$, given by $\int_{\Omega}\langle f(x);f(x)\rangle dx,$
and also on gradients of nice enough sections, using $\langle\nabla^{\varphi}f(x);\nabla^{\varphi}f(x)\rangle=\langle\pa_{\nu}\tilde{f}(x);\pa_{\nu}\tilde{f}(x)\rangle+\langle\pa_{\eta}\tilde{f}(x);\pa_{\eta}\tilde{f}(x)\rangle,$
where $(\nu,\eta)$ are isometric local coordinates, and we identify
$f$ with a function $\tilde{f}$ using a local trivialization of
$\varphi$. We thus define $H^{1,\varphi}(\Omega):=\{f\in L^{2,\varphi}(\Omega):\nabla^{\varphi}f\in L^{2}\},$
equipped with the scalar product $\langle f;f\rangle+\langle\nabla^{\varphi}f;\nabla^{\varphi}f\rangle.$
Put $H_{\pa}^{1,\varphi}(\Omega)=\{f\in H^{1,\varphi}(\Omega):\:f\equiv0\text{ on }\pa_{\Dir}\Omega\};$
the vanishing on the boundary can be understood in the sense of the
trace operator in the theory of Sobolev spaces, or, more elementarily,
by requiring that the extension of $f$ by $0$ across the boundary
has a finite $H^{1}$ norm. It is well known that for piece-wise smooth
boundaries, $H_{\pa}^{1,\varphi}(\Omega)$ is the closure in $H^{1,\varphi}(\Omega)$
of the set of $f\in H^{1,\varphi}(\Omega)$ supported away from $\pa_{\Dir}\Omega$
(cf. \cite[before Theorem 4.11]{sznitman1998brownian} or \cite{leoni2017first}.)
The next lemma describes explicitly the Dirichlet form $\mathcal{E}^{\Omega,\varphi}(f,f)=\lim_{t\searrow0}t^{-1}\langle(1-P_{t}^{\Omega,\varphi})f;f\rangle,$
defined in Section \ref{sec: heat_kernel_continuum}.
\begin{lem}
\label{lem: domain}We have $\mathcal{D}(\mathcal{E}^{\Omega,\varphi})=H_{\pa}^{1,\varphi}(\Omega)$
and $\En^{\Omega,\varphi}(f,f)=\langle\nabla^{\varphi}f,\nabla^{\varphi}f\rangle$
for any $f\in\mathcal{D}(\En^{\Omega,\varphi}).$
\end{lem}

In other words, the generator $\Delta^{\Omega,\varphi}$ is the \emph{Friedrichs}
\emph{extension} of the Laplacian acting on smooth sections of $\varphi$
compactly supported away from $\pa_{\Dir}\Omega$, corners, conical
singularities and punctures. Before giving the a proof, we remark
that this also follows from the general theory. By \cite[Theorem 7.2.2]{fukushima2010dirichlet},
we can \emph{define} $\gamma_{t}$ as the unique diffusion associated
to the Dirichlet form $\En_{BM}(f,f)=\langle\nabla f;\nabla f\rangle$
with the domain $\mathcal{D}(\En_{BM})=H_{\pa}^{1}(\Omega).$ Moreover,
if $\Lambda\subset\Omega$ is isometric to a domain in the plane,
then, up to hitting $\pa\Lambda\setminus\pa_{\Neu}\Omega$, the distribution
of $\gamma_{t}$ coincides with that of the Brownian motion in $\Lambda$
reflected at $\pa_{\Neu}\Omega$ \cite[Theorem 4.4.2 and Example 4.5.3]{fukushima2010dirichlet}.
Therefore, this construction of $\gamma_{t}$ coincides with the one
given above. This proves Lemma \ref{lem: domain} for trivial line
bundle; for the general case, observe that the multiplication by smooth
functions preserves both $\mathcal{D}(\mathcal{E}^{\Omega,\varphi})$
and $H_{\pa}^{1,\varphi}(\Omega),$ hence we can reduce the result
to the case of the trivial line bundle using a suitable partition
of unity.

\begin{proof}[Proof of Lemma \ref{lem: domain}.]
 Denote by $\sing$ the set of conical tips, corners, and punctures
of $\Omega.$ Let $\mathcal{C}_{\pa}^{\infty,\varphi}(\Omega)$ be
the set of all smooth sections of $\varphi$ that are supported away
from $\sing$ and $\pa_{\Dir}\Omega$, and which extend smoothly across
$\pa_{\Neu}\Omega$ by $\phi(\overline{x})=\phi(x),$ where $x\mapsto\overline{x}$
is the reflection in $\hat{\Omega}$. Using a trivialization of $\varphi,$
we can locally identify a section $\phi\in\mathcal{C}_{\pa}^{\infty,\varphi}(\Omega)$
with a function $\tilde{\phi}$; expanding the latter in a Taylor
series and using (\ref{eq: comparison1}\textendash \ref{eq: comparison2}),
it is easy to see that 
\[
D_{t}^{\Omega,\varphi}\phi\stackrel{t\to0}{\longrightarrow}\Delta^{\varphi}\phi=-\frac{1}{2}(\partial_{\eta}^{2}+\partial_{\nu}^{2})\tilde{\phi}(x_{\eta\nu})
\]
 uniformly and hence in $L^{2}(\Omega),$ where $(\eta,\nu)\mapsto x_{\eta\nu}$
is an isometric local chart. Therefore, we have $\mathcal{C}_{\pa}^{\infty,\varphi}(\Omega)\subset\mathcal{D}(\En^{\Omega,\varphi}),$
and $\En^{\Omega,\varphi}(\phi,\phi)=\langle\phi,\Delta^{\varphi}\phi\rangle=\langle\nabla^{\varphi}\phi,\nabla^{\varphi}\phi\rangle$
for all $\phi\in\mathcal{C}_{\pa}^{\infty,\varphi}(\Omega).$ Since
the form $\En^{\Omega,\varphi}$ is closed, this implies $H_{\pa}^{1,\varphi}(\Omega)\subset\mathcal{D}(\En^{\Omega,\varphi})$
and $\En^{\Omega,\varphi}(f,f)=\langle\nabla^{\varphi}f,\nabla^{\varphi}f\rangle,$
$f\in H_{\pa}^{1,\varphi}(\Omega),$ once we show that $\mathcal{C}_{\pa}^{\infty,\varphi}(\Omega)$
is dense in $H_{\pa}^{1,\varphi}(\Omega).$ 

The density is proven by a series of standard arguments. First, we
can approximate any $f\in H_{\pa}^{1,\varphi}(\Omega)$ supported
away from $\sing$ and $\pa_{\Dir}\Omega$ by $\phi\in\mathcal{C}_{\pa}^{\infty,\varphi}(\Omega),$
by extending $f$ by reflection across $\pa_{\Neu}\Omega$ and mollifying.
Second, any $f\in H_{\pa}^{1,\varphi}(\Omega)$ can be approximated
by bounded sections in $H_{\pa}^{1,\varphi}(\Omega),$ by truncation
at level lines (cf. \cite[Example 1.2.1]{fukushima2010dirichlet}):
let $g_{R}(v)=\frac{Rv}{\max\{R,|v|\}},$ $v\in\R^{d},$ then, approximating
$g_{R}$ with smooth functions and applying the chain rule, we see
that $\left\Vert f-g_{R}f\right\Vert _{H^{1}}^{2}\leq\int_{|f|>R}\langle\nabla^{\varphi}f,\nabla^{\varphi}f\rangle+\int_{|f|>R}\langle f,f\rangle\to0$
as $R\to\infty.$ Finally, to approximate a bounded $f\in H_{\pa}^{1,\varphi}(\Omega)$
by those compactly supported away from $\sing,$ let $z_{0}\in\sing$
and put $\phi_{\eps}(z)=\max\left\{ 0,\min\left\{ 1+\frac{\log|z-z_{0}|}{-\log\epsilon},1\right\} \right\} .$
Then, $\phi_{\eps}(z)\equiv0$ for $|z-z_{0}|<\eps,$ and
\[
\left\Vert f-\phi_{\eps}\cdot f\right\Vert _{H^{1}}^{2}=\left\Vert (1-\phi_{\eps})f\right\Vert _{L^{2}}^{2}+\left\Vert (1-\phi_{\eps})\nabla^{\varphi}f\right\Vert _{L^{2}}^{2}+\left\Vert f\nabla(1-\phi_{\eps})\right\Vert _{L^{2}}^{2}.
\]
Since we have $0\leq\phi_{\eps}(z)\nearrow1$ as $\eps\to0,$ the
first two terms above tend to zero as $\eps\to0$ by dominated convergence
theorem, and $\left\Vert f\nabla(1-\phi_{\eps})\right\Vert _{L^{2}}^{2}\leq\sup|f|^{2}\left\Vert \nabla(1-\phi_{\eps})\right\Vert _{L^{2}}^{2}\leq\frac{c}{-\log\eps}\to0.$

We turn to the proof of $\mathcal{D}(\En^{\Omega,\varphi})\subset H_{\pa}^{1,\varphi}(\Omega).$
For $f\in\mathcal{D}(\En^{\Omega,\varphi})$ and $\phi\in\mathcal{C}_{\pa}^{\infty,\varphi}(\Omega)$,
we have 
\[
\En^{\Omega,\varphi}(f,\phi)=\lim_{t\to0}\langle f,D_{t}^{\Omega,\varphi}\phi\rangle=\langle f,\Delta^{\varphi}\phi\rangle.
\]
On the other hand, by Cauchy-Schwarz, 
\[
\En^{\Omega,\varphi}(f,\phi)\leq\left(\En^{\Omega,\varphi}(f,f)\right)^{\frac{1}{2}}\left(\En^{\Omega,\varphi}(\phi,\phi)\right)^{\frac{1}{2}}=\left(\En^{\Omega,\varphi}(f,f)\right)^{\frac{1}{2}}\langle\nabla^{\varphi}\phi,\nabla^{\varphi}\phi\rangle^{\frac{1}{2}}.
\]
It follows that $\phi\mapsto\langle f,\Delta^{\varphi}\phi\rangle$
extends to a bounded linear functional on $H_{\pa}^{1,\varphi}(\Omega),$
and thus so does $\phi\mapsto\langle f,\Delta^{\varphi}\phi\rangle+\langle f,\phi\rangle.$
Therefore, by Riesz\textendash Markov, there exists $\tilde{f}\in H_{\pa}^{1,\varphi}(\Omega)$
such that $\langle f,\Delta^{\varphi}\phi\rangle+\langle f,\phi\rangle=\langle\nabla^{\varphi}\tilde{f,}\nabla^{\varphi}\phi\rangle+\langle\tilde{f,}\phi\rangle,$
$\phi\in\mathcal{C}_{\pa}^{\infty,\varphi}.$ If $\phi$ is in addition
compactly supported away from $\pa\Omega,$ the last expression is
equal to $\langle\tilde{f,}\Delta^{\varphi}\phi\rangle+\langle\tilde{f,}\phi\rangle.$
In other words, $h=f-\tilde{f}$ satisfies $\Delta^{\varphi}h+h\equiv0$
in the weak sense. But then, by elliptic regularity, $h$ is smooth
in the interior of $\Omega\setminus\sing$; also, since $H_{\pa}^{1,\varphi}(\Omega)\subset\mathcal{D}(\En^{\Omega,\varphi}),$
we have $h=f-\tilde{f}\in\mathcal{D}(\En^{\Omega,\varphi}).$ Thus,
it is enough to prove that if $h\in\mathcal{D}(\En^{\Omega,\varphi})$
and $h$ is smooth, then $h\in H_{\pa}^{1,\varphi}(\Omega).$

We follow \cite[Section 1.4]{sznitman1998brownian}. Write 
\[
(P_{t}^{\Omega,\varphi}f)(x)=\E^{x}\left[\varphi_{\gamma_{[0,t]}}^{-1}(f(\gamma_{t}))\right]=\E\E^{x}\left[\left.\varphi_{\gamma_{[0,t]}}^{-1}(f(\gamma_{t}))\right|\gamma_{t}\right]=\int_{\Omega}R_{x,y,t}^{\Omega,\varphi}f(y)P^{\Omega}(x,y,t)\,dy,
\]
where $R_{x,y,t}^{\Omega,\varphi}f(y)=\E^{x}\left[\left.\varphi_{\gamma_{[0,t]}}^{-1}(f(y))\right|\gamma_{t}=y\right].$
In the trivial bundle case, one simply has $R_{x,y,t}^{\Omega,\varphi}f(y)=f(y).$
Due to the reversibility of $\gamma$ and the unitarity of $\varphi$,
we have $\langle R_{x,y,t}^{\Omega,\varphi}f(y),g(x)\rangle=\langle f(y),R_{y,x,t}^{\Omega,\varphi}g(x)\rangle.$
Hence, we can write (cf. \cite[Lemma 4.8]{sznitman1998brownian})
\[
\langle D_{t}^{\Omega,\varphi}f;g\rangle=\En^{(1,t)}(f,g)+\En^{(2,t)}(f,g),
\]
where 
\begin{align*}
\En^{(1,t)}(f,g) & =\frac{1}{2t}\int_{\Omega\times\Omega}\langle f(x)-R_{x,y,t}^{\Omega,\varphi}f(y),g(x)-R_{x,y,t}^{\Omega,\varphi}g(y)\rangle P^{\Omega}(x,y,t)\,dxdy,\\
\En^{(2,t)}(f,g) & =\frac{1}{t}\int_{\Omega}(1-Q_{t}(x))f(x)g(x)\,dx.
\end{align*}
and $Q_{t}(x)=\int_{\Omega}P^{\Omega}(x,y,t)dy$ is the probability
that $\gamma_{t}$ started at $x$ did not stop by time $t$.

For a compact set $K\subset\Omega$ and $K_{\eps}$ its small neighborhood,
and $h\in\mathcal{D}(\En^{\Omega,\varphi})$ smooth, we have 
\begin{multline*}
\En^{\Omega,\varphi}(h,h)\geq\frac{1}{2t}\int_{K_{\eps}\times K}\langle h(y)-R_{x,y,t}^{\Omega,\varphi}h(x),h(y)-R_{x,y,t}^{\Omega,\varphi}h(x)\rangle P^{\Omega}(x,y,t)dxdy\\
\stackrel{t\to0}{\longrightarrow}\int_{K}\langle\nabla^{\varphi}h(x),\nabla^{\varphi}h(x)\rangle\,dx,
\end{multline*}
 as is readily seen by identifying $h$ with a function $\tilde{h}$
using a local trivialization near each $x$ and using Taylor approximation
of $\tilde{h}$ and (\ref{eq: comparison1}\textendash \ref{eq: comparison2}).
Taking a supremum over all such $K$, we conclude that $h\in H^{1,\varphi}(\Omega).$ 

Finally, we refine $\mathcal{D}(\En^{\Omega,\varphi})\subset H^{1,\varphi}(\Omega)$
to $\mathcal{D}(\En^{\Omega,\varphi})\subset H_{\pa}^{1,\varphi}(\Omega).$
We do the trivial line bundle case; the general one only differs by
heavier notation. Let $0\leq\phi\leq1$ be a smooth function compactly
supported in a Euclidean neighborhood of $x\in\pa_{\Dir}\Omega\setminus\sing$
and identically equal to $1$ in a smaller neighborhood. If $f\in\mathcal{D}(\En^{\Omega}),$
then we have $\En^{(2,t)}(\phi f,\phi f)\leq\En^{(2,t)}(f,f)$ and
\begin{multline*}
\En^{(1,t)}(\phi f,\phi f)=\frac{1}{2t}\int_{\Omega\times\Omega}(f(x)-f(y))^{2}\phi(x)^{2}P^{\Omega}(x,y,t)dxdy\\
+\int_{\Omega}\left(\int_{\Omega}\frac{1}{2t}(\phi(x)-\phi(y))^{2}P^{\Omega}(x,y,t)dy\right)f(x)^{2}dx,
\end{multline*}
and both terms remain bounded as $t\to0.$ Hence, also $\phi f\in\mathcal{D}(\En^{\Omega}).$
Identifying $\supp\phi f$ with a subdomain of the upper half-plane
$\H$ equipped with Dirichlet boundary conditions, due to (\ref{eq: comparison2}),
we have $\frac{1}{t}(P^{\Omega}(x,y,t)-P^{\H}(x,y,t))\to0$ and $\frac{1}{t}(Q_{t}^{\Omega}-Q_{t}^{\H})\to0$.
We conclude that $\phi f\in\mathcal{D}(\En^{\H}),$ where $\En^{\H}(f,f)=\lim_{t\to0}t^{-1}\langle D_{t}^{\H}f,f\rangle.$
If $g$ denotes $\phi f$ extended to the lower half-plane by $g(\bar{z})=-g(z),$
then we have $P_{t}^{\H}(\phi f)\equiv P_{t}^{\C}g$ in $\H,$ and
$\En^{\H}(\phi f,\phi f)=\frac{1}{2}\En^{\H}(g,g),$ so that $g\in\mathcal{D}(\En^{\C}).$
But it is easy to see, using Fourier transform, that $\mathcal{D}(\En^{\C})=H^{1}(\C)$
\cite[Example 4.1]{sznitman1998brownian}. If $g(\bar{z})=-g(z),$
then we can only have $g\in H^{1}(\C)$ if $g\equiv0$ a. e. on $\R.$
We conclude that $f\equiv0$ a. e. on $\pa_{\Dir}\Omega.$ 
\end{proof}
\bibliographystyle{plain}
\bibliography{Mixedcorr}

\end{document}